\documentclass[english]{article}
\usepackage[T1]{fontenc}
\usepackage[latin9]{inputenc}
\usepackage{geometry}
\geometry{verbose,tmargin=3cm,bmargin=3cm,lmargin=3cm,rmargin=3cm}
\usepackage{color}
\usepackage{babel}
\usepackage{url}
\usepackage{amsmath}
\usepackage{amsthm}
\usepackage{amssymb}
\usepackage{graphicx}
\usepackage{esint}
\usepackage[authoryear]{natbib}
\usepackage[unicode=true,pdfusetitle,
 bookmarks=true,bookmarksnumbered=false,bookmarksopen=false,
 breaklinks=false,pdfborder={0 0 0},backref=false,colorlinks=true]
 {hyperref}
\hypersetup{
 citecolor=red,linkcolor=blue}

\makeatletter
\usepackage{tikz}
\usetikzlibrary{arrows}


\newtheorem{theorem}{Theorem}
\newtheorem{lemma}{Lemma}
\newtheorem{corollary}{Corollary}
\newtheorem{proposition}{Proposition}

\newtheorem{remark}{Remark}
\newtheorem{algo}{Algorithm}

\@ifundefined{showcaptionsetup}{}{%
 \PassOptionsToPackage{caption=false}{subfig}}
\usepackage{subfig}
\makeatother

\begin{document}

\title{Variance estimation in the particle filter}

\author{Anthony Lee and Nick Whiteley\\
University of Warwick and University of Bristol}
\maketitle
\begin{abstract}
This paper concerns numerical assessment of Monte Carlo error in particle
filters. We show that by keeping track of certain key features of
the genealogical structure arising from resampling operations, it
is possible to estimate variances of a number of standard Monte Carlo
approximations which particle filters deliver. All our estimators
can be computed from a single run of a particle filter with no further
simulation. We establish that as the number of particles grows, our
estimators are weakly consistent for asymptotic variances of the Monte
Carlo approximations and some of them are also non-asymptotically
unbiased. The asymptotic variances can be decomposed into terms corresponding
to each time step of the algorithm, and we show how to consistently
estimate each of these terms. When the number of particles may vary
over time, this allows approximation of the asymptotically optimal
allocation of particle numbers. 
\end{abstract}

\section{Introduction}

Particle filters, or sequential Monte Carlo methods, provide Monte
Carlo approximations of integrals with respect to sequences of measures.
In popular statistical inference applications, these measures arise
naturally from conditional distributions in hidden Markov models,
or are constructed artificially to bridge between target distributions
in Bayesian analysis. The numbers of particles used to perform the
approximation controls the tradeoff between computational complexity
and accuracy. Theoretical properties of this relationship have been
the subject of intensive research; the literature includes a number
of central limit theorems \citep{del1999central,Chopin2004,kunsch2005recursive,Douc08weighted}
and a variety of refined asymptotic \citep{douc2005moderate,del2007sharp}
and non-asymptotic \citep{del2001genealogies,cerou2011nonasymptotic}
results. These studies provide a wealth of insight into the mathematical
behaviour of particle filter approximations and validate them theoretically,
but considerably less is known about how, in practice, to extract
information from a realization of a single particle filter in order
to report numerical measures of Monte Carlo error. This is in notable
contrast to other families of Monte Carlo techniques, especially Markov
chain Monte Carlo, for which an extensive literature on variance estimation
exists. Our main aim is to address this gap.

We introduce particle filters via a framework of Feynman--Kac models
\citep{DelMoral2004}. This approach allows us to identify the key
generic ingredients defining particle filters and the measures they
approximate, and emphasizes that our variance estimators can be used
across application areas. Based on a single realization of a particle
filter, we provide unbiased estimators of the variance and individual
asymptotic variance terms for a class of unnormalized particle approximations.
No estimators of these quantities based on a single run of a particle
filter have previously appeared in the literature. Upon suitable rescaling,
we establish that our estimators are weakly consistent for asymptotic
variances associated with a larger class of particle approximations.
One of these re-scaled estimators is closely related to that proposed
by \citet{chan2013general}, which is the only other consistent asymptotic
variance estimator based on a single realization of a particle filter
in the literature. We also demonstrate how one can use the estimators
to inform the choice of algorithm parameters in an attempt to improve
performance.

\section{Particle filters}

\subsection{Notation and conventions\label{sub:Notation-and-conventions}}

For a generic measurable space $(\mathsf{E},\mathcal{E})$, we denote
by $\mathcal{L}(\mathcal{E})$ the set of $\mathbb{R}$-valued, $\mathcal{E}$-measurable
and bounded functions on $\mathsf{E}$. For $\varphi\in\mathcal{L}(\mathcal{E})$,
$\mu$ a measure and $K$ an integral kernel on $(\mathsf{E},\mathcal{E})$,
we write $\mu(\varphi)=\int_{\mathsf{E}}\varphi(x)\mu({\rm d}x)$,
$K(\varphi)(x)=\int_{\mathsf{E}}K(x,{\rm d}x^{\prime})\varphi(x^{\prime})$
and $\mu K(A)=\int_{\mathsf{E}}\mu({\rm d}x)K(x,A)$. Constant functions
$x\in E\mapsto c\in\mathbb{R}$ are denoted simply by $c$. For $\varphi\in\mathcal{L}(\mathcal{E})$,
$\varphi^{\otimes2}(x,x^{\prime})=\varphi(x)\varphi(x^{\prime})$.
The Dirac measure located at $x$ is denoted $\delta_{x}$. For any
sequence $(a_{n})_{n\in\mathbb{Z}}$ and $p\leq q$, $a_{p:q}=(a_{p},\ldots,a_{q})$.
For any $m\in\mathbb{N}$, $[m]=\{1,\ldots,m\}$. For a vector of
positive values $(a_{1},\ldots,a_{m})$, we denote by $\mathcal{C}(a_{1},\ldots,a_{m})$
the Categorical distribution over $\{1,\ldots,m\}$ with probabilities
$(a_{1}/\sum_{i=1}^{m}a_{i},\ldots,a_{m}/\sum_{i=1}^{m}a_{i})$. When
a random variable is indexed by a superscript $N$, a sequence of
such random variables is implicitly defined by considering each value
$N\in\mathbb{N}$, and limits will always be taken along this sequence.

\subsection{Discrete time Feynman--Kac models\label{sub:Discrete-time-Feynman--Kac}}

On a measurable space $\left(\mathsf{X},\mathcal{X}\right)$ with
$n$ a non-negative integer, let $M_{0}$ be a probability measure,
$M_{1},\ldots,M_{n}$ a sequence of Markov kernels and $G_{0},\ldots,G_{n}$
a sequence of $\mathbb{R}$-valued, strictly positive, upper-bounded
functions. We assume throughout that $\mathsf{X}$ does not consist
of a single point. We define a sequence of measures by $\gamma_{0}=M_{0}$
and, recursively,
\begin{equation}
\gamma_{p}(S)=\int_{\mathsf{X}}\gamma_{p-1}({\rm d}x)G_{p-1}(x)M_{p}(x,S),\qquad p\in[n],\quad S\in\mathcal{X}.\label{eq:gamman}
\end{equation}

\noindent Since $\gamma_{p}(\mathsf{X})\in(0,\infty)$ for each $p$,
the following probability measures are well-defined: 
\begin{equation}
\eta_{p}(S)=\frac{\gamma_{p}(S)}{\gamma_{p}(\mathsf{X})},\qquad p\in\{0,\ldots,n\},\quad S\in\mathcal{X}.\label{eq:etan}
\end{equation}
The representation 
\begin{equation}
\gamma_{n}(\varphi)=E\left\{ \varphi(X_{n})\prod_{p=0}^{n-1}G_{p}(X_{p})\right\} ,\label{eq:gammanphi_expectation}
\end{equation}
where the expectation is taken with respect to the Markov chain with
initial distribution $X_{0}\sim M_{0}$ and transitions $X_{p}\sim M_{p}(X_{p-1},\cdot)$,
establishes the connection to Feynman--Kac formulae. Measures with
the structure in (\ref{eq:gamman})--(\ref{eq:etan}) arise in a variety
of statistical contexts.

\subsection{Motivating examples of Feynman--Kac models\label{sub:Statistical-examples-of}}

As a first example, consider a hidden Markov model: a bivariate Markov
chain $(X_{p},Y_{p})_{p=0,\ldots,n}$ where $(X_{p})_{p=0,\ldots,n}$
is itself Markov with initial distribution $M_{0}$ and transitions
$X_{p}\sim M_{p}(X_{p-1},\cdot)$, and such that each $Y_{p}$ is
conditionally independent of $(X_{q},Y_{q};q\neq p)$ given $X_{p}$.
If the conditional distribution of $Y_{p}$ given $X_{p}$ admits
a density $g_{p}(X_{p},\cdot)$ and one fixes a sequence of observed
values $y_{0},\ldots,y_{n-1}$, then with $G_{p}(x_{p})=g_{p}(x_{p},y_{p})$,
$\eta_{n}$ is the conditional distribution of $X_{n}$ given $y_{0},\ldots,y_{n-1}$.
Hence, $\eta_{n}(\varphi)$ is a conditional expectation and $\gamma_{n}(\mathsf{X})=\gamma_{n}(1)$
is the marginal likelihood of $y_{0},\ldots,y_{n-1}$

As a second example, consider the following sequential simulation
setup. Let $\pi_{0}$ and $\pi_{1}$ be two probability measures on
$\left(\mathsf{X},\mathcal{X}\right)$ such that $\pi_{0}(dx)=\bar{\pi}_{0}(x)dx/Z_{0}$
and $\pi_{1}(dx)=\bar{\pi}_{1}(x)dx/Z_{1}$, where $\bar{\pi}_{0}$
and $\bar{\pi}_{1}$ are unnormalized probability densities with respect
to a common dominating measure $dx$ and $Z_{i}=\int_{\mathsf{X}}\bar{\pi}_{i}(x)dx$,
$i\in\{0,1\}$ are integrals unavailable in closed form. In Bayesian
statistics $\pi_{1}$ may arise as a posterior distribution from which
one wishes to sample, e.g. having multiple modes and complicated local
covariance structures, $\pi_{0}$ is a more benign distribution from
which sampling is feasible, and calculating $Z_{1}/Z_{0}$ allows
assessment of model fit. Introducing a sequence $0=\beta_{0}<\cdots<\beta_{n}=1$
and taking $G_{p}(x)=\left\{ \bar{\pi}_{1}(x)/\bar{\pi}_{0}(x)\right\} ^{\beta_{p+1}-\beta_{p}}$,
$M_{0}=\pi_{0}$ and for each $p=1,\ldots,n$, $M_{p}$ as a Markov
kernel invariant with respect to the distribution with density proportional
to $\bar{\pi}_{0}(x)^{1-\beta_{p}}\bar{\pi}_{1}(x)^{\beta_{p}}$,
one obtains by elementary manipulations
\[
\gamma_{p}(S)=\frac{1}{Z_{0}}\int_{S}\bar{\pi}_{0}(x)^{1-\beta_{p}}\bar{\pi}_{1}(x)^{\beta_{p}}dx,\quad\eta_{n}=\pi_{1},\quad\gamma_{n}(\mathsf{X})=\frac{Z_{1}}{Z_{0}},
\]
so that $\eta_{1},\ldots,\eta_{n-1}$ forms a sequence of intermediate
distributions between $\pi_{0}$ and $\pi_{1}$. This type of construction
appears in \citep{DelMoral2006} and references therein.

\subsection{Particle approximations\label{sub:Particle-filters-and-approx}}

We now introduce particle approximations of the measures in (\ref{eq:gamman})--(\ref{eq:etan}).
Let $c_{0:n}$ be a sequence of positive real numbers and $N\in\mathbb{N}$.
We define a sequence of particle numbers $N_{0:n}$ by $N_{p}=\lceil c_{p}N\rceil$
for $p\in\{0,\ldots,n\}$. To avoid notational complications, we shall
assume throughout that $c_{0:n}$ and $N$ are such that $\text{min}_{p}N_{p}\geq2$.
The particle system consists of a sequence $\zeta=\zeta_{0:n}$, where
for each $p$, $\zeta_{p}=(\zeta_{p}^{1},\ldots,\zeta_{p}^{N_{p}})$
and each $\zeta_{p}^{i}$ is valued in $\mathsf{X}$. To describe
the resampling operation we also introduce random variables denoting
the indices of the ancestors of each random variable $\zeta_{p}^{i}$.
That is, for each $i\in[N_{p}]$, $A_{p-1}^{i}$ is a $[N_{p-1}]$-valued
random variable and we write $A_{p-1}=(A_{p-1}^{1},\ldots,A_{p-1}^{N_{p}})$
for $p\in[n]$ and $A=A_{0:n-1}$. 

A simple algorithmic description of the particle system is given in
Algorithm~\ref{alg:bpf}. An important and non-standard feature here
is that we keep track of a collection of Eve indices $E_{0:n}$ with
$E_{p}=(E_{p}^{1},\ldots,E_{p}^{N_{p}})$ for each $p$, which will
be put to use in our variance estimators. We adopt the Eve terminology
because $E_{p}^{i}$ represents the index of the time $0$ ancestor
of $\zeta_{p}^{i}$ . The fact that $N_{p}$ may vary with $p$ is
also atypical, and allows us to address asymptotically optimal particle
allocation in Section~\ref{sub:approx_allocation}. On a first reading,
one may wish to assume that $N_{0:n}$ is not time-varying, i.e. $c_{p}=1$
so $N_{p}=N$ for all $p\in\{0,\ldots,n\}$. Figure~\ref{fig:A-particle-system}
is a graphical representation of a realization of a small particle
system.

\begin{algo}The particle filter.\label{alg:bpf}
\begin{enumerate}
\item At time $0$: for each $i\in[N_{0}]$, sample $\zeta_{0}^{i}\sim M_{0}(\cdot)$
and set $E_{0}^{i}\leftarrow i$.
\item \begin{flushleft}
At each time $p=1,\ldots,n$: for each $i\in[N_{p}]$,
\par\end{flushleft}

\begin{enumerate}
\item Sample $A_{p-1}^{i}\sim\mathcal{C}\left\{ G_{p-1}(\zeta_{p-1}^{1}),\ldots,G_{p-1}(\zeta_{p-1}^{N_{p-1}})\right\} $.
\item Sample $\zeta_{p}^{i}\sim M_{p}(\zeta_{p-1}^{A_{p-1}^{i}},\cdot)$
and set $E_{p}^{i}\leftarrow E_{p-1}^{A_{p-1}^{i}}$.
\end{enumerate}
\end{enumerate}
\end{algo}

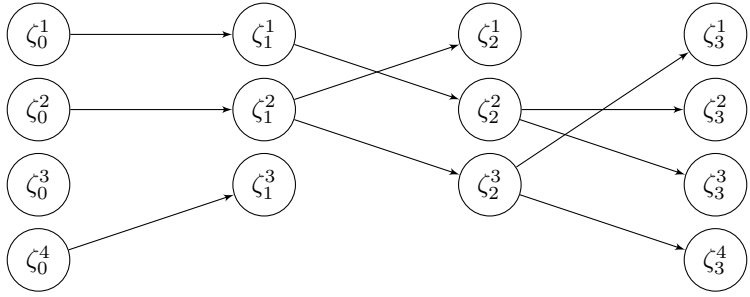
\begin{figure}
\centering{}\begin{tikzpicture}
\tikzset{vertex/.style = {shape=circle,draw,minimum size=2em}}
\tikzset{edge/.style = {->,> = latex'}}
\node[vertex] (z01) at  (0,0) {$\zeta_0^1$};
\node[vertex] (z02) at  (0,-1) {$\zeta_0^2$};
\node[vertex] (z03) at  (0,-2) {$\zeta_0^3$};
\node[vertex] (z04) at  (0,-3) {$\zeta_0^4$};
\node[vertex] (z11) at  (3,0) {$\zeta_1^1$};
\node[vertex] (z12) at  (3,-1) {$\zeta_1^2$};
\node[vertex] (z13) at  (3,-2) {$\zeta_1^3$};
\node[vertex] (z21) at  (6,0) {$\zeta_2^1$};
\node[vertex] (z22) at  (6,-1) {$\zeta_2^2$};
\node[vertex] (z23) at  (6,-2) {$\zeta_2^3$};
\node[vertex] (z31) at  (9,0) {$\zeta_3^1$};
\node[vertex] (z32) at  (9,-1) {$\zeta_3^2$};
\node[vertex] (z33) at  (9,-2) {$\zeta_3^3$};
\node[vertex] (z34) at  (9,-3) {$\zeta_3^4$};
\draw[edge] (z01) to (z11);
\draw[edge] (z02) to (z12); 
\draw[edge] (z04) to (z13);
\draw[edge] (z12) to (z21); 
\draw[edge] (z11) to (z22);
\draw[edge] (z12) to (z23);
\draw[edge] (z23) to (z31); 
\draw[edge] (z22) to (z32); 
\draw[edge] (z22) to (z33);
\draw[edge] (z23) to (z34);
\end{tikzpicture}\caption{\label{fig:A-particle-system}A particle system with $n=3$ and $N_{0:3}=(4,3,3,4)$.
An arrow from $\zeta_{p-1}^{i}$ to $\zeta_{p}^{j}$ indicates that
the ancestor of $\zeta_{p}^{j}$ is $\zeta_{p-1}^{i}$, i.e. $A_{p-1}^{j}=i$.
In the realization shown, the ancestral indices are $A_{0}=(1,2,4)$,
$A_{1}=(2,1,2)$ and $A_{2}=(3,2,2,3)$, while the Eve indices are
$E_{0}=(1,2,3,4)$, $E_{1}=(1,2,4)$, $E_{2}=(2,1,2)$ and $E_{3}=(2,1,1,2)$.}
\end{figure}

The particle approximations of $\eta_{n}$ and $\gamma_{n}$ are defined
respectively, with the convention $\prod_{p=0}^{-1}\eta_{p}^{N}(G_{p})=1$,
by the random measures
\[
\eta_{n}^{N}=\frac{1}{N_{n}}\sum_{i\in[N_{n}]}\delta_{\zeta_{n}^{i}},\qquad\gamma_{n}^{N}=\left\{ \prod_{p=0}^{n-1}\eta_{p}^{N}(G_{p})\right\} \eta_{n}^{N},
\]
and we observe that, similar to (\ref{eq:etan}), $\eta_{n}^{N}=\gamma_{n}^{N}/\gamma_{n}^{N}(1)$.
To simplify presentation, the dependence of $\gamma_{n}^{N}$ and
$\eta_{n}^{N}$ on $c_{0:n}$ is suppressed from the notation. The
following proposition establishes basic properties of the particle
approximations, which validate their use.

\begin{proposition}\label{prop:asconv-1}There exists a map $\sigma_{n}^{2}:\mathcal{L}(\mathcal{X})\rightarrow[0,\infty)$
such that for any $\varphi\in\mathcal{L}(\mathcal{X})$:
\begin{enumerate}
\item $E\left\{ \gamma_{n}^{N}(\varphi)\right\} =\gamma_{n}(\varphi)$,
for all $N\geq1$,
\item $\gamma_{n}^{N}(\varphi)\to\gamma_{n}(\varphi)$ almost surely and
$N{\rm var}\left\{ \gamma_{n}^{N}(\varphi)/\gamma_{n}(1)\right\} \to\sigma_{n}^{2}(\varphi)$,
\item $\eta_{n}^{N}(\varphi)\to\eta_{n}(\varphi)$ almost surely and $NE\left[\left\{ \eta_{n}^{N}(\varphi)-\eta_{n}(\varphi)\right\} ^{2}\right]\to\sigma_{n}^{2}(\varphi-\eta_{n}(\varphi))$.
\end{enumerate}
\end{proposition}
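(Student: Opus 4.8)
The three assertions are the standard unbiasedness, strong law, and asymptotic variance results for Feynman--Kac particle approximations, and I would prove them by conditioning on the natural filtration $\mathcal{F}_p = \sigma(\zeta_{0:p}, A_{0:p-1})$. The single identity driving everything is that, by steps (a)--(b) of Algorithm~\ref{alg:bpf}, for any $\psi \in \mathcal{L}(\mathcal{X})$,
\[
E\left\{\eta_p^N(\psi)\mid\mathcal{F}_{p-1}\right\} = \frac{\eta_{p-1}^N(G_{p-1}M_p\psi)}{\eta_{p-1}^N(G_{p-1})},
\]
since each $\zeta_p^i$ is drawn by resampling according to $G_{p-1}$ and then mutating through $M_p$. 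Combined with the identity $\gamma_p^N(\psi) = \gamma_{p-1}^N(G_{p-1})\,\eta_p^N(\psi)$ this yields $E\{\gamma_p^N(\psi)\mid\mathcal{F}_{p-1}\} = \gamma_{p-1}^N(G_{p-1}M_p\psi)$. For part~1 I would induct on $p$: the base case is $E\{\eta_0^N(\varphi)\} = M_0(\varphi) = \gamma_0(\varphi)$, and the inductive step applies the tower property together with the recursion $\gamma_p(\varphi) = \gamma_{p-1}(G_{p-1}M_p\varphi)$ read off from (\ref{eq:gamman}); the only thing to check is that $G_{p-1}M_p\varphi \in \mathcal{L}(\mathcal{X})$, which holds because the $G_p$ are upper bounded and the $M_p$ are Markov.

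For the almost sure convergence in parts~2 and 3 I would again induct on $p$, writing $\eta_p^N(\psi)$ as the sum of its conditional mean and a centred fluctuation. The conditional mean $\eta_{p-1}^N(G_{p-1}M_p\psi)/\eta_{p-1}^N(G_{p-1})$ converges almost surely to $\eta_p(\psi)$ by the induction hypothesis and the identity $\eta_p(\psi) = \eta_{p-1}(G_{p-1}M_p\psi)/\eta_{p-1}(G_{p-1})$. The fluctuation is an average of $N_p$ conditionally independent bounded terms, so a conditional fourth-moment bound of order $N_p^{-2}$ together with $N_p = \lceil c_p N\rceil$ and Borel--Cantelli sends it to zero almost surely along the sequence in $N$. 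Convergence of $\gamma_n^N(\varphi) = \{\prod_{p<n}\eta_p^N(G_p)\}\eta_n^N(\varphi)$ then follows because products of almost surely convergent sequences converge almost surely, and the limit is $\{\prod_{p<n}\eta_p(G_p)\}\eta_n(\varphi) = \gamma_n(\varphi)$.

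The asymptotic variance is the substantive part. I would introduce the Feynman--Kac semigroup $Q_{p,n}$ defined by $Q_{n,n}\varphi = \varphi$ and $Q_{p,n}\varphi = G_pM_{p+1}Q_{p+1,n}\varphi$, for which iterating the conditional identity gives $E\{\gamma_n^N(\varphi)\mid\mathcal{F}_p\} = \gamma_p^N(Q_{p,n}\varphi)$, so that $M_p := \gamma_p^N(Q_{p,n}\varphi)$ is an $\mathcal{F}_p$-martingale terminating at $\gamma_n^N(\varphi)$ with mean $\gamma_n(\varphi)$. This gives the orthogonal decomposition $\gamma_n^N(\varphi) - \gamma_n(\varphi) = \sum_{p=0}^n \Delta_p$ into martingale differences ($\Delta_p = M_p - M_{p-1}$ for $p\geq1$, $\Delta_0 = M_0 - \gamma_n(\varphi)$), whence $\mathrm{var}\{\gamma_n^N(\varphi)\} = \sum_{p=0}^n E(\Delta_p^2)$. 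Writing $\Delta_p = \gamma_{p-1}^N(G_{p-1})\{\eta_p^N(\psi_p) - E(\eta_p^N(\psi_p)\mid\mathcal{F}_{p-1})\}$ with $\psi_p = Q_{p,n}\varphi$ and the obvious convention at $p=0$, its conditional variance is exactly $1/N_p$ times a conditional variance of $\psi_p$ under the resample--mutate mechanism; multiplying by $N$, using $N/N_p \to 1/c_p$ and the already-established almost sure convergence of $\eta_{p-1}^N$ and $\gamma_{p-1}^N(G_{p-1})$, each $N\,E(\Delta_p^2)$ converges, and I would define $\sigma_n^2(\varphi)$ as the resulting finite sum normalized by $\gamma_n(1)^2$. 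For part~3 I would use the exact identity $\eta_n^N(\varphi) - \eta_n(\varphi) = \gamma_n^N(\varphi - \eta_n(\varphi))/\gamma_n^N(1)$ and apply part~2 to the centred function $\varphi - \eta_n(\varphi)$, whose $\gamma_n$-integral vanishes, so that the limiting mean square equals $\sigma_n^2(\varphi - \eta_n(\varphi))$.

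The main obstacle is justifying the interchange of limit and expectation: almost sure convergence and convergence of conditional variances do not by themselves give convergence of the unconditional normalized variances $N\,E(\Delta_p^2)$, nor do they license replacing the random denominator $\gamma_n^N(1)$ by $\gamma_n(1)$ in part~3. I would close these gaps by establishing uniform-in-$N$ $L^q$ bounds on $\sqrt{N}\{\gamma_p^N(\psi) - \gamma_p(\psi)\}$ for some $q > 2$ --- proved inductively via a Marcinkiewicz--Zygmund or Burkholder inequality applied to the conditionally independent increments at each step --- which furnish the uniform integrability needed to pass the conditional-variance and almost sure limits through the expectations and the ratio.
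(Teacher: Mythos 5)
Your proposal is correct and would serve as a valid proof of Proposition~\ref{prop:asconv-1}: the conditional-expectation identity, the martingale $\gamma_{p}^{N}(Q_{p,n}\varphi)$ with terminal value $\gamma_{n}^{N}(\varphi)$, the orthogonal variance decomposition, and the identification of the limit all check out, and your limit for $N\,E(\Delta_{p}^{2})/\gamma_{n}(1)^{2}$ is exactly $c_{p}^{-1}v_{p,n}(\varphi)$ with $v_{p,n}$ as in (\ref{eq:vpn_Q_expr}). This is the classical martingale/semigroup route of \citet{DelMoral2004}, which is precisely what the paper defers to for this proposition (its proof, relegated to the supplement, adapts those arguments to $N_{p}=\lceil c_{p}N\rceil$), so in that sense you are aligned with the paper's own proof. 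It is worth contrasting, however, with the machinery the paper actually builds in the main text: there the second moment of $\gamma_{n}^{N}(\varphi)$ is computed exactly and non-asymptotically as a sum over coalescence patterns $b\in B_{n}$ of terms $\prod_{p}N_{p}^{-b_{p}}(1-N_{p}^{-1})^{1-b_{p}}\mu_{b}(\varphi^{\otimes2})$ (Lemma~\ref{lem:Q2_change}, proved via genealogical tracing variables and a doubly conditional SMC argument rather than martingales), and the asymptotic variance falls out by observing that every string other than $0_{n}$ and $e_{0},\ldots,e_{n}$ contributes $\mathcal{O}(N^{-2})$ (Lemma~\ref{lem:asymptotic_variances}). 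What that route buys is the exact finite-$N$ combinatorial structure underlying the unbiased variance estimators (Theorems~\ref{thm:V_n^N_thm_front} and~\ref{THM:MU_ B^N}); your route buys a shorter, self-contained path to the asymptotic statements but yields no such non-asymptotic identities. Two technical remarks on your plan. First, for part~2 the heavy uniform-integrability apparatus is unnecessary: writing out $E(\Delta_{p}^{2}\mid\mathcal{F}_{p-1})$, the random normalizers $\eta_{p-1}^{N}(G_{p-1})$ cancel, leaving $N_{p}^{-1}\{\gamma_{p-1}^{N}(G_{p-1})\gamma_{p-1}^{N}(G_{p-1}M_{p}\psi_{p}^{2})-\gamma_{p-1}^{N}(G_{p-1}M_{p}\psi_{p})^{2}\}$, which after multiplication by $N$ is bounded by a deterministic constant (the $G_{q}$ are upper bounded and $N/N_{p}\leq c_{p}^{-1}$); almost sure convergence plus bounded convergence then gives $N\,E(\Delta_{p}^{2})$ directly. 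Second, the $L^{q}$ bounds with $q>2$ are genuinely needed only where you say they are, namely to control the random denominator $\gamma_{n}^{N}(1)$ in part~3, and your plan (Marcinkiewicz--Zygmund type bounds propagated inductively through the conditionally i.i.d.\ sampling steps) is the standard and correct way to close that gap.
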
In the case that the number of particles is constant
over time, $N_{p}=N$, these properties are well known and can be
deduced, for example, from various results of \citet{DelMoral2004}.
The arguments used to treat the general $N_{p}=\left\lceil c_{p}N\right\rceil $
case are not substantially different, but since they seem not to have
been published anywhere in exactly the form we need, we include a
proof of Proposition \ref{prop:asconv-1} in the supplement.

\subsection{A variance estimator}

For $\varphi\in\mathcal{L}(\mathcal{X})$, consider the quantity
\begin{equation}
V_{n}^{N}(\varphi)=\eta_{n}^{N}(\varphi)^{2}-\left(\prod_{p=0}^{n-1}\frac{N_{p}}{N_{p}-1}\right)\frac{1}{N_{n}(N_{n}-1)}\sum_{i,j:E_{n}^{i}\neq E_{n}^{j}}\varphi(\zeta_{n}^{i})\varphi(\zeta_{n}^{j}),\label{eq:V_n^N_defn_front}
\end{equation}
which is readily computable as a byproduct of Algorithm \ref{alg:bpf}.
The following theorem is the first main result of the paper. We state
it here to make some of the practical implications of our work accessible
to the reader before entering into more technical details; it shows
that via (\ref{eq:V_n^N_defn_front}), the Eve variables $E_{n}^{i}$
can be used to estimate the Monte Carlo errors associated with $\gamma_{n}^{N}(\varphi)$
and $\eta_{n}^{N}(\varphi)$. 

\begin{theorem}\label{thm:V_n^N_thm_front}The following hold for
any $\varphi\in\mathcal{L}(\mathcal{X})$, with $\sigma_{n}^{2}(\cdot)$
as in Proposition~\ref{prop:asconv-1}:
\begin{enumerate}
\item $E\left\{ \gamma_{n}^{N}(1)^{2}V_{n}^{N}(\varphi)\right\} ={\rm var}\left\{ \gamma_{n}^{N}(\varphi)\right\} $
for all $N\geq1$, 
\item $NV_{n}^{N}(\varphi)\to\sigma_{n}^{2}(\varphi)$ in probability, 
\item $NV_{n}^{N}(\varphi-\eta_{n}^{N}(\varphi))\to\sigma_{n}^{2}(\varphi-\eta_{n}(\varphi))$
in probability.
\end{enumerate}
\end{theorem}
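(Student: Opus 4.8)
The plan is to work with the symmetric bilinear form
\[
V_n^N(\psi_1,\psi_2)=\eta_n^N(\psi_1)\eta_n^N(\psi_2)-\left(\prod_{p=0}^{n-1}\frac{N_p}{N_p-1}\right)\frac{1}{N_n(N_n-1)}\sum_{i,j:E_n^i\neq E_n^j}\psi_1(\zeta_n^i)\psi_2(\zeta_n^j),
\]
so that $V_n^N(\varphi)=V_n^N(\varphi,\varphi)$, and to exploit $\sum_{i,j}\psi_1(\zeta_n^i)\psi_2(\zeta_n^j)=N_n^2\eta_n^N(\psi_1)\eta_n^N(\psi_2)$ to split the double sum into its distinct-Eve and same-Eve parts. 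This gives the rearrangement
\[
V_n^N(\varphi)=\left(1-\prod_{p=0}^{n}\frac{N_p}{N_p-1}\right)\eta_n^N(\varphi)^2+\left(\prod_{p=0}^{n-1}\frac{N_p}{N_p-1}\right)\frac{1}{N_n(N_n-1)}\sum_{i,j:E_n^i=E_n^j}\varphi(\zeta_n^i)\varphi(\zeta_n^j),
\]
which isolates the mass carried by particles sharing a common time-$0$ ancestor and drives all three statements.

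For part (i), I would use $\gamma_n^N(1)\eta_n^N(\varphi)=\gamma_n^N(\varphi)$ to write $\gamma_n^N(1)^2V_n^N(\varphi)=\gamma_n^N(\varphi)^2-W_n^N(\varphi)$, where
\[
W_n^N(\varphi)=\gamma_n^N(1)^2\left(\prod_{p=0}^{n-1}\frac{N_p}{N_p-1}\right)\frac{1}{N_n(N_n-1)}\sum_{i,j:E_n^i\neq E_n^j}\varphi(\zeta_n^i)\varphi(\zeta_n^j).
\]
Since Proposition~\ref{prop:asconv-1}(i) gives $E\{\gamma_n^N(\varphi)\}=\gamma_n(\varphi)$, hence ${\rm var}\{\gamma_n^N(\varphi)\}=E\{\gamma_n^N(\varphi)^2\}-\gamma_n(\varphi)^2$, the claim reduces to the single identity $E\{W_n^N(\varphi)\}=\gamma_n(\varphi)^2$, which I would prove by induction on $n$ for all bounded $\varphi$ simultaneously. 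The base case $n=0$ is the elementary fact that $\{N_0(N_0-1)\}^{-1}\sum_{i\neq j}\varphi(\zeta_0^i)\varphi(\zeta_0^j)$ is unbiased for $M_0(\varphi)^2$ under i.i.d. sampling. For the inductive step the crucial observation is that a pair of time-$n$ particles has distinct Eve indices exactly when the Eve indices of its two parents differ; conditioning on the history $\mathcal{F}_{n-1}$ and using the conditional independence of the two parental selections and of the two mutations yields
\[
E\{W_n^N(\varphi)\mid\mathcal{F}_{n-1}\}=W_{n-1}^N(G_{n-1}M_n(\varphi)).
\]
The factors $N_p/(N_p-1)$ are chosen precisely so that the combinatorial constants produced by independent resampling telescope to $1$ at each step, and the recursion $(\ref{eq:gamman})$ gives $\gamma_{n-1}(G_{n-1}M_n(\varphi))=\gamma_n(\varphi)$, which closes the induction.

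For part (ii) I would treat the two summands of the rearrangement separately. The first is handled by soft arguments: $N\bigl(1-\prod_{p=0}^{n}N_p/(N_p-1)\bigr)\to-\sum_{p=0}^{n}c_p^{-1}$ because $N_p=\lceil c_pN\rceil$, while $\eta_n^N(\varphi)^2\to\eta_n(\varphi)^2$ almost surely by Proposition~\ref{prop:asconv-1}(iii), so the first summand, multiplied by $N$, tends to $-\eta_n(\varphi)^2\sum_{p=0}^{n}c_p^{-1}$. The substance of the theorem is the second summand: I must show that $N N_n^{-2}\sum_{i,j:E_n^i=E_n^j}\varphi(\zeta_n^i)\varphi(\zeta_n^j)$ converges in probability to $\sigma_n^2(\varphi)+\eta_n(\varphi)^2\sum_{p=0}^{n}c_p^{-1}$, so that the two limits combine to $\sigma_n^2(\varphi)$. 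The candidate limit is pinned down by part (i) together with Proposition~\ref{prop:asconv-1}(ii), which identify the limit of $E\{N\gamma_n^N(1)^2V_n^N(\varphi)\}$; the genuine difficulty, and the main obstacle of the proof, is to obtain the corresponding statement in probability for $NV_n^N(\varphi)$ itself. I expect to do this by a genealogical analysis of the same-Eve sum, decomposing the event $\{E_n^i=E_n^j\}$ according to the most recent time at which the two lineages coalesced and showing that each resulting contribution, rescaled by $N$, matches the corresponding time-$p$ term of $\sigma_n^2(\varphi)$, supplemented by a second-moment estimate controlling the fluctuations of this sum.

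Finally, part (iii) follows from part (ii) via the bilinear structure. Because $\eta_n^N(\varphi-\eta_n^N(\varphi))=0$, polarisation gives
\[
V_n^N(\varphi-\eta_n^N(\varphi))=V_n^N(\varphi)-2\,\eta_n^N(\varphi)\,V_n^N(\varphi,1)+\eta_n^N(\varphi)^2\,V_n^N(1).
\]
I would establish part (ii) in its bilinear form, $NV_n^N(\psi_1,\psi_2)\to\sigma_n^2(\psi_1,\psi_2)$ in probability, where $\sigma_n^2(\cdot,\cdot)$ is the polarisation of the quadratic form $\sigma_n^2(\cdot)$ (which exists since $\sigma_n^2$ is a limit of variances), and then pass to the limit term by term using $\eta_n^N(\varphi)\to\eta_n(\varphi)$ almost surely. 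Bilinearity gives $\sigma_n^2(\varphi)-2\eta_n(\varphi)\sigma_n^2(\varphi,1)+\eta_n(\varphi)^2\sigma_n^2(1)=\sigma_n^2(\varphi-\eta_n(\varphi))$, the desired limit, with Slutsky-type reasoning justifying the replacement of the random coefficients by their almost-sure limits.
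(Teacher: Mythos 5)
Your part 1 is correct and takes a genuinely different, more elementary route than the paper. Your $W_n^N(\varphi)$ is exactly the paper's $\mu_{0_n}^N(\varphi^{\otimes2})$ (compare (\ref{eq:eveeqn}) and (\ref{eq:VnNdefn})), and the one-step recursion $E\{W_n^N(\varphi)\mid\mathcal{F}_{n-1}\}=W_{n-1}^N(G_{n-1}M_n(\varphi))$ does hold: given $\mathcal{F}_{n-1}$ the pairs $(A_{n-1}^i,\zeta_n^i)$, $i\in[N_n]$, are conditionally i.i.d., the event $\{E_n^i\neq E_n^j\}$ equals $\{E_{n-1}^{A_{n-1}^i}\neq E_{n-1}^{A_{n-1}^j}\}$, the $N_n(N_n-1)$ identical off-diagonal conditional expectations cancel the factor $1/\{N_n(N_n-1)\}$, the weight normalization $\eta_{n-1}^N(G_{n-1})^2$ cancels against $\gamma_n^N(1)^2=\gamma_{n-1}^N(1)^2\eta_{n-1}^N(G_{n-1})^2$, and the factor $N_{n-1}/(N_{n-1}-1)$ converts the $N_{n-1}^{-2}$ produced by the two independent categorical draws into $1/\{N_{n-1}(N_{n-1}-1)\}$; together with $\gamma_{n-1}(G_{n-1}M_n(\varphi))=\gamma_n(\varphi)$ this closes the induction. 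The paper instead obtains part 1 as an immediate corollary of the unbiasedness of all the estimators $\mu_b^N$ (Theorem~\ref{THM:MU_ B^N}), whose supplement proof uses a doubly conditional SMC argument; your induction is self-contained and needs only the $b=0_n$ case, which is a real simplification for this particular claim.

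Parts 2 and 3, however, are not proved: there is a genuine gap exactly where the theorem's difficulty lies. You correctly rearrange $V_n^N(\varphi)$, dispose of the term $(1-\prod_{p=0}^{n}N_p/(N_p-1))\eta_n^N(\varphi)^2$, and reduce part 2 to showing that $NN_n^{-2}\sum_{i,j:E_n^i=E_n^j}\varphi(\zeta_n^i)\varphi(\zeta_n^j)$ converges in probability to $\sigma_n^2(\varphi)+\eta_n(\varphi)^2\sum_{p=0}^{n}c_p^{-1}$, and your intended stratification is consistent (the coalescence-at-$p$ stratum should contribute $c_p^{-1}\mu_{e_p}(\varphi^{\otimes2})/\gamma_n(1)^2=c_p^{-1}\{v_{p,n}(\varphi)+\eta_n(\varphi)^2\}$). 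But the convergence itself is only announced --- ``a genealogical analysis \ldots{} supplemented by a second-moment estimate controlling the fluctuations of this sum'' is precisely the hard part, and no argument for it is given. Lack-of-bias computations of the kind you used in part 1 identify means only; the summands within a coalescence stratum are strongly dependent through the shared genealogy, so a law of large numbers for these functionals requires genuine work. This is what the paper's Theorem~\ref{THM:MU_ B^N} part 2 supplies, namely the uniform bound $\sup_{N\geq1}NE[\{\mu_b^N(\varphi)-\mu_b(\varphi)\}^2]<\infty$ for every $b\in B_n$, which combined with the exhaustive decomposition (\ref{eq:gamma_n^N_mu_identity}) (where all patterns $b\notin\{0_n,e_0,\ldots,e_n\}$ carry $\mathcal{O}(N^{-2})$ coefficients) yields the convergence in probability. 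Until you prove a concentration statement of this strength, or an equivalent second-moment bound for your strata, parts 2 and 3 remain conjectural; the polarization/Slutsky argument for part 3 is sound but is conditional on the bilinear form of part 2.
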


The proof of Theorem~\ref{thm:V_n^N_thm_front}, given in the appendix,
relies on a number of intermediate results concerning moment properties
of the particle approximations which we shall develop in the coming
sections. Before embarking on this development let us discuss how
(\ref{eq:V_n^N_defn_front}) may be interpreted. Consider random variables
$X^{1},\ldots,X^{N}$ with sample mean $\bar{X}$ and sample variance
\begin{equation}
\bar{X}^{2}-\frac{1}{N(N-1)}\sum_{i\neq j}X^{i}X^{j}=\frac{1}{N(N-1)}\sum_{i}(X^{i}-\bar{X})^{2}.\label{eq:sample_var}
\end{equation}
When $X^{1},\ldots,X^{N}$ are independent and identically distributed
it is of course elementary that (\ref{eq:sample_var}) is an unbiased
estimator of $\mathrm{var}(\bar{X})$ and consistency properties are
easily established. Observe the resemblance between (\ref{eq:V_n^N_defn_front})
and the left hand side of (\ref{eq:sample_var}). Some of the features
which distinguish these two expressions, notably the summation over
$\{(i,j):E_{n}^{i}\neq E_{n}^{j}\}$ and the product term in (\ref{eq:V_n^N_defn_front}),
are a reflection of the dependence between the particles and specific
distributional characteristics of Algorithm~\ref{alg:bpf}. One of
the main difficulties we face is to develop a suitable mathematical
perspective from which to describe this dependence and thus establish
that (\ref{eq:V_n^N_defn_front}) does indeed have the properties
stated in Theorem \ref{thm:V_n^N_thm_front}.

It seems natural to ask if (\ref{eq:V_n^N_defn_front}) can be re-written
so as to resemble the right hand side of (\ref{eq:sample_var}) and
thus be interpreted as some kind of sample variance across the population
of particles. This motivates the following corollary, using the notation
\[
\#_{n}^{i}=\mathrm{card}\{j:E_{n}^{j}=i\},\quad\Delta_{n}^{i}=\frac{1}{\#_{n}^{i}}\sum_{j:E_{n}^{j}=i}\varphi(\zeta_{n}^{j})-\eta_{n}^{N}(\varphi),
\]
with the convention $\Delta_{n}^{i}=0$ when $\#_{n}^{i}=0$. Recall
from Section~\ref{sub:Statistical-examples-of} that in the hidden
Markov model and sequential simulation examples $\gamma_{n}(1)$ is
respectively the marginal likelihood and ratio of normalizing constants,
hence our interest in $V_{n}^{N}(\varphi)$ with specifically $\varphi=1$.

\begin{corollary}\label{cor:sample_variances}In the case that $c_{p}=1$
for all $p\in\{0,\ldots,n\}$, 
\begin{eqnarray}
 &  & NV_{n}^{N}(1)=\frac{1}{N}\sum_{i\in[N]}(\#_{n}^{i}-1)^{2}-n+\mathcal{O}_{p}(1/N),\label{eq:NV(1)_sample_var}\\
 &  & NV_{n}^{N}(\varphi-\eta_{n}^{N}(\varphi))=\frac{1}{N}\sum_{i\in[N]}(\#_{n}^{i}\Delta_{n}^{i})^{2}+\mathcal{O}_{p}(1/N).\label{eq:NV(phi-etaphi)_sample_var}
\end{eqnarray}
\end{corollary}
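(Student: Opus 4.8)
The plan is to turn each double sum appearing in $V_{n}^{N}$ into an exact algebraic expression in the block sizes $\#_{n}^{i}$ and the block averages $\Delta_{n}^{i}$, and then control the gap with the claimed sample-variance forms using the expansion of the bias-correction factor. When $c_{p}=1$ this factor is $\prod_{p=0}^{n-1}N_{p}/(N_{p}-1)=(N/(N-1))^{n}$; throughout write $\alpha=N/(N-1)=1+1/(N-1)$, so that $\alpha^{n}=1+n/(N-1)+\mathcal{O}(1/N^{2})$. Group the time-$n$ particles by Eve index: for $i\in[N]$ the block $\{a:E_{n}^{a}=i\}$ has cardinality $\#_{n}^{i}$, and $\sum_{i\in[N]}\#_{n}^{i}=N_{n}=N$. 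The combinatorial engine, valid for any $f:\mathsf{X}\to\mathbb{R}$ after relabelling the summation indices as particle indices $a,b$, is
\[
\sum_{a,b:E_{n}^{a}\neq E_{n}^{b}}f(\zeta_{n}^{a})f(\zeta_{n}^{b})=\Big(\sum_{a\in[N]}f(\zeta_{n}^{a})\Big)^{2}-\sum_{i\in[N]}\Big(\sum_{a:E_{n}^{a}=i}f(\zeta_{n}^{a})\Big)^{2},
\]
which merely splits the full double sum into its same-Eve and different-Eve parts.

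For (\ref{eq:NV(1)_sample_var}) I would take $f=1$, so the right-hand side of the identity becomes $N^{2}-\sum_{i}(\#_{n}^{i})^{2}$. Substituting into (\ref{eq:V_n^N_defn_front}) with $\eta_{n}^{N}(1)=1$ yields the exact expression $NV_{n}^{N}(1)=N-\alpha^{n}\,(N^{2}-\sum_{i}(\#_{n}^{i})^{2})/(N-1)$. Writing $N^{2}-\sum_{i}(\#_{n}^{i})^{2}=N(N-1)+(N-\sum_{i}(\#_{n}^{i})^{2})$ peels off the piece $\alpha^{n}N$, which combines with the leading $N$ through $N(1-\alpha^{n})=-n+\mathcal{O}(1/N)$; the crucial point is that the $-n$ originates entirely from the first-order expansion of the correction factor $\alpha^{n}$, not from the combinatorics. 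The normalization $\sum_{i}\#_{n}^{i}=N$ then rewrites $\frac{1}{N}\sum_{i}(\#_{n}^{i})^{2}-1=\frac{1}{N}\sum_{i}(\#_{n}^{i}-1)^{2}$, and the remaining contributions are each a fixed negative power of $N$ times a factor that is $\mathcal{O}_{p}(1)$, hence $\mathcal{O}_{p}(1/N)$ — provided $\frac{1}{N}\sum_{i}(\#_{n}^{i})^{2}=\mathcal{O}_{p}(1)$. That boundedness I would obtain for free from Theorem~\ref{thm:V_n^N_thm_front}(2): since $NV_{n}^{N}(1)=\mathcal{O}_{p}(1)$ and $\alpha^{n}=1+\mathcal{O}(1/N)$, the exact expression above forces $\frac{1}{N}\sum_{i}(\#_{n}^{i})^{2}=\mathcal{O}_{p}(1)$.

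For (\ref{eq:NV(phi-etaphi)_sample_var}) the leading term cancels exactly, making the computation cleaner. Setting $\psi=\varphi-\eta_{n}^{N}(\varphi)$ we have $\eta_{n}^{N}(\psi)=0$, so $\sum_{a}\psi(\zeta_{n}^{a})=0$ and the first term on the right of the combinatorial identity vanishes; moreover $\sum_{a:E_{n}^{a}=i}\psi(\zeta_{n}^{a})=\#_{n}^{i}\Delta_{n}^{i}$ directly from the definition of $\Delta_{n}^{i}$. Hence the different-Eve sum equals $-\sum_{i}(\#_{n}^{i}\Delta_{n}^{i})^{2}$, and because $\eta_{n}^{N}(\psi)^{2}=0$ the estimator collapses to the exact formula $NV_{n}^{N}(\psi)=\alpha^{n+1}\frac{1}{N}\sum_{i}(\#_{n}^{i}\Delta_{n}^{i})^{2}$. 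Since $\alpha^{n+1}=1+\mathcal{O}(1/N)$ and, by Theorem~\ref{thm:V_n^N_thm_front}(3), $\frac{1}{N}\sum_{i}(\#_{n}^{i}\Delta_{n}^{i})^{2}=\mathcal{O}_{p}(1)$, the factor $\alpha^{n+1}$ contributes only an $\mathcal{O}_{p}(1/N)$ perturbation, giving the claim.

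I expect the main obstacle to be (\ref{eq:NV(1)_sample_var}) rather than (\ref{eq:NV(phi-etaphi)_sample_var}): one must expand both $\alpha^{n}$ and the ratio $(N^{2}-\sum_{i}(\#_{n}^{i})^{2})/(N-1)$ to the correct order and verify that the $\mathcal{O}(1)$ contributions assemble into precisely $\frac{1}{N}\sum_{i}(\#_{n}^{i}-1)^{2}-n$ while every leftover term is genuinely $\mathcal{O}_{p}(1/N)$. The bookkeeping is delicate exactly because the $-n$ and the $-1$ enter from different sources — the correction factor and the $\sum_{i}\#_{n}^{i}=N$ normalization, respectively — and because the $\mathcal{O}_{p}(1/N)$ statement is a rate claim that requires the a priori boundedness of $N^{-1}\sum_{i}(\#_{n}^{i})^{2}$, not merely its convergence.
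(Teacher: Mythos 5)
Your proof is correct and takes essentially the same route as the paper's (supplementary) proof: split the off-diagonal double sum by Eve-index blocks to obtain exact expressions for $V_n^N(1)$ and $V_n^N(\varphi-\eta_n^N(\varphi))$ in terms of $\#_n^i$ and $\#_n^i\Delta_n^i$, expand the correction factor $\left(N/(N-1)\right)^n$ to first order, and use Theorem~\ref{thm:V_n^N_thm_front} (parts 2 and 3, which do not depend on the corollary, so there is no circularity) to supply the stochastic boundedness of $N^{-1}\sum_i(\#_n^i)^2$ and $N^{-1}\sum_i(\#_n^i\Delta_n^i)^2$ that turns the remainder terms into genuine $\mathcal{O}_p(1/N)$ bounds.
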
The proof is in the supplement.

Since $\sum_{i}\#_{n}^{i}=N$, the first term on the right hand side
of (\ref{eq:NV(1)_sample_var}) can be interpreted as a sample variance
of the $\#_{n}^{i}$'s, reflecting variation in the numbers of time
$n$ descendants across the population of time $0$ particles. Since
$\sum_{i}\#_{n}^{i}\Delta_{n}^{i}=0$, the first term on the right
hand side of (\ref{eq:NV(phi-etaphi)_sample_var}) can be interpreted
as a sample variance which reflects both variation in the $\#_{n}^{i}$'s
and the deviations of the familial means $(\#_{n}^{i})^{-1}\sum_{j:E_{n}^{j}=i}\varphi(\zeta_{n}^{j})$
from the population mean $\eta_{n}^{N}(\varphi)$. Note that when
$n=0$, $E_{0}^{i}=i$ always and
\[
V_{0}^{N}(\varphi-\eta_{0}^{N}(\varphi))=\frac{1}{N_{0}(N_{0}-1)}\sum_{i\in[N_{0}]}\left\{ \varphi(\zeta_{0}^{i})-\eta_{0}^{N}(\varphi)\right\} ^{2},
\]
which is in keeping with $\zeta_{0}^{i}$ being independent and identically
distributed according to $\eta_{0}$.

\section{Moment properties of the particle approximations\label{sec:Lack-of-bias-and-second}}

\subsection{Genealogical tracing variables\label{sub:Genealogical-tracing-variables}}

Our next step is to introduce some auxiliary random variables associated
with the genealogical structure of the particle system. These auxiliary
variables are introduced only for purposes of analysis: they will
assist in deriving and justifying our variance estimators. Given $(A,\zeta)$,
the first collection of variables, $K^{1}=(K_{0}^{1},\ldots,K_{n}^{1})$,
is conditionally distributed as follows: $K_{n}^{1}$ is uniformly
distributed on $[N_{n}]$ and for each $p=n-1,\ldots,0$, $K_{p}^{1}=A_{p}^{K_{p+1}^{1}}$.
Given $(A,\zeta)$ and $K^{1}$, the second collection of variables,
$K^{2}=(K_{0}^{2},\ldots,K_{n}^{2})$, is conditionally distributed
as follows: $K_{n}^{2}$ is uniformly distributed on $[N_{n}]$ and
for each $p=n-1,\ldots,0$ we have $K_{p}^{2}=A_{p}^{K_{p+1}^{2}}$
if $K_{p+1}^{2}\neq K_{p+1}^{1}$ and $K_{p}^{2}\sim\mathcal{C}(G_{p}(\zeta_{p}^{1}),\ldots,G_{p}(\zeta_{p}^{N_{p}}))$
if $K_{p+1}^{2}=K_{p+1}^{1}$. The interpretation of $K^{1}$ is that
it traces backwards in time the ancestral lineage of a particle chosen
randomly from the population at time $n$. $K^{2}$ is slightly more
complicated: it traces backwards in time a sequence of broken ancestral
lineages, where breaks in the lineages occur when components of $K^{1}$
and $K^{2}$ coincide.

\subsection{Lack-of-bias and second moment of $\gamma_{n}^{N}(\varphi)$\label{sub:Lack-of-bias-and-second}}

We now give expressions for the first two moments of $\gamma_{n}^{N}(\varphi)$.

\begin{lemma}\label{LEM:Q1_CHANGE}For any $\varphi\in\mathcal{L}(\mathcal{X})$,
$E\left\{ \gamma_{n}^{N}(1)\varphi(\zeta_{n}^{K_{n}^{1}})\right\} =\gamma_{n}(\varphi)$
and $E\left\{ \gamma_{n}^{N}(\varphi)\right\} =\gamma_{n}(\varphi)$.\end{lemma}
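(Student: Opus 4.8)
The plan is to prove the second identity, $E\{\gamma_{n}^{N}(\varphi)\}=\gamma_{n}(\varphi)$, and then to recover the first one essentially for free. The reduction goes as follows. Write $\mathcal{G}=\sigma(\zeta,A)$ for the $\sigma$-algebra generated by the whole particle system. By construction $K_{n}^{1}$ is, conditionally on $\mathcal{G}$, uniform on $[N_{n}]$, so $E\{\varphi(\zeta_{n}^{K_{n}^{1}})\mid\mathcal{G}\}=N_{n}^{-1}\sum_{i\in[N_{n}]}\varphi(\zeta_{n}^{i})=\eta_{n}^{N}(\varphi)$. Since $\gamma_{n}^{N}(1)=\prod_{p=0}^{n-1}\eta_{p}^{N}(G_{p})$ is $\mathcal{G}$-measurable and $\gamma_{n}^{N}(1)\eta_{n}^{N}(\varphi)=\gamma_{n}^{N}(\varphi)$, the tower property gives $E\{\gamma_{n}^{N}(1)\varphi(\zeta_{n}^{K_{n}^{1}})\}=E\{\gamma_{n}^{N}(\varphi)\}$. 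Thus the two displayed claims are really the same statement, and it suffices to establish unbiasedness of $\gamma_{n}^{N}(\varphi)$.

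For the unbiasedness I would induct on $n$ using the natural filtration $\mathcal{G}_{p}=\sigma(\zeta_{0:p},A_{0:p-1})$, the engine being the one-step identity
\[
E\{\gamma_{p}^{N}(\varphi)\mid\mathcal{G}_{p-1}\}=\gamma_{p-1}^{N}\bigl(G_{p-1}M_{p}(\varphi)\bigr),\qquad p\in[n].
\]
To obtain this, factor $\gamma_{p}^{N}(\varphi)=\gamma_{p-1}^{N}(1)\,\eta_{p-1}^{N}(G_{p-1})\,\eta_{p}^{N}(\varphi)$, where the first two factors are $\mathcal{G}_{p-1}$-measurable. Conditionally on $\mathcal{G}_{p-1}$, Algorithm~\ref{alg:bpf} draws each $A_{p-1}^{i}\sim\mathcal{C}(G_{p-1}(\zeta_{p-1}^{1}),\ldots,G_{p-1}(\zeta_{p-1}^{N_{p-1}}))$ and then $\zeta_{p}^{i}\sim M_{p}(\zeta_{p-1}^{A_{p-1}^{i}},\cdot)$ independently over $i$, so a direct computation gives $E\{\eta_{p}^{N}(\varphi)\mid\mathcal{G}_{p-1}\}=\eta_{p-1}^{N}(G_{p-1}M_{p}(\varphi))/\eta_{p-1}^{N}(G_{p-1})$. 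Multiplying through, the normalising factor $\eta_{p-1}^{N}(G_{p-1})$ cancels and leaves precisely $\gamma_{p-1}^{N}(G_{p-1}M_{p}(\varphi))$. This cancellation, which reflects the interplay between the $G$-weighted law of the sampled ancestor index and the product defining $\gamma_{n}^{N}(1)$, is the crux of the argument.

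With the one-step identity in hand the result telescopes. Setting $\psi_{n}=\varphi$ and $\psi_{p-1}=G_{p-1}M_{p}(\psi_{p})$ for $p=n,\ldots,1$, iterated conditioning gives $E\{\gamma_{n}^{N}(\varphi)\}=E\{\gamma_{0}^{N}(\psi_{0})\}$, and the base case is immediate since $\gamma_{0}^{N}(\psi_{0})=N_{0}^{-1}\sum_{i}\psi_{0}(\zeta_{0}^{i})$ with the $\zeta_{0}^{i}\sim M_{0}$ i.i.d., whence $E\{\gamma_{0}^{N}(\psi_{0})\}=M_{0}(\psi_{0})$. The recursion \eqref{eq:gamman} reads exactly $\gamma_{p-1}(\psi_{p-1})=\gamma_{p}(\psi_{p})$, so $M_{0}(\psi_{0})=\gamma_{0}(\psi_{0})=\gamma_{n}(\varphi)$, closing the induction. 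The calculation is otherwise routine; the only genuinely subtle points are recognising (through the integration over the uniform $K_{n}^{1}$) that the lineage-based claim coincides with the classical unbiasedness property, and the weight cancellation above. The tracing variable $K^{1}$ carries no extra difficulty here, but this template --- conditioning, using the resampling law, and exploiting the cancellation --- is precisely the one that will do the real work later when the broken-lineage variable $K^{2}$ is brought in for the second-moment computations.
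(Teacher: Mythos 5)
Your proof is correct, but it runs in the opposite direction to the paper's. You first establish $E\{\gamma_{n}^{N}(\varphi)\}=\gamma_{n}(\varphi)$ by the classical telescoping (martingale-type) argument --- the one-step identity $E\{\gamma_{p}^{N}(\varphi)\mid\mathcal{G}_{p-1}\}=\gamma_{p-1}^{N}\left(G_{p-1}M_{p}(\varphi)\right)$, whose iteration reproduces the recursion \eqref{eq:gamman} --- and then recover the identity involving $K^{1}$ by noting that, given $(A,\zeta)$, $K_{n}^{1}$ is uniform on $[N_{n}]$, so that $E\{\gamma_{n}^{N}(1)\varphi(\zeta_{n}^{K_{n}^{1}})\mid A,\zeta\}=\gamma_{n}^{N}(1)\eta_{n}^{N}(\varphi)=\gamma_{n}^{N}(\varphi)$. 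All steps check out: the conditional independence over $i$ you invoke is exactly how Algorithm~\ref{alg:bpf} is specified, the weight cancellation is valid, and (a point worth stating explicitly) each $\psi_{p}$ remains in $\mathcal{L}(\mathcal{X})$ because the $G_{p}$ are bounded, so the backward recursion is well posed and the base case applies. The paper proceeds the other way around: it proves the $K^{1}$ identity first, by a change-of-measure argument in which the law of $(\zeta,A,K^{1})$ weighted by $\gamma_{n}^{N}(1)/\gamma_{n}(1)$ is identified with the law of a conditional sequential Monte Carlo algorithm whose frozen path is drawn from the normalized Feynman--Kac path measure, giving $E\{\gamma_{n}^{N}(1)\varphi(\zeta_{n}^{K_{n}^{1}})\}=\gamma_{n}(1)\eta_{n}(\varphi)=\gamma_{n}(\varphi)$; unbiasedness of $\gamma_{n}^{N}(\varphi)$ then follows by the same tower-property step you use, applied in reverse. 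Your route is more elementary and self-contained --- essentially the argument the paper attributes to \citet[Ch.~9]{DelMoral2004}, extended to time-varying $N_{p}$. What the paper's route buys is the machinery the rest of the paper actually needs: the same change-of-measure technique, upgraded to a doubly conditional algorithm with the two tracing variables $K^{1},K^{2}$, is what yields the second-moment formula of Lemma~\ref{lem:Q2_change}, for which there is no comparably simple telescoping proof. So while your template (condition, use the resampling law, cancel weights) suffices for Lemma~\ref{LEM:Q1_CHANGE}, it is the $K^{1}$-first viewpoint, rather than the martingale one, that scales to the later results, as your closing remark partly anticipates.
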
The
proof is in the supplement. This lack-of-bias property $E\left\{ \gamma_{n}^{N}(\varphi)\right\} =\gamma_{n}(\varphi)$
is quite well known and a martingale proof for the $N_{p}=N$ case
can be found, for example, in \citet[Ch. 9]{DelMoral2004}. 

In order to present an expression for the second moment of $\gamma_{n}^{N}(\varphi)$,
we now introduce a collection of measures on $\mathcal{X}^{\otimes2}$,
denoted $\{\mu_{b}:b\in B_{n}\}$ where $B_{n}=\{0,1\}^{n+1}$ is
the set of binary strings of length $n+1$. The measures are constructed
as follows. For a given $b\in B_{n}$, let $(X_{p},X_{p}')_{0\leq p\leq n}$
be a Markov chain with state-space $\mathsf{X}^{2}$, distributed
according to the following recipe. If $b_{0}=0$ then $X_{0}\sim M_{0}$
and $X_{0}'\sim M_{0}$ independently, while if $b_{0}=1$ then $X_{0}'=X_{0}\sim M_{0}$.
Then, for $p=1,\ldots,n$, if $b_{p}=0$ then $X_{p}\sim M_{p}(X_{p-1},\cdot)$
and $X_{p}'\sim M_{p}(X_{p-1}',\cdot)$ independently, while if $b_{p}=1$
then $X_{p}'=X_{p}\sim M_{p}(X_{p-1},\cdot)$. Letting $E_{b}$ denote
expectation with respect to the law of this Markov chain we then define
\[
\mu_{b}(S)=E_{b}\left[\mathbb{I}\left\{ (X_{n},X_{n}')\in S\right\} \prod_{p=0}^{n-1}G_{p}(X_{p})G_{p}(X_{p}')\right],\qquad S\in\mathcal{X}^{\otimes2},\quad b\in B_{n}.
\]
Similarly to (\ref{eq:gammanphi_expectation}) we shall write $\mu_{b}(\varphi)=E_{b}\left\{ \varphi(X_{n},X_{n}')\prod_{p=0}^{n-1}G_{p}(X_{p})G_{p}(X_{p}')\right\} $,
for $\varphi\in\mathcal{L}(\mathcal{X}^{\otimes2})$ and $b\in B_{n}$.

\begin{remark}\label{rem:mu0}Observe that with $0_{n}\in B_{n}$
denoting the zero string, $\mu_{0_{n}}(\varphi^{\otimes2})=\gamma_{n}(\varphi)^{2}$.\end{remark}

Let $[N_{0:n}]=[N_{0}]\times\cdots\times[N_{n}]$, and for any $b\in B_{n}$,
\[
\mathcal{I}(b)=\{(k^{1},k^{2})\in[N_{0:n}]^{2}\,:\,\text{for each }p,\:k_{p}^{1}=k_{p}^{2}\iff b_{p}=1\},
\]
which is the set of pairs of $[N_{0:n}]$-valued strings which coincide
in their $p$-th coordinate exactly when $b_{p}=1$.

\begin{lemma}\label{lem:Q2_change}For any $\varphi\in\mathcal{L}(\mathcal{X}^{\otimes2})$
and $b\in B_{n}$,
\begin{equation}
E\left[\mathbb{I}\left\{ (K^{1},K^{2})\in\mathcal{I}(b)\right\} \gamma_{n}^{N}(1)^{2}\varphi(\zeta_{n}^{K_{n}^{1}},\zeta_{n}^{K_{n}^{2}})\right]=\prod_{p=0}^{n}\left\{ \left(\frac{1}{N_{p}}\right)^{b_{p}}\left(1-\frac{1}{N_{p}}\right)^{1-b_{p}}\right\} \mu_{b}(\varphi)\label{eq:second_moment_formula_disintegrated}
\end{equation}
and
\begin{equation}
E\left\{ \gamma_{n}^{N}(\varphi)^{2}\right\} =\sum_{b\in B_{n}}\prod_{p=0}^{n}\left(\frac{1}{N_{p}}\right)^{b_{p}}\left(1-\frac{1}{N_{p}}\right)^{1-b_{p}}\mu_{b}(\varphi^{\otimes2}).\label{eq:second_moment_formula}
\end{equation}
\end{lemma}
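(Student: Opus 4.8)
The plan is to establish the disintegrated identity (\ref{eq:second_moment_formula_disintegrated}) first and then obtain (\ref{eq:second_moment_formula}) from it almost for free. The sets $\{\mathcal{I}(b):b\in B_{n}\}$ partition $[N_{0:n}]^{2}$, so $\sum_{b\in B_{n}}\mathbb{I}\{(K^{1},K^{2})\in\mathcal{I}(b)\}=1$; summing the left-hand side of (\ref{eq:second_moment_formula_disintegrated}) over $b$ collapses to $E[\gamma_{n}^{N}(1)^{2}\varphi(\zeta_{n}^{K_{n}^{1}},\zeta_{n}^{K_{n}^{2}})]$. Specialising to product test functions $\varphi=\psi^{\otimes2}$ and using that, conditionally on $(A,\zeta)$, the indices $K_{n}^{1}$ and $K_{n}^{2}$ are independent and uniform on $[N_{n}]$, this equals $E\{\gamma_{n}^{N}(1)^{2}\eta_{n}^{N}(\psi)^{2}\}=E\{\gamma_{n}^{N}(\psi)^{2}\}$. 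Since summing the right-hand side of (\ref{eq:second_moment_formula_disintegrated}) over $b$ reproduces the right-hand side of (\ref{eq:second_moment_formula}), the second identity follows. So the real work is (\ref{eq:second_moment_formula_disintegrated}).

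For (\ref{eq:second_moment_formula_disintegrated}) I would argue by induction on $n$, peeling off the final time step. Write $b=(b_{0:n-1},b_{n})$ and regard the left-hand side as a functional $Q_{b}^{(n)}(\varphi)$ attached to a filter run to time $n$. The base case $n=0$ is a direct computation: $\gamma_{0}^{N}(1)=1$, the indices $K_{0}^{1},K_{0}^{2}$ are independent and uniform on $[N_{0}]$, and the $\zeta_{0}^{i}$ are i.i.d.\ from $M_{0}$, so the cases $b=(0)$ and $b=(1)$ yield $(1-1/N_{0})\mu_{b}(\varphi)$ and $(1/N_{0})\mu_{b}(\varphi)$ respectively, the former from the $N_{0}(N_{0}-1)$ ordered off-diagonal index pairs and the latter from the $N_{0}$ diagonal pairs.

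For the inductive step I would condition on $\mathcal{F}_{n-1}=\sigma(\zeta_{0:n-1},A_{0:n-2})$, factor out the $\mathcal{F}_{n-1}$-measurable quantity $\gamma_{n-1}^{N}(1)^{2}$, and integrate out the time-$n$ layer $(A_{n-1},\zeta_{n})$ together with the top indices $K_{n}^{1},K_{n}^{2}$, splitting according to $b_{n}$. When $b_{n}=0$ we have $K_{n}^{1}\neq K_{n}^{2}$, the two time-$n$ particles arise from conditionally independent $M_{n}$ transitions, the parents $K_{n-1}^{1}=A_{n-1}^{K_{n}^{1}}$ and $K_{n-1}^{2}=A_{n-1}^{K_{n}^{2}}$ are conditionally independent $\mathcal{C}(G_{n-1}(\zeta_{n-1}^{1}),\ldots,G_{n-1}(\zeta_{n-1}^{N_{n-1}}))$ draws, and the $N_{n}(N_{n}-1)$ ordered index pairs contribute a factor $1-1/N_{n}$. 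When $b_{n}=1$ we have $K_{n}^{1}=K_{n}^{2}$, the shared particle produces a diagonal $M_{n}$-integral, the rule defining $K^{2}$ re-draws $K_{n-1}^{2}$ afresh from the same categorical law independently of $K_{n-1}^{1}=A_{n-1}^{K_{n}^{1}}$, and the $N_{n}$ diagonal choices contribute a factor $1/N_{n}$. In both cases the two parent indices are conditionally i.i.d.\ categorical, and the key algebraic point is that the weight $\eta_{n-1}^{N}(G_{n-1})^{2}=\{\gamma_{n}^{N}(1)/\gamma_{n-1}^{N}(1)\}^{2}$ cancels the $\eta_{n-1}^{N}(G_{n-1})^{2}$ in the denominators of the two categorical probabilities, leaving the factor $N_{n-1}^{-2}G_{n-1}(\zeta_{n-1}^{a})G_{n-1}(\zeta_{n-1}^{a'})$ and hence a uniform selection of parents over $[N_{n-1}]^{2}$. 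This reduces $Q_{b}^{(n)}(\varphi)$ to $(1/N_{n})^{b_{n}}(1-1/N_{n})^{1-b_{n}}Q_{b_{0:n-1}}^{(n-1)}(\tilde\varphi)$, with $\tilde\varphi(x,x')=G_{n-1}(x)G_{n-1}(x')\int\varphi(u,u')M_{n}(x,\mathrm{d}u)M_{n}(x',\mathrm{d}u')$ when $b_{n}=0$ and $\tilde\varphi(x,x')=G_{n-1}(x)G_{n-1}(x')\int\varphi(u,u)M_{n}(x,\mathrm{d}u)$ when $b_{n}=1$. A matching one-step recursion $\mu_{b}(\varphi)=\mu_{b_{0:n-1}}(\tilde\varphi)$ holds for the measures, obtained by integrating out $(X_{n},X_{n}')$ in their defining Markov-chain recipe; invoking the induction hypothesis then closes the argument.

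The main obstacle is the careful bookkeeping of the conditional law of $(K^{1},K^{2})$ as the top step is peeled off, and in particular the $b_{n}=1$ case: one must verify that the categorical re-draw of $K_{n-1}^{2}$ is conditionally independent of the genuine ancestor $K_{n-1}^{1}$ given $\mathcal{F}_{n-1}$, and that after the cancellation of the $\eta_{n-1}^{N}(G_{n-1})$ factors it reproduces exactly the fresh $G_{n-1}$-weighted lineage demanded by the recipe defining $\mu_{b}$. Tracking which pieces of randomness are conditionally independent given $\mathcal{F}_{n-1}$, and checking that the reduced test function $\tilde\varphi$ still lies in $\mathcal{L}(\mathcal{X}^{\otimes2})$ so that the induction hypothesis applies, is where the care lies; the remaining manipulations are routine.
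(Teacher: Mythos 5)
Your proof is correct, but it reaches the core identity (\ref{eq:second_moment_formula_disintegrated}) by a genuinely different route from the paper. Where you agree: deducing (\ref{eq:second_moment_formula}) by summing over the partition $\{\mathcal{I}(b):b\in B_{n}\}$ of $[N_{0:n}]^{2}$ and using that $K_{n}^{1},K_{n}^{2}$ are conditionally i.i.d.\ uniform given $(A,\zeta)$ is exactly the paper's strategy (it is the same manipulation that yields (\ref{eq:gamma_n^N_mu_identity})). Where you differ: the paper's supplementary proof of (\ref{eq:second_moment_formula_disintegrated}) is not an induction but a change-of-measure argument, identifying the $\gamma_{n}^{N}(1)^{2}$-weighted law of $(A,\zeta,K^{1},K^{2})$ on the event $\{(K^{1},K^{2})\in\mathcal{I}(b)\}$ with the law of a \emph{doubly conditional} sequential Monte Carlo algorithm in which two reference lineages with coincidence pattern $b$ are held fixed and distributed according to $\mu_{b}$, the factor $\prod_{p=0}^{n}(1/N_{p})^{b_{p}}(1-1/N_{p})^{1-b_{p}}$ arising as the probability that the tracing variables realize the pattern $b$; this treats all $b$ uniformly and, as the paper notes in its final remarks, extends to arbitrary positive integer moments of $\gamma_{n}^{N}(\varphi)$ by fixing additional lineages. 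Your backward induction, peeling off the final time step, is more elementary, and its load-bearing steps all check out: conditionally on $\mathcal{F}_{n-1}$ the two parent indices (the ancestor $A_{n-1}^{K_{n}^{1}}$, together with either $A_{n-1}^{K_{n}^{2}}$ when $b_{n}=0$ or the fresh categorical redraw of $K_{n-1}^{2}$ when $b_{n}=1$) are indeed i.i.d.\ categorical given $\mathcal{F}_{n-1}$; the weight $\eta_{n-1}^{N}(G_{n-1})^{2}=\{\gamma_{n}^{N}(1)/\gamma_{n-1}^{N}(1)\}^{2}$ exactly cancels the two categorical normalizations, converting weighted parent selection into uniform selection over $[N_{n-1}]^{2}$ with $G_{n-1}\otimes G_{n-1}$ absorbed into $\tilde\varphi$; and the measure recursion $\mu_{b}(\varphi)=\mu_{b_{0:n-1}}(\tilde\varphi)$ holds with the correct asymmetry in the $b_{n}=1$ case (the shared particle descends from the parent indexed by $K_{n-1}^{1}$, matching $X_{n}'=X_{n}\sim M_{n}(X_{n-1},\cdot)$ in the definition of $\mu_{b}$). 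What the paper's argument buys is uniformity across all $b$ and painless generalization to higher moments; what yours buys is a self-contained proof requiring only conditional-distribution bookkeeping and the tower property, at the cost of an induction hypothesis that must range over all horizon-$(n-1)$ models and all bounded test functions, the observation that the backward tracing rule below level $n-1$ coincides with that of the truncated filter, and a case analysis that would grow combinatorially if higher moments were wanted.
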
The proof of Lemma~\ref{lem:Q2_change} is in the supplement
and uses an argument involving the law of a doubly conditional sequential
Monte Carlo algorithm \citep[see also][]{andrieu2013uniform}. The
identity (\ref{eq:second_moment_formula}) was first proved by \citet{cerou2011nonasymptotic}
in the case where $N_{p}=N$. Our proof technique is different: we
obtain (\ref{eq:second_moment_formula}) as a consequence of (\ref{eq:second_moment_formula_disintegrated}).
The appearance of $K^{1},K^{2}$ in (\ref{eq:second_moment_formula_disintegrated})
is also central to the justification of our variance estimators below.

\subsection{Asymptotic variances}

For each $p\in\{0,\ldots,n\}$, we denote by $e_{p}\in B_{n}$ the
vector with a $1$ in position $p$ and zeros elsewhere. As in Remark~\ref{rem:mu0},
$0_{n}$ denotes the zero string in $B_{n}$. The following result
builds upon Lemmas~\ref{LEM:Q1_CHANGE}--\ref{lem:Q2_change}. It
shows that a particular subset of the measures $\{\mu_{b}:b\in B_{n}\}$,
namely $\mu_{0_{n}}$ and $\{\mu_{e_{p}}:p=0,\ldots,n\}$, appear
in the asymptotic variances.

\begin{lemma}\label{lem:asymptotic_variances}For any $\varphi\in\mathcal{L}(\mathcal{X})$,
define 
\begin{equation}
v_{p,n}(\varphi)=\frac{\mu_{e_{p}}(\varphi^{\otimes2})-\mu_{0_{n}}(\varphi^{\otimes2})}{\gamma_{n}(1)^{2}},\qquad p\in\{0,\ldots,n\}.\label{eq:vpndefn}
\end{equation}
Then $N\mathrm{var}\left\{ \gamma_{n}^{N}(\varphi)/\gamma_{n}(1)\right\} \to\sum_{p=0}^{n}c_{p}^{-1}v_{p,n}(\varphi)$
and 
\begin{equation}
NE\left[\left\{ \eta_{n}^{N}(\varphi)-\eta_{n}(\varphi)\right\} ^{2}\right]\to\sum_{p=0}^{n}c_{p}^{-1}v_{p,n}(\varphi-\eta_{n}(\varphi)).\label{eq:mseetan}
\end{equation}
 \end{lemma}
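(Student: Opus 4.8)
The plan is to derive the asymptotic variance of $\gamma_n^N(\varphi)/\gamma_n(1)$ directly from the exact second-moment formula in Lemma~\ref{lem:Q2_change}, and then transfer the result to $\eta_n^N$ by a standard delta-method-style argument. First I would note that $\mathrm{var}\{\gamma_n^N(\varphi)\} = E\{\gamma_n^N(\varphi)^2\} - \gamma_n(\varphi)^2$, and by the lack-of-bias property (Lemma~\ref{LEM:Q1_CHANGE}) together with Remark~\ref{rem:mu0}, we have $\gamma_n(\varphi)^2 = \mu_{0_n}(\varphi^{\otimes2})$. Subtracting the $b = 0_n$ term from the sum in \eqref{eq:second_moment_formula} gives
\[
\mathrm{var}\{\gamma_n^N(\varphi)\} = \sum_{b \in B_n \setminus \{0_n\}} \prod_{p=0}^{n}\left(\frac{1}{N_p}\right)^{b_p}\left(1-\frac{1}{N_p}\right)^{1-b_p}\mu_b(\varphi^{\otimes2}) - \left[1 - \prod_{p=0}^{n}\left(1-\frac{1}{N_p}\right)\right]\mu_{0_n}(\varphi^{\otimes2}).
\]
The key observation is that each $b$ with exactly $k$ ones carries a coefficient of order $N^{-k}$, since $N_p = \lceil c_p N \rceil \sim c_p N$. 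After multiplying through by $N$, only the strings $b$ with a single one, i.e.\ $b = e_p$, survive in the limit; all strings with two or more ones contribute terms of order $N^{-1}$ or smaller.

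The main technical step is therefore to collect the $O(N^{-1})$ contributions carefully. For $b = e_p$, the product coefficient is $\frac{1}{N_p}\prod_{q \neq p}(1 - 1/N_q)$, which times $N$ tends to $c_p^{-1}$ since $N/N_p \to c_p^{-1}$ and $\prod_{q\neq p}(1-1/N_q) \to 1$. The correction term, $1 - \prod_{p}(1 - 1/N_p) = \sum_{p} 1/N_p + O(N^{-2})$, times $N$ tends to $\sum_p c_p^{-1}$, and it is paired with $\mu_{0_n}(\varphi^{\otimes2})$. Combining these, and dividing by $\gamma_n(1)^2$, yields
\[
N\,\mathrm{var}\{\gamma_n^N(\varphi)/\gamma_n(1)\} \to \sum_{p=0}^{n} c_p^{-1}\,\frac{\mu_{e_p}(\varphi^{\otimes2}) - \mu_{0_n}(\varphi^{\otimes2})}{\gamma_n(1)^2} = \sum_{p=0}^{n} c_p^{-1} v_{p,n}(\varphi),
\]
which is exactly the claimed limit. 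The bookkeeping that strings with $\geq 2$ ones vanish requires knowing that $\mu_b(\varphi^{\otimes2})$ is bounded uniformly in $N$; this holds because $\varphi$ is bounded and the $G_p$ are upper-bounded, so each $\mu_b(\varphi^{\otimes2})$ is a fixed finite constant not depending on $N$.

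For the second limit \eqref{eq:mseetan}, I would exploit the identity $\eta_n^N(\varphi) = \gamma_n^N(\varphi)/\gamma_n^N(1)$ and write, with $\bar\varphi = \varphi - \eta_n(\varphi)$,
\[
\eta_n^N(\varphi) - \eta_n(\varphi) = \frac{\gamma_n^N(\bar\varphi)}{\gamma_n^N(1)} = \frac{\gamma_n^N(\bar\varphi)/\gamma_n(1)}{\gamma_n^N(1)/\gamma_n(1)}.
\]
By Proposition~\ref{prop:asconv-1}(2), the denominator converges almost surely to $1$, while the numerator is $O_p(N^{-1/2})$ with $E\{\gamma_n^N(\bar\varphi)/\gamma_n(1)\} = \gamma_n(\bar\varphi)/\gamma_n(1) = 0$ by Lemma~\ref{LEM:Q1_CHANGE} and the definition of $\bar\varphi$. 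A routine argument (writing the ratio as the numerator times $1 + o_p(1)$ and controlling the remainder using boundedness of $\varphi$ to secure uniform integrability of the rescaled square) shows that $N E[\{\eta_n^N(\varphi) - \eta_n(\varphi)\}^2]$ has the same limit as $N\,\mathrm{var}\{\gamma_n^N(\bar\varphi)/\gamma_n(1)\}$. Applying the first part with $\varphi$ replaced by $\bar\varphi$ then gives $\sum_{p=0}^n c_p^{-1} v_{p,n}(\varphi - \eta_n(\varphi))$.

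The hard part will be making the passage from $L^2$ convergence of $\gamma_n^N(\bar\varphi)/\gamma_n(1)$ to the $L^2$ convergence of the ratio $\eta_n^N(\varphi) - \eta_n(\varphi)$ rigorous, since this is where one must rule out any residual contribution from the fluctuations of the denominator $\gamma_n^N(1)$ to the second moment. The clean way to handle this is to control the remainder term uniformly: using the boundedness of $\varphi$ and the upper boundedness of the $G_p$, one secures a uniform integrability (or bounded higher-moment) estimate for $N\{\eta_n^N(\varphi) - \eta_n(\varphi)\}^2$ that allows the almost-sure limit of the integrand to be promoted to convergence of expectations. Everything else is the elementary expansion of $(1 - 1/N_p)$ products and the identification of the surviving order-$N^{-1}$ terms, which I regard as routine once the boundedness of the $\mu_b$ is in hand.
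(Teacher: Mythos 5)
Your treatment of the first limit is correct and complete, and it is the natural use of the paper's machinery: subtract $\mu_{0_n}(\varphi^{\otimes2})=\gamma_n(\varphi)^2$ (Remark~\ref{rem:mu0}, with lack-of-bias from Lemma~\ref{LEM:Q1_CHANGE}) from the exact formula (\ref{eq:second_moment_formula}) of Lemma~\ref{lem:Q2_change}, note that each $\mu_b(\varphi^{\otimes2})$ is a finite constant not depending on $N$, that any $b$ with two or more ones has coefficient in $\mathcal{O}(N^{-2})$, that $N$ times the coefficient $N_p^{-1}\prod_{q\neq p}(1-1/N_q)$ of $\mu_{e_p}(\varphi^{\otimes2})$ converges to $c_p^{-1}$, and that $N\bigl[1-\prod_{p}(1-1/N_p)\bigr]\to\sum_{p}c_p^{-1}$. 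This gives the stated limit for $N\,\mathrm{var}\bigl\{\gamma_n^N(\varphi)/\gamma_n(1)\bigr\}$ with no gaps.

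The second limit is where your proposal breaks down: the step you describe as ``routine'' is in fact the entire difficulty, and your sketch of it does not go through as written. Set $\bar\varphi=\varphi-\eta_n(\varphi)$, $S_N=\eta_n^N(\varphi)-\eta_n(\varphi)$, $T_N=\gamma_n^N(\bar\varphi)/\gamma_n(1)$ and $D_N=\{\gamma_n^N(1)-\gamma_n(1)\}/\gamma_n(1)$, so that the exact identity $S_N=T_N-S_N D_N$ holds; what must be proved is $N E[S_N^2D_N^2]\to0$ (the cross term then dies by Cauchy--Schwarz). The only bounds your argument makes available are $|S_N|\le 2\sup_x|\varphi(x)|$ and $E[D_N^2]=\mathcal{O}(N^{-1})$, and these yield only $NE[S_N^2D_N^2]=\mathcal{O}(1)$ --- bounded, not vanishing --- so the contribution of the denominator's fluctuations to the second moment is exactly what is \emph{not} ruled out. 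Closing this requires a genuinely stronger ingredient, for instance a fourth-moment estimate $E[D_N^4]=\mathcal{O}(N^{-2})$ for the unnormalized approximation, after which $NE[S_N^2D_N^2]\le N\bigl(E[S_N^4]E[D_N^4]\bigr)^{1/2}=\mathcal{O}(N^{-1/2})$; such $L^4$ bounds are standard but need their own inductive proof over time steps --- they do not follow ``from the boundedness of $\varphi$ and the upper boundedness of the $G_p$'' as you assert. Your fallback, promoting ``the almost-sure limit of the integrand'' to convergence of expectations, is also confused: $NS_N^2$ has no almost-sure limit (the rescaled error converges only in distribution), so uniform integrability would have to be paired with a CLT you have not established, and UI itself again demands higher moments. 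Note finally that the paper's own structure lets you sidestep all of this: Proposition~\ref{prop:asconv-1} already asserts that $N\,\mathrm{var}\bigl\{\gamma_n^N(\psi)/\gamma_n(1)\bigr\}\to\sigma_n^2(\psi)$ and $NE[\{\eta_n^N(\varphi)-\eta_n(\varphi)\}^2]\to\sigma_n^2(\varphi-\eta_n(\varphi))$ with the \emph{same} map $\sigma_n^2$; your first part identifies $\sigma_n^2(\psi)=\sum_{p=0}^{n}c_p^{-1}v_{p,n}(\psi)$ for every bounded $\psi$, and applying this with $\psi=\varphi-\eta_n(\varphi)$ gives (\ref{eq:mseetan}) immediately, with no delta-method argument at all.
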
The proof of Lemma~\ref{lem:asymptotic_variances} is
in the supplement. 

\begin{remark}\label{rem:vpn_Q_expr}In light of Lemma \ref{lem:asymptotic_variances},
the map $\sigma_{n}^{2}$ in Proposition~\ref{prop:asconv-1} satisfies
\begin{equation}
\sigma_{n}^{2}(\varphi)=\sum_{p=0}^{n}c_{p}^{-1}v_{p,n}(\varphi),\qquad\varphi\in\mathcal{L}(\mathcal{X}).\label{eq:sigma_equals_sum_v}
\end{equation}
An expression for $v_{p,n}(\varphi)$ in terms of $(M_{p},G_{p})_{0\leq p\leq n}$
is obtained by observing that if we define 
\[
Q_{p}(x_{p-1},{\rm d}x_{p})=G_{p-1}(x_{p-1})M_{p}(x_{p-1},{\rm d}x_{p}),\quad p\in\{1,\ldots,n\},
\]
and $Q_{n,n}=Id$, $Q_{p,n}=Q_{p+1}\cdots Q_{n}$ for $p\in\{0,\ldots,n-1\}$,
then $\mu_{e_{p}}(\varphi)=\gamma_{p}(Q_{p,n}(\varphi)^{2})$. In
combination with Remark~\ref{rem:mu0}, we obtain
\begin{equation}
v_{p,n}(\varphi)=\frac{\gamma_{p}(1)\gamma_{p}(Q_{p,n}(\varphi)^{2})}{\gamma_{n}(1)^{2}}-\eta_{n}(\varphi)^{2}=\frac{\eta_{p}(Q_{p,n}(\varphi)^{2})}{\eta_{p}Q_{p,n}(1)^{2}}-\eta_{n}(\varphi)^{2}.\label{eq:vpn_Q_expr}
\end{equation}

\end{remark}

\section{The estimators\label{SEC:COMPUTATION_OF_ESTS}}

\subsection{Particle approximations of each $\mu_{b}$\label{SEC:PARTICLE_APPROXIMATIONS}}

We now introduce particle approximations of the measures $\{\mu_{b}:b\in B_{n}\}$,
from which we shall subsequently derive the variance estimators. For
each $b\in B_{n}$, and $\varphi\in\mathcal{L}(\mathcal{X}^{\otimes2})$
we define
\begin{equation}
\mu_{b}^{N}(\varphi)=\left[\prod_{p=0}^{n}\left(N_{p}\right)^{b_{p}}\left(\frac{N_{p}}{N_{p}-1}\right)^{1-b_{p}}\right]\gamma_{n}^{N}(1)^{2}E\left[\mathbb{I}\left\{ (K^{1},K^{2})\in\mathcal{I}(b)\right\} \varphi(\zeta_{n}^{K_{n}^{1}},\zeta_{n}^{K_{n}^{2}})\mid A,\zeta\right].\label{eq:mubN_defn}
\end{equation}
Recalling from Section \ref{sub:Genealogical-tracing-variables} that
given $A$ and $\zeta$, $K_{n}^{1}$ and $K_{n}^{2}$ are conditionally
independent and each uniformly distributed on $[N_{n}]$, it follows
from (\ref{eq:mubN_defn}) that 
\begin{eqnarray}
\gamma_{n}^{N}(\varphi)^{2} & = & \gamma_{n}^{N}(1)^{2}\frac{1}{N_{n}^{2}}\sum_{i,j\in[N_{n}]}\varphi(\zeta_{n}^{i})\varphi(\zeta_{n}^{j})\nonumber \\
 & = & \gamma_{n}^{N}(1)^{2}\sum_{b\in B_{n}}E\left[\mathbb{I}\left\{ (K^{1},K^{2})\in\mathcal{I}(b)\right\} \varphi(\zeta_{n}^{K_{n}^{1}})\varphi(\zeta_{n}^{K_{n}^{2}})\mid A,\zeta\right]\nonumber \\
 & = & \sum_{b\in B_{n}}\left\{ \prod_{p=0}^{n}\left(\frac{1}{N_{p}}\right)^{b_{p}}\left(1-\frac{1}{N_{p}}\right)^{1-b_{p}}\right\} \mu_{b}^{N}(\varphi^{\otimes2}),\label{eq:gamma_n^N_mu_identity}
\end{eqnarray}
mirroring (\ref{eq:second_moment_formula}). This identity is complemented
by the following result.

\begin{theorem}\label{THM:MU_ B^N}For any $b\in B_{n}$ and $\varphi\in\mathcal{L}(\mathcal{X}^{\otimes2})$,
\begin{enumerate}
\item $E\left\{ \mu_{b}^{N}(\varphi)\right\} =\mu_{b}(\varphi)$ for all
$N\geq1$,
\item $\sup_{N\geq1}NE\left[\left\{ \mu_{b}^{N}(\varphi)-\mu_{b}(\varphi)\right\} ^{2}\right]<\infty$
and hence $\mu_{b}^{N}(\varphi)\to\mu_{b}(\varphi)$ in probability. 
\end{enumerate}
\end{theorem}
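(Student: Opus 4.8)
The plan is to prove the two claims separately, with the unbiasedness claim~(i) essentially free from the work already done, and the $L^2$-boundedness claim~(ii) requiring the real effort.

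For claim~(i), I would start directly from the definition~\eqref{eq:mubN_defn} of $\mu_b^N(\varphi)$. Taking the expectation and using the tower property with the inner conditional expectation given $(A,\zeta)$, the product of the $(N_p)^{b_p}(N_p/(N_p-1))^{1-b_p}$ factors is deterministic and pulls out, leaving exactly
\[
E\left\{\mu_b^N(\varphi)\right\}=\left[\prod_{p=0}^{n}\left(N_p\right)^{b_p}\left(\frac{N_p}{N_p-1}\right)^{1-b_p}\right]E\left[\mathbb{I}\left\{(K^1,K^2)\in\mathcal{I}(b)\right\}\gamma_n^N(1)^2\varphi(\zeta_n^{K_n^1},\zeta_n^{K_n^2})\right].
\]
Now I invoke Lemma~\ref{lem:Q2_change}, specifically the disintegrated identity~\eqref{eq:second_moment_formula_disintegrated}, which evaluates precisely this expectation as $\prod_{p=0}^n (1/N_p)^{b_p}(1-1/N_p)^{1-b_p}\mu_b(\varphi)$. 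The two products are exact reciprocals of one another, since $(N_p)^{-1}$ cancels $N_p$ and $(1-1/N_p)=(N_p-1)/N_p$ cancels $N_p/(N_p-1)$, so everything collapses to $\mu_b(\varphi)$. This is the motivation behind the particular normalizing constants chosen in~\eqref{eq:mubN_defn}: they are engineered exactly so that $\mu_b^N$ is unbiased.

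Claim~(ii) is the substantive part. Since in-probability convergence follows from $L^2$-boundedness together with unbiasedness via Chebyshev (a bounded second moment of a centered sequence forces convergence to $0$ in probability along the rescaling $N^{1/2}$), it suffices to establish $\sup_N N\,E[\{\mu_b^N(\varphi)-\mu_b(\varphi)\}^2]<\infty$. The natural route is to control $E\{\mu_b^N(\varphi)^2\}$ and combine it with $E\{\mu_b^N(\varphi)\}=\mu_b(\varphi)$ to bound the variance. I would expand $\mu_b^N(\varphi)^2$ using the definition, which introduces a second, independent pair of tracing indices; concretely one works with four genealogical lineages $(K^1,K^2,K^3,K^4)$ and classifies their coincidence patterns by an extended binary-string bookkeeping analogous to $\mathcal{I}(b)$. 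Taking expectations then produces a sum over such patterns of terms involving four-fold Feynman--Kac path measures weighted by products of $1/N_p$ and $(1-1/N_p)$ factors. The key structural fact I would aim to exploit is that each genealogical coincidence event forced by $b$ contributes a factor of order $1/N$ (one such factor per coordinate where $b_p=1$), and that the leading-order term in $E\{\mu_b^N(\varphi)^2\}$ matches $\mu_b(\varphi)^2$ up to a correction of order $1/N$; the $O(1/N)$ remainder, once multiplied by $N$, stays bounded uniformly in $N$. Boundedness of the $G_p$ and of $\varphi$ is what keeps all these path-measure terms finite and uniformly controlled.

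The main obstacle I anticipate is the combinatorial and measure-theoretic bookkeeping in the variance bound: organizing the coincidence patterns of the enlarged collection of lineages, verifying that the normalizing-constant products correctly pair with the probabilities of each pattern (so that the leading terms cancel against $\mu_b(\varphi)^2$ and no spurious $O(1)$ term survives after multiplying by $N$), and confirming that every residual term carries at least one extra power of $1/N$. I would expect this to hinge on a careful second-moment computation in the spirit of Lemma~\ref{lem:Q2_change}, likely via a doubly- or quadruply-conditional sequential Monte Carlo representation, together with uniform-in-$N$ bounds that follow from $\min_p N_p\geq 2$ and the assumed boundedness of the $G_p$. The unbiasedness in claim~(i), by contrast, is immediate once Lemma~\ref{lem:Q2_change} is in hand, so the entire difficulty is concentrated in establishing the uniform $L^2$ bound.
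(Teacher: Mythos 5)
Your part 1 is correct and complete: unbiasedness follows from the tower property (using that $\gamma_n^N(1)^2$ is $(A,\zeta)$-measurable, so it can be moved inside the conditional expectation) together with the disintegrated identity (\ref{eq:second_moment_formula_disintegrated}) of Lemma~\ref{lem:Q2_change}, the normalizing product in (\ref{eq:mubN_defn}) being the exact reciprocal of the probability factor $\prod_{p}(1/N_p)^{b_p}(1-1/N_p)^{1-b_p}$. This is the only sensible route, and your observation that the constants are engineered precisely for unbiasedness is exactly right.

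Part 2, however, is where all of the theorem's content lies, and your proposal does not prove it: it is a plan whose decisive step is anticipated rather than established. Concretely, you would need (a) a four-lineage analogue of (\ref{eq:second_moment_formula_disintegrated}), i.e.\ an exact disintegrated formula for $E\bigl[\gamma_n^N(1)^4\,\mathbb{I}\{(K^1,K^2)\in\mathcal{I}(b)\}\,\mathbb{I}\{(K^3,K^4)\in\mathcal{I}(b)\}\,\varphi(\zeta_n^{K_n^1},\zeta_n^{K_n^2})\,\varphi(\zeta_n^{K_n^3},\zeta_n^{K_n^4})\bigr]$ where $(K^3,K^4)$ is a conditionally i.i.d.\ copy of $(K^1,K^2)$ given $(A,\zeta)$, and (b) the verification that, after multiplying by the squared normalizing constant, the pattern with no cross-pair coincidences contributes $\mu_b(\varphi)^2\{1+\mathcal{O}(1/N)\}$ while every other pattern contributes $\mathcal{O}(1/N)$. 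Neither is carried out, and two details you gloss over are exactly where the work sits. First, the bookkeeping cannot be done with binary strings: coincidences \emph{across} the two pairs must be tracked, so the index set becomes sequences of partitions of $\{1,2,3,4\}$, with limiting measures on $\mathcal{X}^{\otimes4}$ whose factorization into $\mu_b\otimes\mu_b$ must be checked for the non-interacting pattern. Second, the leading-term cancellation is not automatic: at a time with $b_p=0$ the all-distinct pattern has probability $(1-1/N_p)(1-2/N_p)(1-3/N_p)$ while the squared normalizer contributes $(1-1/N_p)^{-2}$, so one must verify the product over $p$ is $1+\mathcal{O}(1/N)$, and that the finitely many interacting patterns each carry an uncompensated $1/N_p$ with the associated measures bounded uniformly (this is where boundedness of the $G_p$ and $\varphi$ enters). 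This generalized moment formula is the technical core of the paper's supplementary proof --- the paper itself signals this in its closing remarks, noting that the technique of Lemma~\ref{lem:Q2_change} generalizes to arbitrary positive integer moments --- so your strategy points in the right direction, but as written the proof of part 2 is missing, not merely unpolished.
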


The proof of Theorem~\ref{THM:MU_ B^N} is in the supplement. Although
(\ref{eq:mubN_defn}) can be computed in principle from the output
of Algorithm \ref{alg:bpf} without the need for any further simulation,
the conditional expectation in (\ref{eq:mubN_defn}) involves a summation
over all binary strings in $\mathcal{I}(b)$, so calculating $\mu_{b}^{N}(\varphi^{\otimes2})$
in practice may be computationally expensive. Fortunately, relatively
simple and computationally efficient expressions are available for
$\mu_{b}^{N}(\varphi^{\otimes2})$ in the cases $b=0_{n}$ and $b=e_{p}$,
and those are the only ones required to construct our variance estimators.

\subsection{Variance estimators\label{sub:Variance-estimators}}

Our next objective is to explain how (\ref{eq:V_n^N_defn_front})
is related to the measures $\mu_{b}^{N}$ and to introduce another
family of estimators associated with the individual terms in (\ref{eq:sigma_equals_sum_v}).
We need the following technical lemma.

\begin{lemma}\label{lem:eve_identical_events}The following identity
of events holds: $\left\{ E_{n}^{K_{n}^{1}}\neq E_{n}^{K_{n}^{2}}\right\} =\left\{ (K^{1},K^{2})\in\mathcal{I}(0_{n})\right\} $.\end{lemma}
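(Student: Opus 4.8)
The plan is to prove the set equality by showing both inclusions, working directly from the recursive definitions of $K^1$, $K^2$, and the Eve indices $E_{0:n}$. The key conceptual link is that the Eve index $E_n^i$ records the time-$0$ ancestor of $\zeta_n^i$, obtained by following the ancestral lineage back through the $A$ variables, while $\mathcal{I}(0_n)$ is the event that the two tracing lineages $K^1$ and $K^2$ never coincide in any coordinate, i.e. $K_p^1 \neq K_p^2$ for all $p \in \{0,\ldots,n\}$.

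First I would establish the forward direction by tracking when $K^1$ and $K^2$ can collide. Recall that $K_p^1 = A_p^{K_{p+1}^1}$ for all $p$, so $K^1$ follows the genealogy exactly; consequently, following $E$ back along $K^1$ gives $E_n^{K_n^1} = E_0^{K_0^1} = K_0^1$. The subtlety is $K^2$: it follows the ancestral recursion $K_p^2 = A_p^{K_{p+1}^2}$ as long as $K_{p+1}^2 \neq K_{p+1}^1$, but resets via an independent categorical draw whenever $K_{p+1}^2 = K_{p+1}^1$. The central observation I would formalize is that the two lineages coincide at time $0$ (i.e. $K_0^1 = K_0^2$, equivalently $E_n^{K_n^1} = E_n^{K_n^2}$) \emph{if and only if} there is some coordinate $p$ at which $K_p^1 = K_p^2$. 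In one direction, if $K_p^1 = K_p^2$ for some $p$, then for all smaller indices $q < p$ both lineages evolve by the \emph{same} recursion $K_q^\cdot = A_q^{K_{q+1}^\cdot}$ (no reset can occur once they have merged, since the reset is triggered by coincidence at $p+1$, not by coincidence itself propagating forward), so they stay merged down to time $0$, giving $K_0^1 = K_0^2$. The main obstacle is to argue the contrapositive cleanly: I must verify that whenever $K_{p+1}^1 = K_{p+1}^2$, the definition forces $K_p^2 = A_p^{K_{p+1}^2} = A_p^{K_{p+1}^1} = K_p^1$ — but here the definition says the reset happens \emph{precisely} when $K_{p+1}^2 = K_{p+1}^1$, so I must check carefully whether the merged-lineage argument and the reset mechanism are consistent, paying attention to the exact logical form of the conditional in the definition of $K^2$.

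For the reverse direction, I would show that if $(K^1,K^2) \notin \mathcal{I}(0_n)$, i.e. $K_p^1 = K_p^2$ for some $p$, then by the merging argument above the lineages agree at time $0$, so $E_n^{K_n^1} = K_0^1 = K_0^2 = E_n^{K_n^2}$, placing us outside the event $\{E_n^{K_n^1} \neq E_n^{K_n^2}\}$. Conversely, if the lineages never coincide, then $K_0^1 \neq K_0^2$, and since $E_n^{K_n^j} = E_0^{K_0^j} = K_0^j$ for $j \in \{1,2\}$, we get $E_n^{K_n^1} \neq E_n^{K_n^2}$. Combining the two inclusions yields the claimed identity of events.

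The step I expect to be the genuine crux is the merging claim: once $K^1$ and $K^2$ agree at some coordinate $p+1$, they must agree at coordinate $p$ and hence at all earlier times. This requires careful bookkeeping of the reset rule, because the reset in the definition of $K^2$ is triggered exactly by coincidence with $K^1$, so at first glance a coincidence might cause a \emph{divergence} rather than a merge. Resolving this is the heart of the proof: I would argue that although the categorical reset introduces fresh randomness, the relevant structural fact for this \emph{event}-level identity is that the Eve index is determined by the time-$0$ index $K_0^\cdot$, and that the coincidence pattern backward from the first collision is deterministic given the collision — so the equivalence between "a collision occurs somewhere" and "the time-$0$ indices agree" holds pathwise regardless of the reset draw. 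Making this pathwise equivalence precise, and reconciling it with the reset mechanism, is where I would spend the most care.
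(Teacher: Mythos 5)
Your forward direction is fine and matches the paper: on $\mathcal{I}(0_n)$ neither chain ever resets, both follow the genealogy, so $E_n^{K_n^j}=K_0^j$ for $j=1,2$, and these differ because there is no coincidence at time $0$. The genuine gap is in the reverse direction: your ``merging'' claim is false, and you cannot repair it in the form stated. The definition of $K^2$ says $K_p^2=A_p^{K_{p+1}^2}$ \emph{precisely when} $K_{p+1}^2\neq K_{p+1}^1$, and $K_p^2$ is a fresh categorical draw precisely when $K_{p+1}^2=K_{p+1}^1$. So a coincidence at time $p+1$ does not make the two chains evolve by the same recursion below $p+1$; it forces a reset of $K^2$ at time $p$, and the fresh draw need not equal $K_p^1$. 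Concretely, take $n=1$ and $K_1^1=K_1^2$: then $K_0^2\sim\mathcal{C}(G_0(\zeta_0^1),\ldots,G_0(\zeta_0^{N_0}))$, which differs from $K_0^1=A_0^{K_1^1}$ with positive probability, yet $E_1^{K_1^1}=E_1^{K_1^2}$ trivially. Hence both pathwise facts you lean on fail: ``a collision occurs somewhere'' does not imply $K_0^1=K_0^2$, and $E_n^{K_n^2}=K_0^2$ is simply not true once a break has occurred, because after a break $K^2$ is no longer an ancestral lineage of $\zeta_n^{K_n^2}$, so its time-$0$ coordinate has nothing to do with the Eve index. The event $\{K_0^1=K_0^2\}$ is strictly smaller than $\{(K^1,K^2)\notin\mathcal{I}(0_n)\}$, so your proposed equivalence is wrong as an identity of events.

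The correct argument (the paper's) never tracks $K^2$ below the collision at all. Define the genealogical ancestor indices $B_{n-1}^i=A_{n-1}^i$ and $B_{p-1}^i=A_{p-1}^{B_p^i}$, and note the purely genealogical identity $E_n^i=E_p^{B_p^i}$ for all $p$. On $\{(K^1,K^2)\notin\mathcal{I}(0_n)\}$ let $\tau=\max\{p:K_p^1=K_p^2\}$ (if $\tau=n$ the conclusion is immediate). Since there is no coincidence at any time strictly above $\tau$, the recursion for $K^2$ follows the genealogy at every step down to time $\tau$, giving $K_\tau^2=B_\tau^{K_n^2}$, and likewise $K_\tau^1=B_\tau^{K_n^1}$. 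The coincidence at $\tau$ then yields $B_\tau^{K_n^1}=B_\tau^{K_n^2}$, hence $E_n^{K_n^1}=E_\tau^{B_\tau^{K_n^1}}=E_\tau^{B_\tau^{K_n^2}}=E_n^{K_n^2}$, \emph{regardless} of what the resets produce at times below $\tau$. This is the structural fact your proposal is missing: equality of the Eve indices is decided at the last collision time by the genealogy, not at time $0$ by the traced indices $K_0^1,K_0^2$.
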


The proof is in the appendix. Combined with the fact that given $(A,\zeta)$,
$K_{n}^{1},K_{n}^{2}$ are independent and identically distributed
according to the uniform distribution on $[N_{n}]$, we have
\begin{equation}
E\left[\mathbb{I}\left\{ (K^{1},K^{2})\in\mathcal{I}(0_{n})\right\} \varphi(\zeta_{n}^{K_{n}^{1}},\zeta_{n}^{K_{n}^{2}})\mid A,\zeta\right]=N_{n}^{-2}\sum_{i,j:E_{n}^{i}\neq E_{n}^{j}}\varphi(\zeta_{n}^{i})\varphi(\zeta_{n}^{j}),\label{eq:eveeqn}
\end{equation}
and therefore we arrive at the following equivalent of (\ref{eq:V_n^N_defn_front}),
written in terms of $\mu_{0_{n}}^{N}$,
\begin{eqnarray}
V_{n}^{N}(\varphi) & = & \eta_{n}^{N}(\varphi)^{2}-\frac{\mu_{0_{n}}^{N}(\varphi^{\otimes2})}{\gamma_{n}^{N}(1)^{2}}.\label{eq:VnNdefn}
\end{eqnarray}
Detailed pseudocode for computing $V_{n}^{N}(\varphi)$ in $\mathcal{O}(N)$
time and space upon running Algorithm~\ref{alg:bpf} is provided
in the supplement.

Mirroring (\ref{eq:vpndefn}), we now define
\[
v_{p,n}^{N}(\varphi)=\frac{\mu_{e_{p}}^{N}(\varphi^{\otimes2})-\mu_{0_{n}}^{N}(\varphi^{\otimes2})}{\gamma_{n}^{N}(1)^{2}},\quad p\in\{0,\ldots,n\},\quad\quad v_{n}^{N}(\varphi)=\sum_{p=0}^{n}c_{p}^{-1}v_{p,n}^{N}(\varphi).
\]
Detailed pseudocode for computing each $v_{p,n}^{N}(\varphi)$ and
$v_{n}^{N}(\varphi)$ with time and space complexity in $\mathcal{O}(Nn)$
time upon running Algorithm~\ref{alg:bpf} is provided in the supplement.
The time complexity is the same as that of running Algorithm~\ref{alg:bpf},
but the space complexity is larger. Empirically, we have found that
$NV_{n}^{N}(\varphi)$ is very similar to $v_{n}^{N}(\varphi)$ as
an estimator of $\sigma_{n}^{2}(\varphi)$ when $N$ is large enough
that they are both accurate, and hence may be preferable due to its
reduced space complexity.

\begin{theorem}\label{thm:v_pn^N}For any $\varphi\in\mathcal{L}(\mathcal{X})$,
\begin{enumerate}
\item $E\left\{ \gamma_{n}^{N}(1)^{2}v_{p,n}^{N}(\varphi)\right\} =\gamma_{n}(1)^{2}v_{p,n}(\varphi)$
for all $N\geq1$,
\item $v_{p,n}^{N}(\varphi)\to v_{p,n}(\varphi)$ and $v_{p,n}^{N}(\varphi-\eta_{n}^{N}(\varphi))\to v_{p,n}(\varphi-\eta_{n}(\varphi))$,
both in probability, 
\item $E\left\{ \gamma_{n}^{N}(1)^{2}v_{n}^{N}(\varphi)\right\} =\gamma_{n}(1)^{2}\sigma_{n}^{2}(\varphi)$
for all $N\geq1$ and $v_{n}^{N}(\varphi)\to\sigma_{n}^{2}(\varphi)$
in probability.
\end{enumerate}
\end{theorem}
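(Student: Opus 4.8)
The plan is to deduce all three parts from the unbiasedness and $L^2$-consistency of the measure approximations $\mu_b^N$ in Theorem~\ref{THM:MU_ B^N}, together with the convergence of $\gamma_n^N(1)$ and $\eta_n^N(\varphi)$ from Proposition~\ref{prop:asconv-1}. The observation that drives everything is that the denominator $\gamma_n^N(1)^2$ in $v_{p,n}^N$ cancels on multiplication, so that
\[
\gamma_n^N(1)^2 v_{p,n}^N(\varphi) = \mu_{e_p}^N(\varphi^{\otimes2}) - \mu_{0_n}^N(\varphi^{\otimes2}),
\]
which is legitimate because each $G_p$ is strictly positive, whence $\gamma_n^N(1)>0$. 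For part~1 I would take expectations and apply part~1 of Theorem~\ref{THM:MU_ B^N} with $b=e_p$ and $b=0_n$ to get
\[
E\!\left\{\gamma_n^N(1)^2 v_{p,n}^N(\varphi)\right\} = \mu_{e_p}(\varphi^{\otimes2}) - \mu_{0_n}(\varphi^{\otimes2}) = \gamma_n(1)^2 v_{p,n}(\varphi).
\]
The unbiasedness claim in part~3 then follows by summing against $c_p^{-1}$ and using $\sigma_n^2(\varphi)=\sum_{p=0}^n c_p^{-1}v_{p,n}(\varphi)$ from Remark~\ref{rem:vpn_Q_expr}.

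For the first convergence in part~2 I would invoke Slutsky's lemma: Theorem~\ref{THM:MU_ B^N} supplies $\mu_{e_p}^N(\varphi^{\otimes2})\to\mu_{e_p}(\varphi^{\otimes2})$ and $\mu_{0_n}^N(\varphi^{\otimes2})\to\mu_{0_n}(\varphi^{\otimes2})$ in probability, Proposition~\ref{prop:asconv-1} supplies $\gamma_n^N(1)^2\to\gamma_n(1)^2$ in probability with $\gamma_n(1)^2>0$, and dividing the convergent numerator by the convergent strictly positive denominator yields $v_{p,n}^N(\varphi)\to v_{p,n}(\varphi)$ in probability. The consistency in part~3 is then immediate, since $v_n^N(\varphi)$ is a fixed finite sum $\sum_{p=0}^n c_p^{-1}v_{p,n}^N(\varphi)$ of terms converging in probability, with limit $\sum_{p=0}^n c_p^{-1}v_{p,n}(\varphi)=\sigma_n^2(\varphi)$.

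I expect the main obstacle to be the second convergence in part~2, where the centring constant $\eta_n^N(\varphi)$ is itself random and converging. My strategy is to regard $c\mapsto v_{p,n}^N(\varphi-c)$ as a quadratic polynomial in $c$ with random coefficients: expanding $(\varphi-c)^{\otimes2}(x,x')=\varphi(x)\varphi(x')-c\varphi(x)-c\varphi(x')+c^2$ and using that $\mu_b^N$ is linear in its argument gives
\[
v_{p,n}^N(\varphi-c) = a_0^N + a_1^N c + a_2^N c^2,
\]
where $a_0^N,a_1^N,a_2^N$ are each a difference $\mu_{e_p}^N(\cdot)-\mu_{0_n}^N(\cdot)$ (evaluated at $\varphi^{\otimes2}$, at the two one-sided functions, and at the constant function $1$, respectively) divided by $\gamma_n^N(1)^2$. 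By Theorem~\ref{THM:MU_ B^N} and Proposition~\ref{prop:asconv-1} each $a_i^N$ converges in probability to the corresponding coefficient of $v_{p,n}(\varphi-c)$. Since $\eta_n^N(\varphi)\to\eta_n(\varphi)$ in probability and is uniformly bounded because $\varphi$ is, and since products and sums of sequences converging in probability again converge in probability, substituting the random argument $c=\eta_n^N(\varphi)$ gives $v_{p,n}^N(\varphi-\eta_n^N(\varphi))\to v_{p,n}(\varphi-\eta_n(\varphi))$ in probability. Everything else reduces to the moment results already established for $\mu_b^N$; the only genuinely new ingredient is this polynomial-in-$c$ device, which converts the substitution of a random convergent centring into an application of the continuous-mapping behaviour of convergence in probability.
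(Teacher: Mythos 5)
Your proposal is correct and follows essentially the same route as the paper: part 1 via the unbiasedness in Theorem~\ref{THM:MU_ B^N}, the first convergence in part 2 via Slutsky with $\gamma_n^N(1)^2\to\gamma_n(1)^2$ from Proposition~\ref{prop:asconv-1}, and part 3 by summing the finitely many terms. Your ``polynomial-in-$c$'' expansion for the random centring $\eta_n^N(\varphi)$ is precisely the expansion $\mu_b^N([\varphi-\eta_n^N(\varphi)]^{\otimes2})=\mu_b^N(\varphi^{\otimes2})-\eta_n^N(\varphi)[\mu_b^N(\varphi\otimes1)+\mu_b^N(1\otimes\varphi)]+\eta_n^N(\varphi)^2\mu_b^N(1^{\otimes2})$ that the paper's proof invokes by reference to the proof of Theorem~\ref{thm:V_n^N_thm_front}, so no genuinely different ideas are involved.
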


\section{Estimators for updated measures\label{SEC:FILTERING_TRANSLATION}}

In some applications there is interest in approximating the updated
measures: 
\[
\hat{\gamma}_{n}(S)=\int_{S}G_{n}(x)\gamma_{n}({\rm d}x),\qquad\hat{\eta}_{n}(S)=\frac{\hat{\gamma}_{n}(S)}{\hat{\gamma}_{n}(1)},\qquad S\in\mathcal{X}.
\]
In the hidden Markov model setting described in Section~\ref{sub:Discrete-time-Feynman--Kac},
e.g., $\hat{\eta}_{n}$ is the conditional distribution of $X_{n}$
given $y_{0},\ldots,y_{n}$, that is $\hat{\eta}_{n}$ is a filtering
distribution, while $\eta_{n}$ is a predictive distribution.

The updated particle approximations are defined by
\[
\hat{\gamma}_{n}^{N}(S)=\int_{S}G_{n}(x)\gamma_{n}^{N}({\rm d}x),\qquad\hat{\eta}_{n}^{N}(S)=\frac{\hat{\gamma}_{n}^{N}(S)}{\hat{\gamma}_{n}^{N}(1)},\qquad S\in\mathcal{X},
\]
and we now define their variance estimators. To facilitate this task,
we consider a fixed $\varphi\in\mathcal{L}(\mathcal{X})$, and define
$\hat{\varphi}(x)=G_{n}(x)\varphi(x)$. The following relationships
can then be deduced: $\hat{\gamma}_{n}(\varphi)\equiv\gamma_{n}(\hat{\varphi})$,
$\hat{\eta}_{n}(\varphi)\equiv\eta_{n}(\hat{\varphi})/\eta_{n}(G_{n})$,
$\hat{\gamma}_{n}^{N}(\varphi)\equiv\gamma_{n}^{N}(\hat{\varphi})$
and $\hat{\eta}_{n}^{N}(\varphi)\equiv\eta_{n}^{N}(\hat{\varphi})/\eta_{n}^{N}(G_{n})$.
We define analogues of $\sigma_{n}^{2}$ and $v_{p,n}$ for the updated
particle approximations as 
\[
\hat{\sigma}_{n}^{2}(\varphi)=\lim_{N\to\infty}N\text{var}\left\{ \hat{\gamma}_{n}^{N}(\varphi)/\hat{\gamma}_{n}(1)\right\} ,\qquad\hat{v}_{p,n}(\varphi)=\frac{v_{p,n}(\hat{\varphi})}{\eta_{n}(G_{n})^{2}},
\]
and the proposition below is a counterpart to Proposition~\ref{prop:asconv-1}
and Lemma~\ref{lem:asymptotic_variances}.

\begin{proposition}\label{prop:avar_hats}For any $\varphi\in\mathcal{L}(\mathcal{X})$,
\begin{enumerate}
\item $\hat{\gamma}_{n}^{N}(\varphi)\to\hat{\gamma}_{n}(\varphi)$ almost
surely and $\hat{\sigma}_{n}^{2}(\varphi)={\displaystyle \sum_{p=0}^{n}c_{p}^{-1}\hat{v}_{p,n}(\varphi)}$,
\item $\hat{\eta}_{n}^{N}(\varphi)\to\hat{\eta}_{n}(\varphi)$ almost surely
and $NE\left[\left\{ \hat{\eta}_{n}^{N}(\varphi)-\hat{\eta}_{n}(\varphi)\right\} ^{2}\right]\to\hat{\sigma}_{n}^{2}(\varphi-\hat{\eta}_{n}(\varphi))$.
\end{enumerate}
\end{proposition}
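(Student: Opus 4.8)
The plan is to reduce Proposition~\ref{prop:avar_hats} to the already-established results for the non-updated approximations through the substitution $\hat\varphi=G_n\varphi$, supplemented by an exact decomposition of the self-normalized error. Throughout I would use that $\hat\varphi\in\mathcal{L}(\mathcal{X})$ (since $G_n$ is upper-bounded and $\varphi$ bounded) and that $\eta_n(G_n)\in(0,\infty)$ (since $G_n$ is strictly positive and upper-bounded).

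For part 1, since $\hat\gamma_n^N(\varphi)=\gamma_n^N(\hat\varphi)$ and $\hat\gamma_n(\varphi)=\gamma_n(\hat\varphi)$, the almost sure convergence is immediate from Proposition~\ref{prop:asconv-1}(2) applied to $\hat\varphi$. For the variance, I would write $\hat\gamma_n(1)=\gamma_n(G_n)=\eta_n(G_n)\gamma_n(1)$ and extract the deterministic constant,
\[
N\mathrm{var}\{\hat\gamma_n^N(\varphi)/\hat\gamma_n(1)\}=\eta_n(G_n)^{-2}\,N\mathrm{var}\{\gamma_n^N(\hat\varphi)/\gamma_n(1)\}\longrightarrow\eta_n(G_n)^{-2}\sum_{p=0}^n c_p^{-1}v_{p,n}(\hat\varphi)
\]
by Lemma~\ref{lem:asymptotic_variances}; recognizing $v_{p,n}(\hat\varphi)/\eta_n(G_n)^2=\hat v_{p,n}(\varphi)$ gives the stated formula for $\hat\sigma_n^2(\varphi)$.

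For part 2, writing $\hat\eta_n^N(\varphi)=\eta_n^N(\hat\varphi)/\eta_n^N(G_n)$, the almost sure convergence follows from Proposition~\ref{prop:asconv-1}(3) applied to $\hat\varphi$ and to $G_n$, together with $\eta_n(G_n)>0$ and the continuous mapping theorem. For the mean square error I would set $\psi=\varphi-\hat\eta_n(\varphi)$, so that $\hat\psi=G_n\psi$ satisfies $\eta_n(\hat\psi)=\eta_n(G_n)\hat\eta_n(\psi)=0$, and use the exact identity
\[
\hat\eta_n^N(\varphi)-\hat\eta_n(\varphi)=\frac{\eta_n^N(\hat\psi)}{\eta_n^N(G_n)}=\frac{\eta_n^N(\hat\psi)}{\eta_n(G_n)}-\frac{\eta_n^N(\hat\psi)\{\eta_n^N(G_n)-\eta_n(G_n)\}}{\eta_n(G_n)\,\eta_n^N(G_n)}=:T_1^N-T_2^N.
\]
Since $\eta_n(\hat\psi)=0$, Proposition~\ref{prop:asconv-1}(3) gives $N E[(T_1^N)^2]=\eta_n(G_n)^{-2}N E[\{\eta_n^N(\hat\psi)-\eta_n(\hat\psi)\}^2]\to\eta_n(G_n)^{-2}\sigma_n^2(\hat\psi)$, and by \eqref{eq:sigma_equals_sum_v} and part 1 this limit equals $\sum_{p=0}^n c_p^{-1}\hat v_{p,n}(\psi)=\hat\sigma_n^2(\varphi-\hat\eta_n(\varphi))$, the desired value.

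It remains to show $N E[(T_2^N)^2]\to 0$ and $N E[T_1^N T_2^N]\to 0$, which is the main obstacle because $T_2^N$ contains the random, possibly small, denominator $\eta_n^N(G_n)$. The crucial observation that removes this difficulty is that $\eta_n^N(\hat\psi)/\eta_n^N(G_n)=\hat\eta_n^N(\varphi)-\hat\eta_n(\varphi)$ is bounded by $2\|\varphi\|_\infty$, so that $(T_2^N)^2=\eta_n(G_n)^{-2}\{\hat\eta_n^N(\varphi)-\hat\eta_n(\varphi)\}^2\{\eta_n^N(G_n)-\eta_n(G_n)\}^2$ retains only the deterministic factor $\eta_n(G_n)^{-2}$. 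A Cauchy--Schwarz bound then gives
\[
N E[(T_2^N)^2]\le\frac{N}{\eta_n(G_n)^2}\sqrt{E[\{\hat\eta_n^N(\varphi)-\hat\eta_n(\varphi)\}^4]}\,\sqrt{E[\{\eta_n^N(G_n)-\eta_n(G_n)\}^4]},
\]
which vanishes once one supplies the fourth-moment rates $O(N^{-2})$; these are standard $L^4$ bounds for bounded-function particle approximations under the present assumptions and can be obtained by the moment machinery already underlying Theorem~\ref{THM:MU_ B^N}. The cross term is controlled by $N|E[T_1^N T_2^N]|\le\sqrt{N E[(T_1^N)^2]}\sqrt{N E[(T_2^N)^2]}\to 0$, the first factor being bounded and the second tending to zero. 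The only genuinely new work beyond the substitution $\hat\varphi=G_n\varphi$ is thus the verification of these $L^4$ rates and the bookkeeping identifying $\eta_n(G_n)^{-2}\sigma_n^2(\hat\psi)$ with $\hat\sigma_n^2(\varphi-\hat\eta_n(\varphi))$.
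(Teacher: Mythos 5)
Most of your argument is correct and complete: part 1 (the reduction via $\hat\varphi$, the identity $\hat\gamma_n(1)=\eta_n(G_n)\gamma_n(1)$, and Lemma~\ref{lem:asymptotic_variances}), both almost sure statements, the exact decomposition $\hat\eta_n^N(\varphi)-\hat\eta_n(\varphi)=T_1^N-T_2^N$, the identification of $\lim_N NE[(T_1^N)^2]$ with $\hat\sigma_n^2(\varphi-\hat\eta_n(\varphi))$ via Proposition~\ref{prop:asconv-1} and (\ref{eq:sigma_equals_sum_v}), and the Cauchy--Schwarz treatment of the cross term. The gap is the step you wave through as ``standard $L^4$ bounds for bounded-function particle approximations.'' Of the two fourth moments you invoke, only $E[\{\eta_n^N(G_n)-\eta_n(G_n)\}^4]$ is of that type. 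The other, $E[\{\hat\eta_n^N(\varphi)-\hat\eta_n(\varphi)\}^4]$, concerns the self-normalized quantity $\eta_n^N(\hat\psi)/\eta_n^N(G_n)$, whose random denominator is precisely the obstacle your decomposition was designed to remove; asserting an $O(N^{-2})$ rate for it without proof re-imports that obstacle into an unproved lemma (its proof requires, e.g., a truncation on the event $\{\eta_n^N(G_n)\geq\eta_n(G_n)/2\}$ together with a Markov bound on the complement). Moreover, neither bound follows from anything stated in the paper: Proposition~\ref{prop:asconv-1}, Lemma~\ref{lem:asymptotic_variances} and Theorem~\ref{THM:MU_ B^N} are all second-moment statements, and second moments genuinely cannot close the argument --- boundedness of $\hat\eta_n^N(\varphi)-\hat\eta_n(\varphi)$ plus $NE[\{\eta_n^N(G_n)-\eta_n(G_n)\}^2]=O(1)$ only gives $NE[(T_2^N)^2]=O(1)$, not $o(1)$. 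Finally, the literature bounds you would cite are for $N_p\equiv N$, whereas here $N_p=\lceil c_pN\rceil$; the paper itself emphasizes that even its second-moment results had to be re-derived in this setting.

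The gap is repairable, and more cheaply than your sketch suggests, because only the plain empirical average needs a higher moment. Write $NE[(T_2^N)^2]=\eta_n(G_n)^{-2}E[A_N^2\cdot NB_N^2]$ with $A_N=\hat\eta_n^N(\varphi)-\hat\eta_n(\varphi)$ and $B_N=\eta_n^N(G_n)-\eta_n(G_n)$. You have already established $A_N\to0$ almost surely and $|A_N|\leq2\|\varphi\|_\infty$; since $NB_N^2=\mathcal{O}_p(1)$ by Proposition~\ref{prop:asconv-1}, the product $A_N^2\,NB_N^2\to0$ in probability, and since $A_N^2\,NB_N^2\leq4\|\varphi\|_\infty^2\,NB_N^2$, the expectations converge to zero as soon as the family $\{NB_N^2\}_{N}$ is uniformly integrable. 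That uniform integrability requires only $\sup_N N^2E[B_N^4]<\infty$, i.e.\ a single $L^4$ bound for $\eta_n^N(G_n)$ --- a bounded-function empirical quantity with no random denominator --- in the time-varying-$N_p$ setting, obtainable by adapting the $L^p$ estimates of \citet{DelMoral2004} or the inductive conditioning arguments behind Proposition~\ref{prop:asconv-1}. Until that one bound (or your self-normalized variant, which needs the truncation argument on top) is actually proved, the proof of part 2 is incomplete at exactly this point.
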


The proofs of Proposition~\ref{prop:avar_hats}, and Theorems~\ref{thm:VhatnN}--\ref{thm:vhatpnN}
below can be found in the supplement. Proposition~\ref{prop:avar_hats}
implies the relationship $\hat{\sigma}_{n}^{2}(\varphi)=\sigma_{n}^{2}(\hat{\varphi})/\eta_{n}(G_{n})^{2}$.
The corresponding estimates of the variance, asymptotic variance and
the terms therein are now obtained and analogues of Theorems~\ref{thm:V_n^N_thm_front}
and~\ref{thm:v_pn^N} follow straightforwardly. Below we write the
estimators $\hat{V}_{n}^{N},$ $\hat{v}_{p,n}^{N}$ etc. in terms
of $V_{n}^{N}$, $\eta_{n}^{N}$ and $v_{p,n}^{N}$ to emphasize that
the same algorithms can be used to compute them, just as $\hat{\gamma}_{n}^{N}(\varphi)$
and $\hat{\eta}_{n}^{N}(\varphi)$ can be computed as $\gamma_{n}^{N}(\hat{\varphi})$
and $\eta_{n}^{N}(\hat{\varphi})/\eta_{n}^{N}(G_{n})$, respectively.

\begin{theorem}\label{thm:VhatnN}For any $\varphi\in\mathcal{L}(\mathcal{X})$,
with 
\begin{equation}
\hat{V}_{n}^{N}(\varphi)=V_{n}^{N}(\hat{\varphi})/\eta_{n}^{N}(G_{n})^{2},\label{eq:hat_VnN}
\end{equation}

\begin{enumerate}
\item $E\left\{ \hat{\gamma}_{n}^{N}(1)^{2}\hat{V}_{n}^{N}(\varphi)\right\} ={\rm var}\left\{ \hat{\gamma}_{n}^{N}(\varphi)\right\} $
for all $N\geq1$,
\item $N\hat{V}_{n}^{N}(\varphi)\to\hat{\sigma}_{n}^{2}(\varphi)$ in probability,
\item $N\hat{V}_{n}^{N}(\varphi-\hat{\eta}_{n}^{N}(\varphi))\to\hat{\sigma}_{n}^{2}(\varphi-\hat{\eta}_{n}(\varphi))$
in probability.
\end{enumerate}
\end{theorem}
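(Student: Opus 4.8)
The plan is to reduce all three claims to the already-established Theorem~\ref{thm:V_n^N_thm_front} and Proposition~\ref{prop:asconv-1}, applied to the function $\hat{\varphi}=G_{n}\varphi$, by exploiting the algebraic relations between the updated and predictive quantities. Throughout I would use that $G_{n}\in\mathcal{L}(\mathcal{X})$ (it is measurable and, being positive and upper-bounded, bounded) with $\eta_{n}(G_{n})\in(0,\infty)$, that $\hat{\gamma}_{n}^{N}(1)=\gamma_{n}^{N}(G_{n})=\gamma_{n}^{N}(1)\eta_{n}^{N}(G_{n})$, and the identities $\hat{\gamma}_{n}^{N}(\varphi)=\gamma_{n}^{N}(\hat{\varphi})$, $\hat{\eta}_{n}^{N}(\varphi)=\eta_{n}^{N}(\hat{\varphi})/\eta_{n}^{N}(G_{n})$ and $\hat{\sigma}_{n}^{2}(\psi)=\sigma_{n}^{2}(G_{n}\psi)/\eta_{n}(G_{n})^{2}$ recorded after Proposition~\ref{prop:avar_hats}.

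For claim (1), the key observation is the pathwise identity $\hat{\gamma}_{n}^{N}(1)^{2}\hat{V}_{n}^{N}(\varphi)=\gamma_{n}^{N}(1)^{2}\eta_{n}^{N}(G_{n})^{2}\,V_{n}^{N}(\hat{\varphi})/\eta_{n}^{N}(G_{n})^{2}=\gamma_{n}^{N}(1)^{2}V_{n}^{N}(\hat{\varphi})$. Taking expectations and applying Theorem~\ref{thm:V_n^N_thm_front}(1) to $\hat{\varphi}$ gives $E\{\hat{\gamma}_{n}^{N}(1)^{2}\hat{V}_{n}^{N}(\varphi)\}=\mathrm{var}\{\gamma_{n}^{N}(\hat{\varphi})\}=\mathrm{var}\{\hat{\gamma}_{n}^{N}(\varphi)\}$, with no limiting argument required. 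For claim (2), I would write $N\hat{V}_{n}^{N}(\varphi)=NV_{n}^{N}(\hat{\varphi})/\eta_{n}^{N}(G_{n})^{2}$, observe that the numerator converges in probability to $\sigma_{n}^{2}(\hat{\varphi})$ by Theorem~\ref{thm:V_n^N_thm_front}(2) while $\eta_{n}^{N}(G_{n})^{2}\to\eta_{n}(G_{n})^{2}>0$ by Proposition~\ref{prop:asconv-1}(3) and the continuous mapping theorem, and conclude by Slutsky's theorem that the ratio converges to $\sigma_{n}^{2}(\hat{\varphi})/\eta_{n}(G_{n})^{2}=\hat{\sigma}_{n}^{2}(\varphi)$.

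Claim (3) is the substantive one. Here $N\hat{V}_{n}^{N}(\varphi-\hat{\eta}_{n}^{N}(\varphi))=NV_{n}^{N}\!\big(\hat{\varphi}-\hat{\eta}_{n}^{N}(\varphi)G_{n}\big)/\eta_{n}^{N}(G_{n})^{2}$, and the essential difficulty is that the centering subtracts a random multiple of $G_{n}$ rather than a constant, so Theorem~\ref{thm:V_n^N_thm_front}(3) applied to $\hat{\varphi}$ does not apply directly. I would exploit that both $V_{n}^{N}$ and $\sigma_{n}^{2}$ are quadratic forms: writing $W_{n}^{N}(\psi_{1},\psi_{2})=\tfrac{1}{2}\{V_{n}^{N}(\psi_{1}+\psi_{2})-V_{n}^{N}(\psi_{1})-V_{n}^{N}(\psi_{2})\}$ for the associated symmetric bilinear form, polarization gives $V_{n}^{N}(\hat{\varphi}-aG_{n})=V_{n}^{N}(\hat{\varphi})-2a\,W_{n}^{N}(\hat{\varphi},G_{n})+a^{2}V_{n}^{N}(G_{n})$ for every scalar $a$. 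Applying Theorem~\ref{thm:V_n^N_thm_front}(2) to each of $\hat{\varphi}$, $G_{n}$ and $\hat{\varphi}+G_{n}$ shows that $NV_{n}^{N}(\hat{\varphi})$, $NV_{n}^{N}(G_{n})$ and $NW_{n}^{N}(\hat{\varphi},G_{n})$ converge in probability to $\sigma_{n}^{2}(\hat{\varphi})$, $\sigma_{n}^{2}(G_{n})$ and the bilinear form $W_{n}^{\infty}(\hat{\varphi},G_{n})$ associated with $\sigma_{n}^{2}$. Since $a_{N}:=\hat{\eta}_{n}^{N}(\varphi)\to\hat{\eta}_{n}(\varphi)=:a_{\infty}$ in probability (by Proposition~\ref{prop:asconv-1}(3) applied to $\hat{\varphi}$ and to $G_{n}$, plus continuous mapping), Slutsky's theorem yields $NV_{n}^{N}(\hat{\varphi}-a_{N}G_{n})\to\sigma_{n}^{2}(\hat{\varphi})-2a_{\infty}W_{n}^{\infty}(\hat{\varphi},G_{n})+a_{\infty}^{2}\sigma_{n}^{2}(G_{n})=\sigma_{n}^{2}(\hat{\varphi}-a_{\infty}G_{n})$. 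Dividing by $\eta_{n}^{N}(G_{n})^{2}\to\eta_{n}(G_{n})^{2}$ and using $\hat{\varphi}-\hat{\eta}_{n}(\varphi)G_{n}=G_{n}(\varphi-\hat{\eta}_{n}(\varphi))$ together with $\hat{\sigma}_{n}^{2}(\varphi-\hat{\eta}_{n}(\varphi))=\sigma_{n}^{2}(G_{n}(\varphi-\hat{\eta}_{n}(\varphi)))/\eta_{n}(G_{n})^{2}$ identifies the limit as $\hat{\sigma}_{n}^{2}(\varphi-\hat{\eta}_{n}(\varphi))$, as required.

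The main obstacle is the last claim: controlling the cross term $W_{n}^{N}(\hat{\varphi},G_{n})$ and rigorously substituting the consistent but random centering $a_{N}$ into the quadratic form. This is precisely the step that passes from Theorem~\ref{thm:V_n^N_thm_front}(2) to its centered counterpart~(3), so if that passage is carried out in the supplement it can be reused here; otherwise the polarization identity above reduces it to finitely many applications of part~(2) combined with a routine Slutsky argument. The remaining manipulations are elementary.
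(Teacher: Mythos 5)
Your proposal is correct and follows essentially the same route as the paper: all three parts are reduced, via the pathwise identities $\hat{\gamma}_{n}^{N}(\varphi)=\gamma_{n}^{N}(\hat{\varphi})$, $\hat{\gamma}_{n}^{N}(1)=\gamma_{n}^{N}(1)\eta_{n}^{N}(G_{n})$ and the relation $\hat{\sigma}_{n}^{2}(\psi)=\sigma_{n}^{2}(G_{n}\psi)/\eta_{n}(G_{n})^{2}$, to Theorem~\ref{thm:V_n^N_thm_front} and Proposition~\ref{prop:asconv-1} applied to $\hat{\varphi}=G_{n}\varphi$. Your polarization treatment of the random centering in part 3 is sound --- the identity $V_{n}^{N}(\hat{\varphi}-aG_{n})=V_{n}^{N}(\hat{\varphi})-2aW_{n}^{N}(\hat{\varphi},G_{n})+a^{2}V_{n}^{N}(G_{n})$ holds pathwise even for random $a$, and $\sigma_{n}^{2}$ is indeed a quadratic form by (\ref{eq:sigma_equals_sum_v}) and (\ref{eq:vpndefn}) --- and it parallels the bilinearity expansion of $\mu_{b}^{N}([\varphi-\eta_{n}^{N}(\varphi)]^{\otimes2})$ that the paper itself uses to pass from part 2 to part 3 of Theorem~\ref{thm:V_n^N_thm_front}.
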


\begin{remark}It follows from (\ref{eq:V_n^N_defn_front}), (\ref{eq:VnNdefn}),
(\ref{eq:hat_VnN}) and simple manipulations that 
\[
\frac{N\hat{V}_{n}^{N}(\varphi-\hat{\eta}_{n}^{N}(\varphi))}{\left(\prod_{p=0}^{n}\frac{N_{p}}{N_{p}-1}\right)}=N\sum_{i\in[N_{0}]}\left[\frac{\sum_{j\in[N_{n}]:E_{n}^{j}=i}G_{n}(\zeta_{n}^{j})\left\{ \varphi(\zeta_{n}^{j})-\hat{\eta}_{n}^{N}(\varphi)\right\} }{\sum_{j\in[N_{n}]}G_{n}(\zeta_{n}^{j})}\right]^{2},
\]
the right hand side of which is, in the case where $N$ is not time-varying,
precisely the estimator in Equation 2.9 of \citet{chan2013general}.\end{remark}

\begin{theorem}\label{thm:vhatpnN}For any $\varphi\in\mathcal{L}(\mathcal{X})$,
with 
\[
\hat{v}_{p,n}^{N}(\varphi)=v_{p,n}^{N}(\hat{\varphi})/\eta_{n}^{N}(G_{n})^{2},\qquad\hat{v}_{n}^{N}(\varphi)=\sum_{p=0}^{n}c_{p}^{-1}\hat{v}_{p,n}^{N}(\varphi),
\]

\begin{enumerate}
\item $E\left\{ \hat{\gamma}_{n}^{N}(1)^{2}\hat{v}_{p,n}^{N}(\varphi)\right\} =\hat{\gamma}_{n}(1)^{2}\hat{v}_{p,n}(\varphi)$
for all $N\geq1$,
\item $\hat{v}_{p,n}^{N}(\varphi)\to\hat{v}_{p,n}(\varphi)$ and $\hat{v}_{p,n}^{N}(\varphi-\hat{\eta}_{n}^{N}(\varphi))\to\hat{v}_{p,n}(\varphi-\hat{\eta}_{n}(\varphi))$,
both in probability,
\item $E\left\{ \hat{\gamma}_{n}^{N}(1)^{2}\hat{v}_{n}^{N}(\varphi)\right\} =\hat{\gamma}_{n}(1)^{2}\hat{\sigma}_{n}^{2}(\varphi)$
for all $N\geq1$ and $\hat{v}_{n}^{N}(\varphi)\to\hat{\sigma}_{n}^{2}(\varphi)$
in probability.
\end{enumerate}
\end{theorem}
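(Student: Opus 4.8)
The plan is to reduce every assertion to the predictive-case result Theorem~\ref{thm:v_pn^N} applied to the modified test function $\hat{\varphi}=G_{n}\varphi$, exploiting two bookkeeping identities. Since $\hat{\gamma}_{n}^{N}(\varphi)\equiv\gamma_{n}^{N}(\hat{\varphi})$ and $\gamma_{n}^{N}=\gamma_{n}^{N}(1)\eta_{n}^{N}$, taking $\varphi=1$ gives $\hat{\gamma}_{n}^{N}(1)=\gamma_{n}^{N}(G_{n})=\gamma_{n}^{N}(1)\eta_{n}^{N}(G_{n})$, and likewise $\hat{\gamma}_{n}(1)=\gamma_{n}(1)\eta_{n}(G_{n})$ for the limiting quantities. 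For part 1, the definition $\hat{v}_{p,n}^{N}(\varphi)=v_{p,n}^{N}(\hat{\varphi})/\eta_{n}^{N}(G_{n})^{2}$ makes the factors $\eta_{n}^{N}(G_{n})^{2}$ cancel, so that $\hat{\gamma}_{n}^{N}(1)^{2}\hat{v}_{p,n}^{N}(\varphi)=\gamma_{n}^{N}(1)^{2}v_{p,n}^{N}(\hat{\varphi})$; taking expectations and invoking Theorem~\ref{thm:v_pn^N}(1) at $\hat{\varphi}$ gives $\gamma_{n}(1)^{2}v_{p,n}(\hat{\varphi})$, which equals $\hat{\gamma}_{n}(1)^{2}\hat{v}_{p,n}(\varphi)$ by the same cancellation applied to the limits together with the definition $\hat{v}_{p,n}(\varphi)=v_{p,n}(\hat{\varphi})/\eta_{n}(G_{n})^{2}$.

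For the uncentered convergence in part 2, Theorem~\ref{thm:v_pn^N}(2) gives $v_{p,n}^{N}(\hat{\varphi})\to v_{p,n}(\hat{\varphi})$ in probability, while Proposition~\ref{prop:asconv-1}(3) gives $\eta_{n}^{N}(G_{n})\to\eta_{n}(G_{n})$ almost surely, with $\eta_{n}(G_{n})>0$ because $G_{n}\in\mathcal{L}(\mathcal{X})$ is strictly positive. Slutsky's theorem then yields $\hat{v}_{p,n}^{N}(\varphi)=v_{p,n}^{N}(\hat{\varphi})/\eta_{n}^{N}(G_{n})^{2}\to v_{p,n}(\hat{\varphi})/\eta_{n}(G_{n})^{2}=\hat{v}_{p,n}(\varphi)$.

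The centered convergence is the step I expect to be the main obstacle: the subtracted quantity $\hat{\eta}_{n}^{N}(\varphi)$ depends on $N$, and the centred function $\widehat{\varphi-\hat{\eta}_{n}^{N}(\varphi)}=\hat{\varphi}-\hat{\eta}_{n}^{N}(\varphi)G_{n}$ removes a multiple of $G_{n}$ rather than a constant, so the centred convergence of Theorem~\ref{thm:v_pn^N}(2) cannot be used directly. The idea is to use that $\psi\mapsto v_{p,n}^{N}(\psi)$ is a quadratic form, since $\mu_{b}^{N}(\psi^{\otimes2})$ is quadratic in $\psi$; its associated symmetric bilinear form is $B^{N}(\psi_{1},\psi_{2})=\tfrac{1}{2}\{v_{p,n}^{N}(\psi_{1}+\psi_{2})-v_{p,n}^{N}(\psi_{1})-v_{p,n}^{N}(\psi_{2})\}$. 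Writing $a_{N}=\hat{\eta}_{n}^{N}(\varphi)$ and polarizing, $v_{p,n}^{N}(\hat{\varphi}-a_{N}G_{n})=v_{p,n}^{N}(\hat{\varphi})-2a_{N}B^{N}(\hat{\varphi},G_{n})+a_{N}^{2}v_{p,n}^{N}(G_{n})$. Applying Theorem~\ref{thm:v_pn^N}(2) to each of $\hat{\varphi}$, $G_{n}$ and $\hat{\varphi}+G_{n}$ shows $B^{N}(\hat{\varphi},G_{n})$ and $v_{p,n}^{N}(G_{n})$ converge in probability to their limits, while $a_{N}\to a:=\hat{\eta}_{n}(\varphi)$ almost surely by Proposition~\ref{prop:avar_hats}(2); Slutsky then gives $v_{p,n}^{N}(\hat{\varphi}-a_{N}G_{n})\to v_{p,n}(\hat{\varphi}-aG_{n})=v_{p,n}(\widehat{\varphi-\hat{\eta}_{n}(\varphi)})$. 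Dividing by $\eta_{n}^{N}(G_{n})^{2}\to\eta_{n}(G_{n})^{2}$ produces $\hat{v}_{p,n}^{N}(\varphi-\hat{\eta}_{n}^{N}(\varphi))\to\hat{v}_{p,n}(\varphi-\hat{\eta}_{n}(\varphi))$.

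Part 3 then follows by linearity. Since $\hat{v}_{n}^{N}(\varphi)=\sum_{p=0}^{n}c_{p}^{-1}\hat{v}_{p,n}^{N}(\varphi)$ is a finite sum, its expectation is $\sum_{p=0}^{n}c_{p}^{-1}\hat{\gamma}_{n}(1)^{2}\hat{v}_{p,n}(\varphi)$ by part 1, which equals $\hat{\gamma}_{n}(1)^{2}\hat{\sigma}_{n}^{2}(\varphi)$ by the decomposition $\hat{\sigma}_{n}^{2}(\varphi)=\sum_{p=0}^{n}c_{p}^{-1}\hat{v}_{p,n}(\varphi)$ of Proposition~\ref{prop:avar_hats}(1); and convergence in probability is preserved under the finite sum, so the uncentered part of part 2 gives $\hat{v}_{n}^{N}(\varphi)\to\sum_{p=0}^{n}c_{p}^{-1}\hat{v}_{p,n}(\varphi)=\hat{\sigma}_{n}^{2}(\varphi)$.
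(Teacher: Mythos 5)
Your proof is correct, and its overall architecture is the same as the paper's: reduce everything to Theorem~\ref{thm:v_pn^N} at $\hat{\varphi}=G_{n}\varphi$ via the identities $\hat{\gamma}_{n}^{N}(1)=\gamma_{n}^{N}(1)\eta_{n}^{N}(G_{n})$ and $\hat{\gamma}_{n}(1)=\gamma_{n}(1)\eta_{n}(G_{n})$, which make the $\eta_{n}^{N}(G_{n})^{2}$ normalizations cancel in part 1, then sum over $p$ using Proposition~\ref{prop:avar_hats}(1) for part 3. You also correctly isolate the one genuine difficulty: in the centred statement the subtracted quantity is the random multiple $\hat{\eta}_{n}^{N}(\varphi)G_{n}$ of $G_{n}$, not a random constant, so Theorem~\ref{thm:v_pn^N}(2) cannot be quoted verbatim. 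Where you differ from the paper is in how this is resolved. The paper's pattern (visible in its appendix proofs of Theorems~\ref{thm:V_n^N_thm_front} and~\ref{thm:v_pn^N}) is to expand at the level of the measures $\mu_{b}^{N}$, i.e.
\[
\mu_{b}^{N}\bigl([\hat{\varphi}-\hat{\eta}_{n}^{N}(\varphi)G_{n}]^{\otimes2}\bigr)=\mu_{b}^{N}(\hat{\varphi}^{\otimes2})-\hat{\eta}_{n}^{N}(\varphi)\bigl\{\mu_{b}^{N}(\hat{\varphi}\otimes G_{n})+\mu_{b}^{N}(G_{n}\otimes\hat{\varphi})\bigr\}+\hat{\eta}_{n}^{N}(\varphi)^{2}\mu_{b}^{N}(G_{n}^{\otimes2}),
\]
and to invoke Theorem~\ref{THM:MU_ B^N} termwise for test functions on $\mathcal{X}^{\otimes2}$. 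You instead polarize the quadratic form $\psi\mapsto v_{p,n}^{N}(\psi)$ and invoke Theorem~\ref{thm:v_pn^N}(2) only at the three bounded functions $\hat{\varphi}$, $G_{n}$ and $\hat{\varphi}+G_{n}$. The two arguments rest on exactly the same bilinearity idea; yours has the modest advantage of treating the earlier theorems as black boxes, never returning to product test functions, at the cost of having to justify that $v_{p,n}^{N}$ and $v_{p,n}$ are quadratic forms (which is itself a fact about the $\mu_{b}^{N}$ and $\mu_{b}$, so the dependence is not fully eliminated, only repackaged). All remaining steps (positivity of $\eta_{n}(G_{n})$ from strict positivity of $G_{n}$, almost sure convergence of $\eta_{n}^{N}(G_{n})$ and $\hat{\eta}_{n}^{N}(\varphi)$, closure of convergence in probability under finite sums, products and quotients with nonvanishing limits) are used correctly.
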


\section{Use of the estimators to tune the particle filter\label{sec:Applications}}

The variance estimators we have proposed can of course be applied
directly to report estimates of Monte Carlo error alongside particle
approximations. Estimates of quantities such as $v_{p,n}(\varphi)$
may also aid algorithm and design. We provide here two simple examples
of adaptive methods to illustrate this, firstly concerning how to
improve performance by allowing particle numbers to vary over time,
and secondly concerning how to choose particle numbers so as to achieve
some user-defined performance criterion. To simplify presentation,
we focus on performance in estimating $\gamma_{n}^{N}(\varphi)$,
the ideas can easily be modified easily to deal instead with $\eta_{n}^{N}(\varphi)$,
$\hat{\gamma}_{n}^{N}(\varphi)$ or $\hat{\eta}_{n}^{N}(\varphi)$.

\subsection{Asymptotically optimal allocation\label{sub:approx_allocation}}

The following well known result is closely related to Neyman's optimal
allocation in stratified random sampling \citep{tschuprow1923mathematical,neyman1934two}.
A short proof using Jensen's inequality can be found in \citet[Section~4.3]{glasserman2003monte}.

\begin{lemma}Let $a_{0},\ldots,a_{n}\geq0$. The function $(c_{0},\ldots,c_{n})\mapsto\sum_{p=0}^{n}c_{p}^{-1}a_{p}$
is minimized, subject to the constraints $\min_{p}c_{p}>0$ and $\sum_{p=0}^{n}c_{p}=n+1$,
at $(n+1)^{-1}\left\Vert a\right\Vert _{2}^{2}$ when $c_{p}\propto a_{p}^{1/2}$.
\end{lemma}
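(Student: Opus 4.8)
The plan is to prove the claimed lower bound directly and then verify it is attained at the stated point, using Cauchy--Schwarz (equivalently, the Jensen argument cited from \citet[Section~4.3]{glasserman2003monte}). Throughout I write $S=\sum_{p=0}^{n}a_{p}^{1/2}$, so that the asserted minimum value, in the notation of the statement, is $(n+1)^{-1}\|a\|_{2}^{2}=(n+1)^{-1}S^{2}$. I would treat first the generic case in which every $a_{p}>0$ and comment on the degenerate case afterwards.

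First I would establish a uniform lower bound over the feasible set. For any $(c_{0},\ldots,c_{n})$ with $c_{p}>0$ and $\sum_{p}c_{p}=n+1$, Cauchy--Schwarz gives
\[
S=\sum_{p=0}^{n}a_{p}^{1/2}=\sum_{p=0}^{n}\frac{a_{p}^{1/2}}{c_{p}^{1/2}}\,c_{p}^{1/2}\leq\left(\sum_{p=0}^{n}\frac{a_{p}}{c_{p}}\right)^{1/2}\left(\sum_{p=0}^{n}c_{p}\right)^{1/2},
\]
and substituting $\sum_{p}c_{p}=n+1$ and squaring rearranges this into $\sum_{p=0}^{n}c_{p}^{-1}a_{p}\geq(n+1)^{-1}S^{2}$. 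The same bound arises from the Jensen route: taking weights $w_{p}=c_{p}/(n+1)$, which are nonnegative and sum to one, and applying Jensen's inequality to the concave map $x\mapsto x^{1/2}$ at the values $x_{p}=a_{p}/c_{p}^{2}$ yields $\sum_{p}w_{p}x_{p}^{1/2}\leq(\sum_{p}w_{p}x_{p})^{1/2}$; since $w_{p}x_{p}^{1/2}=a_{p}^{1/2}/(n+1)$ and $w_{p}x_{p}=a_{p}/\{(n+1)c_{p}\}$, this is exactly the inequality $S/(n+1)\leq\{(n+1)^{-1}\sum_{p}a_{p}/c_{p}\}^{1/2}$, i.e. the same lower bound.

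Next I would identify the minimiser and check attainment. The equality condition in Cauchy--Schwarz (equivalently, in Jensen, since $x\mapsto x^{1/2}$ is strictly concave) is that $a_{p}/c_{p}^{2}$ be constant in $p$, that is $c_{p}\propto a_{p}^{1/2}$; imposing the budget constraint $\sum_{p}c_{p}=n+1$ then pins down $c_{p}=(n+1)a_{p}^{1/2}/S$. Substituting this choice back into the objective gives $\sum_{p}c_{p}^{-1}a_{p}=\sum_{p}\{S/((n+1)a_{p}^{1/2})\}a_{p}=S^{2}/(n+1)$, so the lower bound is achieved and $(n+1)^{-1}S^{2}$ is indeed the minimum.

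There is no serious obstacle; the only point needing care, and the thing I would flag as the main subtlety, is the degenerate case where $a_{p}=0$ for some $p$. The lower-bound inequality still holds verbatim (the terms with $a_{p}=0$ contribute nothing to $S$ or to the objective for any $c_{p}>0$), but its equality condition now forces $c_{p}=0$ on the null coordinates, which lies on the boundary of the open region $\{\min_{p}c_{p}>0\}$. In that case $(n+1)^{-1}S^{2}$ is an infimum, approached by letting $c_{p}\downarrow0$ on the null coordinates while keeping $c_{p}\propto a_{p}^{1/2}$ on the rest. Since this boundary behaviour is irrelevant to the asymptotic-variance application—where the relevant $a_{p}=c_{p}^{-1}$-free variance terms are positive—I would either state the result under $a_{p}>0$ or simply interpret the allocation $c_{p}\propto a_{p}^{1/2}$ in this limiting sense.
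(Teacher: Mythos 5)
Your proof is correct and takes essentially the same route as the paper, which does not prove this lemma itself but defers to the Jensen's-inequality argument in Glasserman (2003, Section~4.3); your Cauchy--Schwarz bound is, as you note yourself, that very argument in equivalent form, followed by the standard attainment check. Two brief remarks for the record: the minimum value you derive, $(n+1)^{-1}\bigl(\sum_{p=0}^{n}a_{p}^{1/2}\bigr)^{2}$, is the correct one, so the lemma's $\left\Vert a\right\Vert _{2}^{2}$ must be read as $\bigl\Vert a^{1/2}\bigr\Vert _{1}^{2}$ rather than as the squared Euclidean norm of $(a_{0},\ldots,a_{n})$ (you made this identification silently; it is the right reading, but it deserves a flag since the literal reading would make the stated value wrong), and your treatment of the degenerate case $a_{p}=0$ --- where the bound is an infimum approached on the boundary but not attained on $\{\min_{p}c_{p}>0\}$ --- is a genuine refinement of the statement, one the paper only guards against implicitly in Section~\ref{sub:approx_allocation} by flooring the estimated allocation at $g(N)>0$.
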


As a consequence, we can in principle minimize $\sigma_{n}^{2}(\varphi)$
by choosing $c_{p}\propto v_{p,n}(\varphi)^{1/2}$. An approximation
of this optimal allocation can be obtained by the following two-stage
procedure. First run a particle filter with $N_{p}=N$ to obtain the
estimates $v_{p,n}^{N}(\varphi)$ and then define $c_{0:n}$ by $c_{p}=\max\left\{ v_{p,n}^{N}(\varphi),g(N)\right\} $,
where $g$ is some positive but decreasing function with $\lim_{N\rightarrow\infty}g(N)=0$.
Then run a second particle filter with each $N_{p}=\left\lceil c_{p}N\right\rceil $,
and report the quantities of interest, e.g., $\gamma_{n}^{N}(\varphi)$.
The function $g$ is chosen to ensure that $c_{p}>0$ and that for
large $N$ we permit small values of $c_{p}$. The quantity $\sum_{p=0}^{n}v_{p,n}^{N}(\varphi)/(\sum_{p=0}^{n}c_{p}^{-1}v_{p,n}^{N}(\varphi))$,
obtained from the first run, is an indication of the improvement in
variance obtained by using the new allocation.

Approximately optimal allocation has previously been addressed by
\citet{bhadra2014adaptive}, who introduced a meta-model to approximate
the distribution of the Monte Carlo error associated with $\log\gamma_{n}^{N}(1)$
in terms of an autoregressive process, the objective function to be
minimized then being the variance under this meta-model. They provide
only empirical evidence for the fit of their meta-model, whereas our
approach targets the true asymptotic variance $\sigma_{n}^{2}(\varphi)$
directly.

\subsection{An adaptive particle filter\label{sub:adaptiveN}}

Monte Carlo errors of particle filter approximations can be sensitive
to $N$, and an adequate value of $N$ to achieve a given error may
not be known \emph{a priori}. The following procedure increases $N$
until $V_{n}^{N}(\varphi)$ is in a given interval. 

Consider the case where we wish to estimate $\gamma_{n}(\varphi)$.
Given an initial number of particles $N^{(0)}$ and a threshold $\delta>0$,
one can run successive particle filters, doubling the number of particles
each time, until the associated random variable $V_{n}^{N^{(\tau)}}(\varphi)\in[0,\delta]$.
Finally, one runs a final particle filter with $N^{(\tau)}$ particles,
and returns the estimate of interest. We provide empirical evidence
in Section~\ref{sec:Numerical-illustrations} that this procedure
can be effective in some applications.

\section{Applications and illustrations\label{sec:Numerical-illustrations}}

In this section we demonstrate the empirical performance of the estimators
we have proposed in three examples. Our numerical results mostly address
the accuracy of our estimators of the asymptotic variance $\sigma_{n}^{2}(\varphi)$,
the individual terms $v_{p,n}(\varphi)$, and the effectiveness of
the applications described in Section~\ref{sec:Applications} with
the test functions $\varphi\equiv1$ and $\varphi=Id$, the identity
function. Where the results for later examples are qualitatively similar
to those of the first, the corresponding figures can be found in the
supplement.

\subsection{Linear Gaussian hidden Markov model\label{sub:Linear-Gaussian-HMM}}

\begin{figure}
\centering{}\subfloat[$\varphi\equiv1$]{\begin{centering}
\includegraphics[scale=0.5]{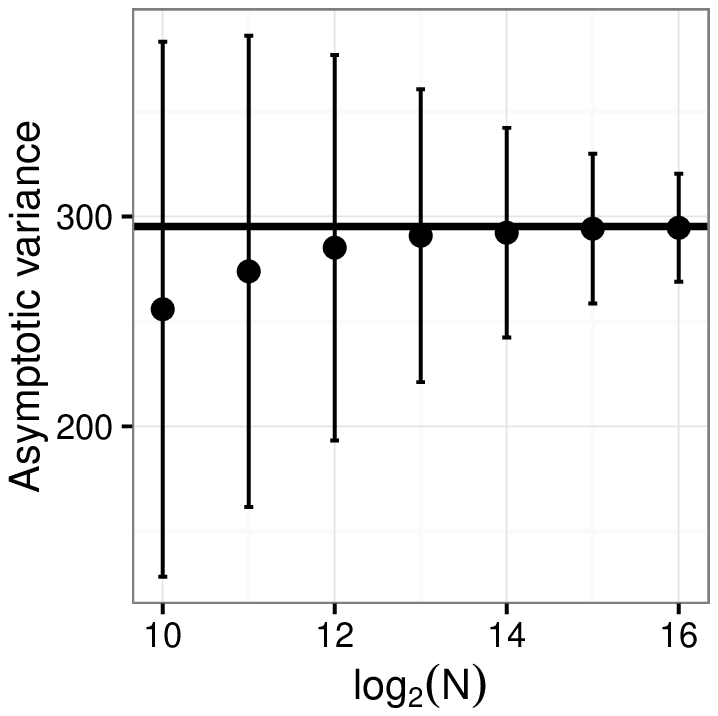}
\par\end{centering}

}\subfloat[$\varphi=Id-\hat{\eta}_{n}^{N}(Id)$]{\centering{}\includegraphics[scale=0.5]{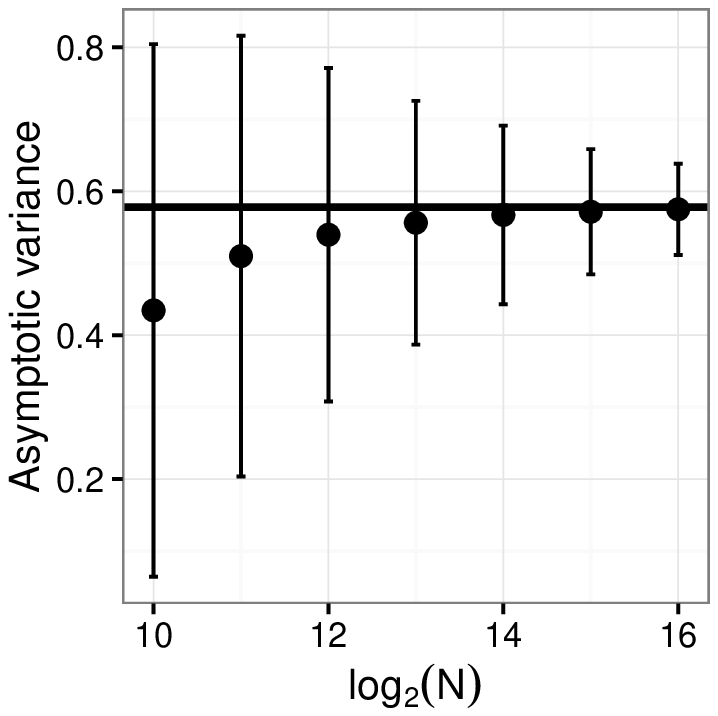}}\caption{Estimated asymptotic variances $N\hat{V}_{n}^{N}(\varphi)$ (dots
and error bars for the mean $\pm$ one standard deviation from $10^{4}$
replicates) against $\log_{2}N$ for the linear Gaussian example.
The horizontal lines correspond to the true asymptotic variances.
The sample variances of $\hat{\gamma}_{n}^{N}(1)/\hat{\gamma}_{n}(1)$
and $\hat{\eta}_{n}^{N}(Id)$, scaled by $N$, were close to their
asymptotic variances.\label{fig:lg2}}
\end{figure}

\begin{figure}
\centering{}\includegraphics[scale=0.6]{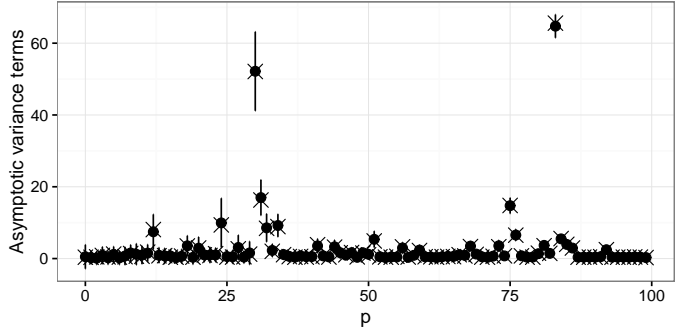}\caption{Plot of $\hat{v}_{p,n}^{N}(1)$ (dots and error bars for the mean
$\pm$ one standard deviation from $10^{4}$ replicates) and $\hat{v}_{p,n}(1)$
(crosses) at each $p\in\{0,\ldots,n\}$ for the Linear Gaussian example,
with $N=10^{5}$.\label{fig:lg1}}
\end{figure}

This model is specified by $M_{0}(\cdot)=\mathcal{N}(\cdot;0,1)$,
$M_{p}(x_{p-1},\cdot)=\mathcal{N}(\cdot;0.9x_{p-1},1)$ and $G_{p}(x_{p})=\mathcal{N}(y_{p};x_{p},1)$.
The measures $\hat{\eta}_{n}$ and $\hat{\gamma}_{n}$ are available
in closed form via the Kalman filter, and for suitable $\varphi$
the quantities $\hat{v}_{p,n}(\varphi)$ etc. can be computed exactly,
allowing us to assess the accuracy of our estimators. We used a synthetic
dataset, simulated according to the model with $n=99$. A Monte Carlo
study with $10^{4}$ replicates of $\hat{V}_{n}^{N}(\varphi)$ for
each value of $N$ and $c_{p}\equiv1$ was used to measure the accuracy
of the estimate $N\hat{V}_{n}^{N}(\varphi)$ as $N$ grows; results
are displayed in Figure~\ref{fig:lg2} and for this data $\hat{\sigma}_{n}^{2}(1)=295.206$
and $\hat{\sigma}_{n}^{2}(Id-\hat{\eta}_{n}(Id))\approx0.58$. The
estimates $\hat{v}_{n}^{N}(\varphi)$ differed very little from $N\hat{V}_{n}^{N}(\varphi)$,
and so are not shown. We then tested the accuracy of the estimates
$\hat{v}_{p,n}^{N}(1)$; results are displayed in Figure~\ref{fig:lg1}.
The estimates $\hat{v}_{p,n}^{N}(Id-\hat{\eta}_{n}^{N}(Id))$ are
very close to $0$ for $p<95$ and with values $(0.0017,0.012,0.082,0.48)$
for $p\in\{96,97,98,99\}$; this behaviour is in keeping with time-uniform
bounds on asymptotic variances obtained by \citet{whiteley2013stability},
see also references therein. 

We also compared a constant $N$ particle filter, the asymptotically
optimal particle filter where the asymptotically optimal allocation
is computed exactly, and its approximation described in Section~\ref{sub:approx_allocation}
for different values of $N_{}$ using a Monte Carlo study with $10^{4}$
replicates. We took $g(N)=2/\log_{2}N$ in defining the approximation,
and the results in Figure~\ref{fig:lg4a} indicate that indeed the
approximation reduces the variance. The improvement is fairly modest
for this particular model, and indeed the exact asymptotic variances
associated with the constant $N$ and asymptotically optimal particle
filters differ by less than a factor of $2$. In contrast, Figure~\ref{fig:lg4b}
shows that the improvement can be fairly dramatic in the presence
of outlying observations; the improvement in variance there is by
a factor of around $40$. Finally, we tested the adaptive particle
filter described in Section~\ref{sub:adaptiveN} using $10^{3}$
replicates for each value of $\delta$; results are displayed in Figure~\ref{fig:lg3},
and indicate that the estimates of $\hat{\gamma}_{n}(1)$ are close
to their prescribed thresholds.

\begin{figure}
\centering{}\subfloat[\label{fig:lg4a}]{\begin{centering}
\includegraphics[scale=0.5]{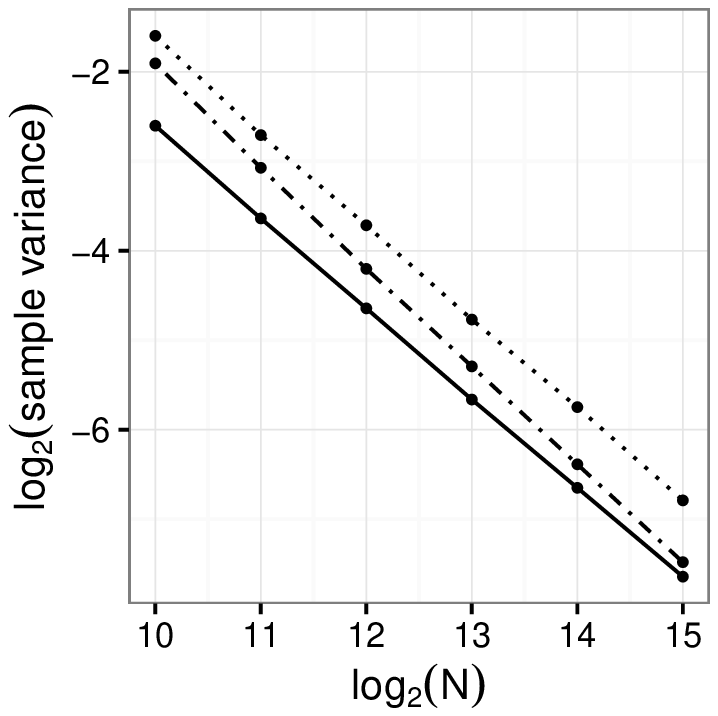}
\par\end{centering}

}\subfloat[\label{fig:lg4b}]{\centering{}\includegraphics[scale=0.5]{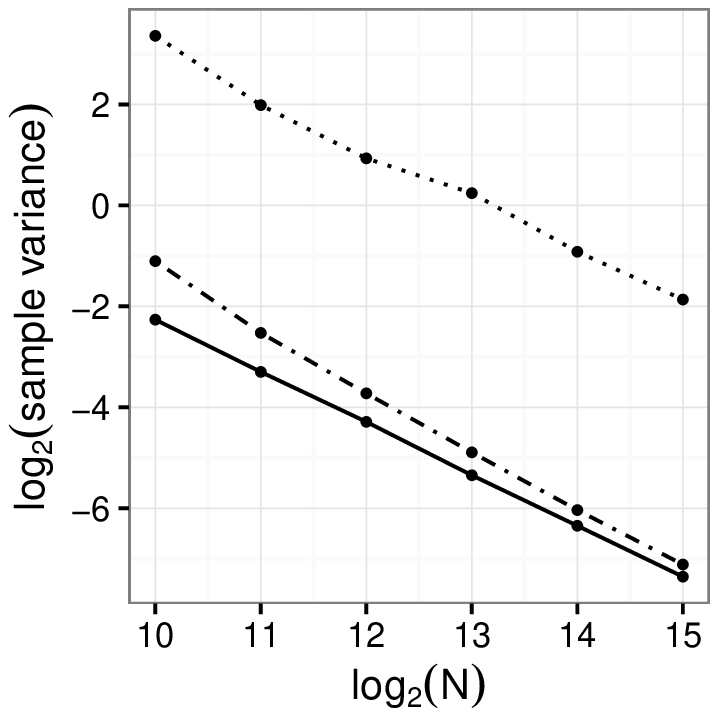}}\caption{Logarithmic plots of the sample variance across $10^{4}$ replicates
of $\gamma_{n}^{N}(1)/\gamma_{n}(1)$ against $N$ for the linear
Gaussian example, using a constant $N$ particle filter (dotted),
the approximation to the asymptotically optimal particle filter (dot-dash),
and the asymptotically optimal particle filter (solid). In Figure~\ref{fig:lg4b},
the observation sequence is $y_{p}=0$ for $p\in\{0,\ldots,99\}\setminus\{49\}$
and $y_{49}=8$.}
\end{figure}

\begin{figure}
\centering{}\subfloat[]{\begin{centering}
\includegraphics[scale=0.5]{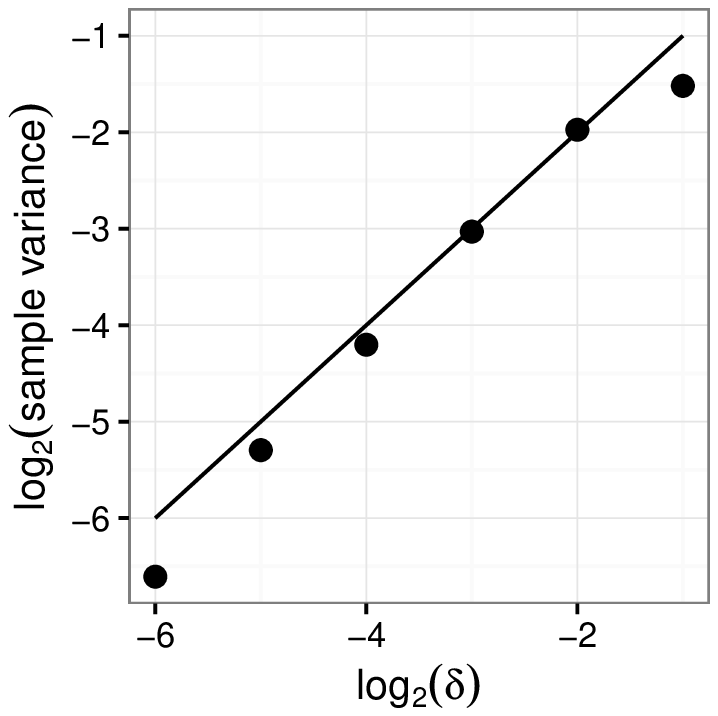}
\par\end{centering}

}\subfloat[]{\centering{}\includegraphics[scale=0.5]{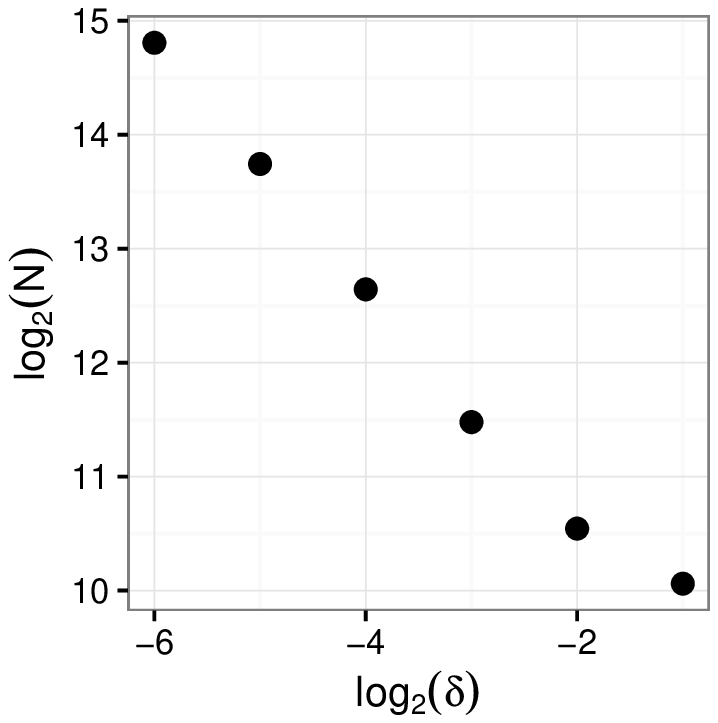}}\caption{Logarithmic plots for the simple adaptive $N$ particle filter estimates
of $\hat{\gamma}_{n}(1)$ for the linear Gaussian example. Figure
(a) plots the sample variance of $\hat{\gamma}_{n}^{N}(1)/\hat{\gamma}_{n}(1)$
against $\delta$, with the straight line $y=x$. Figure (b) plots
$N$ against $\delta$, where $N$ is the average number of particles
used by the final particle filter.\label{fig:lg3}}
\end{figure}

\subsection{Stochastic volatility hidden Markov model}

A stochastic volatility model is defined by $M_{0}(\cdot)=\mathcal{N}\left\{ \,\cdot\,;0,\sigma^{2}/(1-\rho^{2})\right\} $,
$M_{p}(x_{p-1},\cdot)=\mathcal{N}(\,\cdot\,;\rho x_{p-1},\sigma^{2})$
and $G_{p}(x_{p})=\mathcal{N}(y_{p};0,\beta^{2}\exp(x_{p}))$. We
used the pound/dollar daily exchange rates for 100 consecutive weekdays
ending on 28th June, 1985, a subset of the well-known dataset analyzed
in Harvey, Ruiz and Shephard (1994). Our results are obtained by choosing
the parameters $(\rho,\sigma,\beta)=(0.95,0.25,0.5)$. We provide
in the supplement plots of the accuracy of the estimate $N\hat{V}_{n}^{N}(\varphi)$
as $N$ grows using $10^{4}$ replicates for each value of $N$; the
asymptotic variances $\hat{\sigma}_{n}^{2}(1)$ and $\hat{\sigma}_{n}^{2}(Id-\hat{\eta}_{n}(Id))$
are estimated as being approximately $354$ and $1.31$ respectively.
In the supplement we plot the estimates of $\hat{v}_{p,n}(\varphi)$.
We found modest improvement for the approximation of the asymptotically
optimal particle filter, as one could infer from the estimated $\hat{v}_{p,n}(\varphi)$.
For the simple adaptive $N$ particle filter, results are provided
in the supplement, and indicate that the estimates of $\hat{\gamma}_{n}(1)$
are close to their prescribed thresholds.

\subsection{An SMC sampler\label{sub:A-simple-SMC}}

We consider a sequential simulation problem, as described in Section~\ref{sub:Statistical-examples-of},
with $\mathsf{X=\mathbb{R}}$, $\bar{\pi}_{0}(x)=\mathcal{N}(0,10^{2})$
and $\bar{\pi}_{1}(x)=0.3\mathcal{N}(x;-10,0.1^{2})+0.7\mathcal{N}(x;10,0.2^{2})$.
The distribution $\pi_{1}$ is bi-modal with well-separated modes.
With $n=11$, and the sequence of tempering parameters 
\[
\beta_{0:n}=(0,0.0005,0.001,0.0025,0.005,0.01,0.025,0.05,0.1,0.25,0.5,1),
\]
we let each Markov kernel $M_{p}$, $p\in\{1,\ldots,n\}$ be an $\eta_{p}$-invariant
random walk Metropolis kernel iterated $k=10$ times with proposal
variance $\tau_{p}^{2}$, where $\tau_{1:n}=(10,9,8,7,6,5,4,3,2,1,1)$.

One striking difference between the estimates for this model and those
for the hidden Markov models above is that the asymptotic variance
$\sigma_{n}^{2}(Id-\eta_{n}(Id))\approx822$ is considerably larger
than $\sigma_{n}^{2}(1)\approx2.1$; the variability of the estimates
$NV_{n}^{N}(\varphi)$ is shown in the supplement. Inspection of the
estimates of $v_{p,n}(\varphi)$ in Figures~\ref{fig:mixk10} allows
us to investigate both this difference and the dependence of $v_{p,n}(\varphi)$
on $k$ in greater detail.

In Figure~\ref{fig:mixk10}(a)--(b) we can see that while $v_{p,n}(1)$
is small for all $p$, the values of $v_{p,n}(Id-\eta_{n}(Id))$ are
larger for large $p$ than for small $p$; this could be due to the
inability of the Metropolis kernels $(M_{q})_{q\geq p}$ to mix well
due to the separation of the modes in $(\eta_{q})_{q\geq p}$ when
$p$ is large. In Figure~\ref{fig:mixk10}(c)--(d), $k=1$, that
is each $M_{P}$ consists of only a single iterate of a Metropolis
kernel, and we see that the values of $v_{p,n}(\varphi)$ associated
with small $p$ are much larger than when $k=10$, indicating that
the larger number of iterates does improve the asymptotic variance
of the particle approximations. However, the impact on $v_{p,n}(\varphi)$
is less pronounced for large $p$. Results for the simple adaptive
$N$ particle filter approximating $\eta_{n}(Id)$ are provided in
the supplement, which again show that the estimates are close to their
prescribed thresholds.

\begin{figure}
\centering{}\subfloat[$\varphi\equiv1$, $k=10$]{\begin{centering}
\includegraphics[scale=0.45]{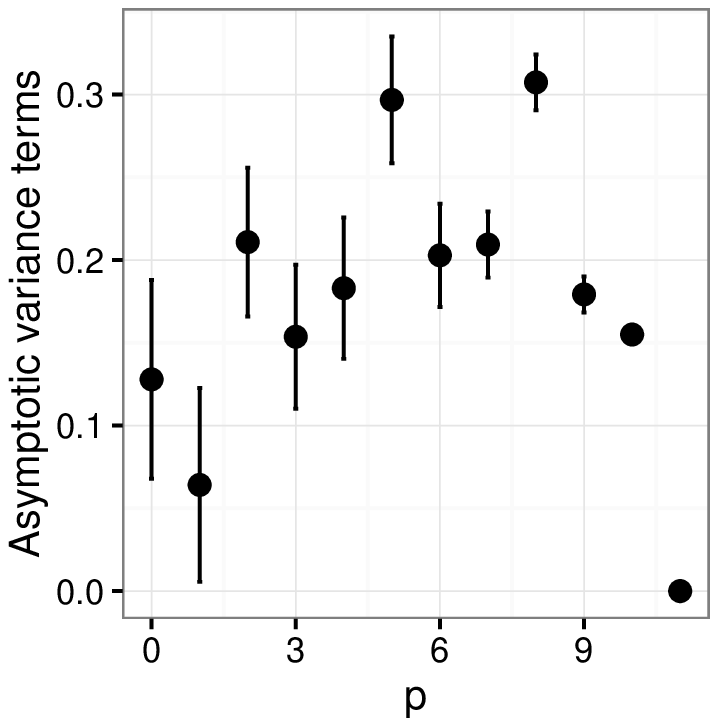}
\par\end{centering}

}\subfloat[$\varphi=Id-\eta_{n}(Id)$, $k=10$]{\centering{}\includegraphics[scale=0.45]{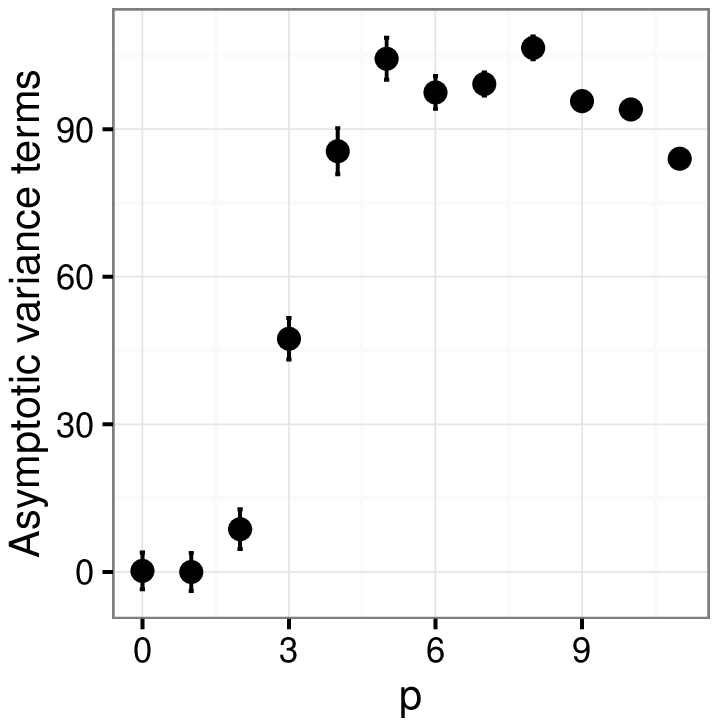}}\subfloat[$\varphi\equiv1$, $k=1$]{\begin{centering}
\includegraphics[scale=0.45]{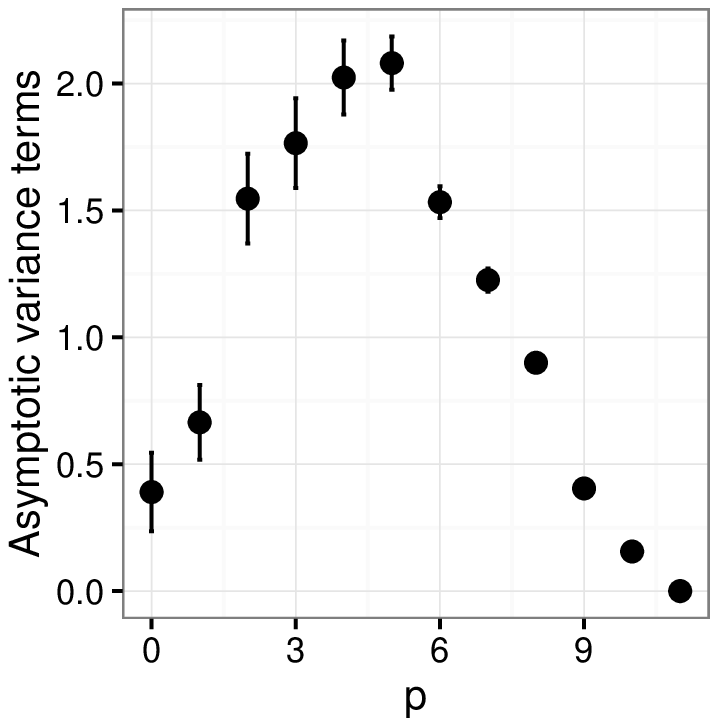}
\par\end{centering}

}\subfloat[$\varphi=Id-\eta_{n}(Id)$, $k=1$]{\centering{}\includegraphics[scale=0.45]{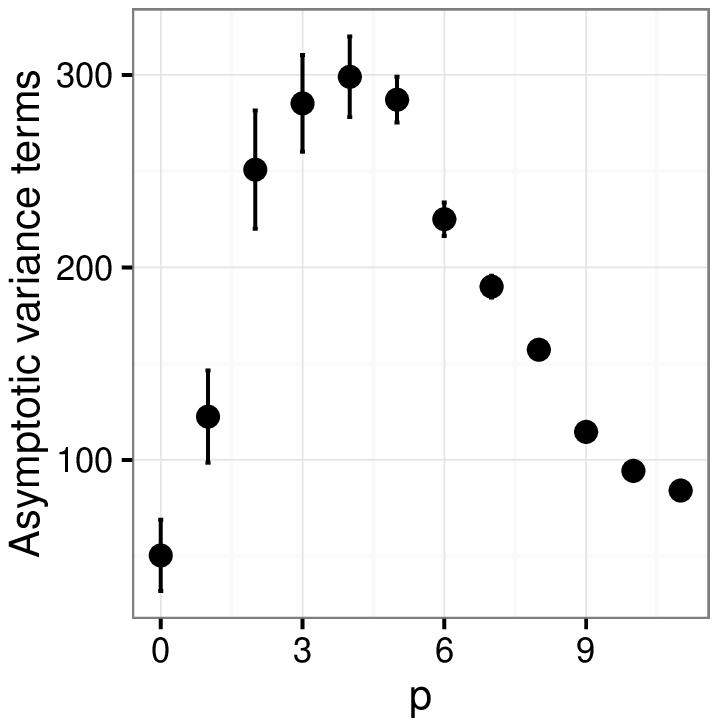}}\caption{Plot of $v_{p,n}^{N}(\varphi)$ (dots and error bars for the mean
$\pm$ one standard deviation) at each $p\in\{0,\ldots,n\}$ with
$k=10$ iterations (a)--(b) and $k=1$ iteration (c)--(d) for each
Markov kernel in the SMC sampler example and $N=10^{5}$.\label{fig:mixk10}}
\end{figure}

\section{Discussion}

\subsection{Alternatives to the bootstrap particle filter}

In the hidden Markov model examples above, we have constructed the
Feynman--Kac measures taking $M_{0},\ldots,M_{n}$ to be the initial
distribution and transition probabilities of the latent process and
defining $G_{0},\ldots,G_{n}$ to incorporate the realized observations.
This is only one, albeit important, way to construct particle approximations
of $\eta_{n}$, and the algorithm itself is usually referred to as
the bootstrap particle filter. Alternative specifications of $(M_{p},G_{p})_{0\leq p\leq n}$
lead to different Feynman-Kac models, as discussed in \citet[Section~2.4.2]{DelMoral2004},
and the variance estimators introduced here are applicable to these
models as well.

One particular specification corresponds to the ``fully adapted''
auxiliary particle filter of \citet{pitt1999filtering}, as discussed
by \citet{Doucet2008}. Specifically, we define $\check{M}_{0}({\rm d}x_{0})=M_{0}({\rm d}x_{0})G_{0}(x_{0})/M_{0}(G_{0})$,
and 
\[
\check{M}_{p}(x_{p-1},{\rm d}x_{p})=\frac{M_{p}(x_{p-1},{\rm d}x_{p})G_{p}(x_{p})}{M_{p}(G_{p})(x_{p-1})},\qquad p\in[n],
\]
and then $\check{G}_{0}(x_{0})=M_{0}(G_{0})M_{1}(G_{1})(x_{0})$ and
$\check{G}{}_{p}(x_{p})=M_{p+1}(G_{p+1})(x_{p})$, $p\geq1$. If we
denote by $\check{\gamma}_{n}$ and $\check{\eta}_{n}$ the Feynman--Kac
measures associated with $(\check{M}_{p},\check{G}_{p})_{0\leq p\leq n}$,
we obtain $\check{\gamma}_{n}=\hat{\gamma}_{n}$ and $\check{\eta}_{n}=\hat{\eta}_{n}$.
Moreover, the variances of $\check{\gamma}_{n}^{N}(\varphi)$ and
$\check{\eta}_{n}^{N}(\varphi)$ are often smaller than the variances
of $\hat{\gamma}_{n}^{N}(\varphi)$ and $\hat{\eta}_{n}^{N}(\varphi)$.
In Figure~\ref{fig:lg_apf}, we plot the corresponding $\check{v}_{p,n}(1)$
and their approximations for the same linear Gaussian example in Section~\ref{sub:Linear-Gaussian-HMM}.
Here, the asymptotic variance of $\check{\gamma}_{n}^{N}(1)/\check{\gamma}_{n}(1)$
is $40.718$, more than $7$ times smaller than $\hat{\sigma}_{n}^{2}(1)$.

\begin{figure}
\begin{centering}
\includegraphics[scale=0.6]{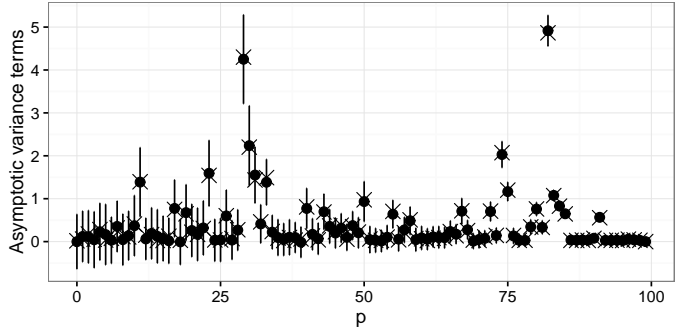}
\par\end{centering}

\centering{}\caption{Plot of $\check{v}_{p,n}^{N}(1)$ (dots and error bars for the mean
$\pm$ one standard deviation) and $\check{v}_{p,n}(1)$ (crosses)
at each $p\in\{0,\ldots,n\}$ in the Linear Gaussian example.\label{fig:lg_apf}}
\end{figure}

\subsection{Estimators based on i.i.d. replicates}

It is clearly possible to estimate consistently the variance of $\gamma_{n}^{N}(\varphi)/\gamma_{n}(1)$
by using i.i.d. replicates of $\gamma_{n}^{N}$. Such estimates necessarily
entail simulation of multiple particle filters. We now compare the
accuracy of such estimates with those based on i.i.d. replicates of
$V_{n}^{N}(\varphi)$. For some $\varphi\in\mathcal{L}(\mathcal{X})$
and $B\in\mathbb{N}$, let $\gamma_{n,i}^{N}(\varphi)$ and $V_{n,i}^{N}(\varphi)$
be i.i.d. replicates for $i\in[B]$, and define $M=N^{-1}\sum_{i\in[B]}\gamma_{n,i}^{N}(1)$.
A standard estimate of ${\rm var}\left\{ \gamma_{n}^{N}(\varphi)/\gamma_{n}(1)\right\} $
is obtained by calculating the sample variance of $\{M^{-1}\gamma_{n,i}^{N}(\varphi)\:;\:i\in[B]\}$.
Noting the lack-of-bias of $\gamma_{n}^{N}(1)^{2}V_{n}^{N}(\varphi)$,
an alternative estimate of ${\rm var}\left[\gamma_{n}^{N}(\varphi)/\gamma_{n}(1)\right]$
can be obtained as $\frac{1}{B}\sum_{i\in[B]}[M^{-1}\gamma_{n,i}^{N}(1)]^{2}V_{n,i}^{N}(\varphi)$.
Both these estimates can be seen as ratios of simple Monte Carlo estimates
of ${\rm var}\left\{ \gamma_{n}^{N}(\varphi)\right\} $ and $\gamma_{n}(1)^{2}$,
and are therefore consistent as $B\rightarrow\infty$. We show in
Figure~\ref{fig:replicates} a comparison between these estimates
for the three models discussed in Section~\ref{sec:Numerical-illustrations}
with $N_{}=10^{3}$ and $\varphi\equiv1$, and we can see that the
alternative estimate based on $\hat{V}_{n}^{N}(1)$ is empirically
more accurate for these examples.

\begin{figure}
\begin{centering}
\includegraphics[scale=0.5]{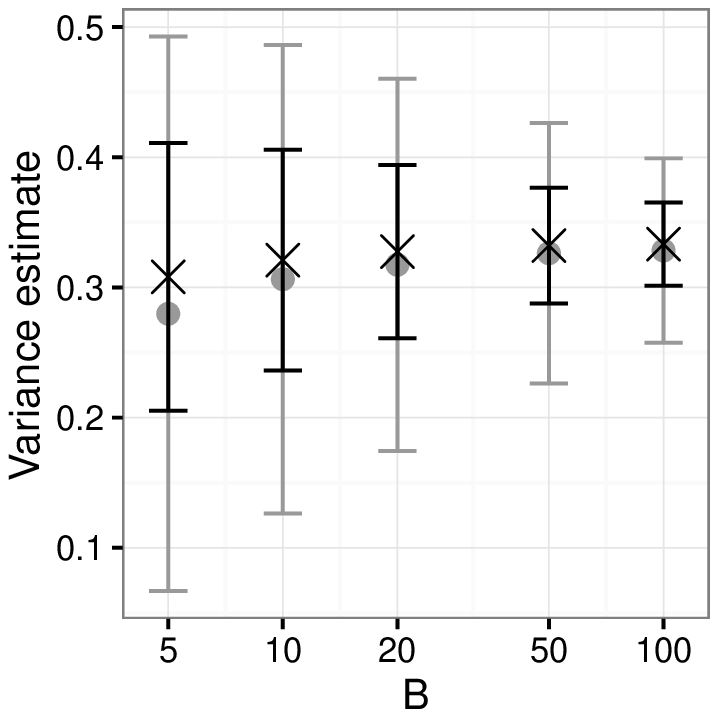}\includegraphics[scale=0.5]{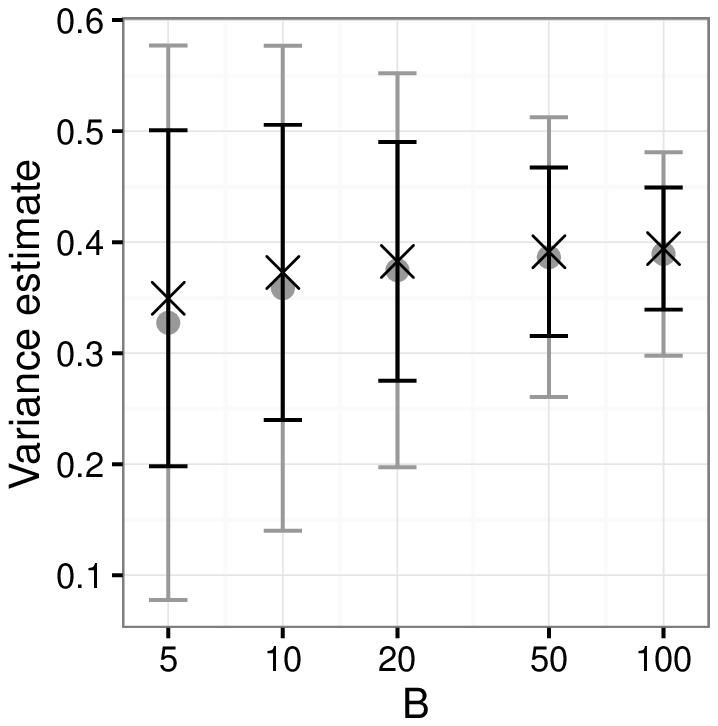}\includegraphics[scale=0.5]{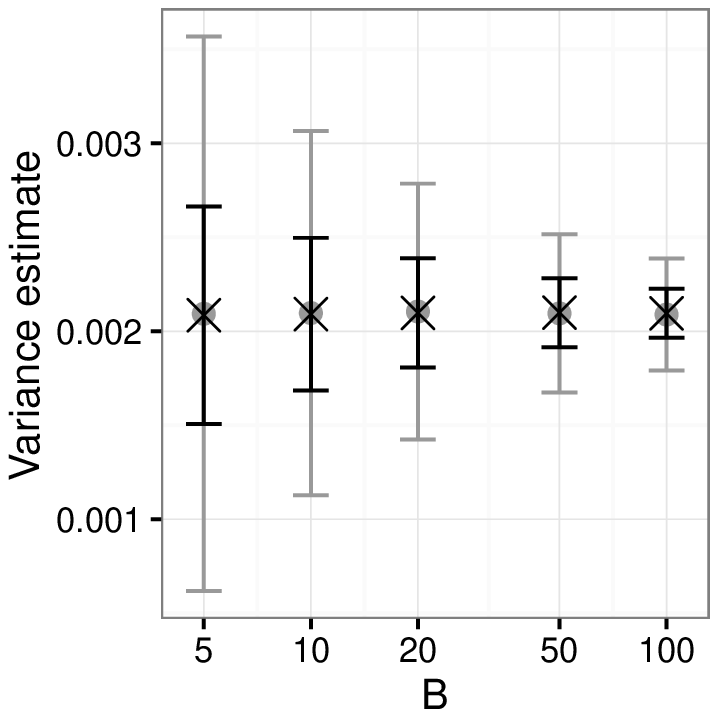}
\par\end{centering}

\caption{Plot of the standard estimate of ${\rm var}\left[\hat{\gamma}_{n}^{N}(\varphi)/\hat{\gamma}_{n}(1)\right]$
(gray dots and error bars) and the alternative estimate using $\hat{V}_{n}^{N}(1)$
(black crosses and error bars) against $B$ in (left to right) the
examples of Sections~\ref{sub:Linear-Gaussian-HMM}--\ref{sub:A-simple-SMC}.\label{fig:replicates} }
\end{figure}

\subsection{Final remarks}

The particular approximations developed here provide a natural way
to estimate the terms appearing in the non-asymptotic second moment
expression (\ref{eq:second_moment_formula}). To the best of our knowledge,
we have also provided the first generally applicable, consistent estimators
of $v_{p,n}(\varphi)$. The expression (\ref{eq:second_moment_formula})
does not apply to particle approximations with resampling schemes
other than multinomial, and one possible avenue of future research
is to investigate estimators in these other settings. Whilst we have
emphasized variances and asymptotic variances, the measures $\mu_{b}$
also appear in expressions which describe propagation of chaos properties
of the particle system. For instance, in the situation $N_{p}\equiv N$,
the asymptotic bias formula of \citet[p.7.]{del2007sharp} can be
expressed as
\[
NE\left\{ \eta_{n}^{N}(\varphi)-\eta_{n}(\varphi)\right\} \rightarrow-\sum_{p=0}^{n-1}\frac{\eta_{p}\left\{ Q_{p,n}(1)Q_{p,n}(\varphi-\eta_{n}(\varphi))\right\} }{\eta_{p}Q_{p,n}(1)^{2}}\equiv-\sum_{p=0}^{n-1}\frac{\mu_{e_{p}}\left\{ 1\otimes(\varphi-\eta_{n}(\varphi))\right\} }{\gamma_{n}(1)^{2}},
\]
which could be consistently estimated by replacing $\mu_{e_{p}}$
and $\gamma_{n}(1)$ by $\mu_{e_{p}}^{N}$ and $\gamma_{n}^{N}(1)$.
Finally, the technique used in the proof of Lemma~\ref{lem:Q2_change}
can be generalized to obtain expressions for arbitrary positive integer
moments of $\gamma_{n}^{N}(\varphi)$.

\section*{Supplementary Material}

The supplementary material at \url{http://www.warwick.ac.uk/alee/vestpf_supp.pdf}
includes algorithms for efficient computation of the variance estimators,
and proofs of Corollary 1, Lemmas 1--3, Propositions 1--2, and Theorems
2, 4 and 5.

\appendix

\section*{Appendix}

\begin{proof}[Proof of Theorem~\ref{thm:V_n^N_thm_front}]Throughout
the proof, $\to$ denotes convergence in probability. For part 1.,
the fact $\mu_{0_{n}}(\varphi^{\otimes2})=\gamma_{n}(\varphi)^{2}$
and Theorem \ref{THM:MU_ B^N} together give 
\[
E\left\{ \gamma_{n}^{N}(1)^{2}V_{n}^{N}(\varphi)\right\} =E\left\{ \gamma_{n}^{N}(\varphi)^{2}-\mu_{0_{n}}^{N}(\varphi^{\otimes2})\right\} =E\left\{ \gamma_{n}^{N}(\varphi)^{2}\right\} -\gamma_{n}(\varphi)^{2}={\rm var}\left\{ \gamma_{n}^{N}(\varphi)\right\} .
\]
For part 2., combining the identity (\ref{eq:gamma_n^N_mu_identity}),
$\mu_{b}^{N}(\varphi^{\otimes2})\to\mu_{b}(\varphi^{\otimes2})$ by
Theorem~\ref{THM:MU_ B^N}, and the fact that for any $b\in B_{n}$
other than $0_{n}$ and $e_{0},\ldots,e_{n}$, $\prod_{p=0}^{n}\left(\frac{1}{N_{p}}\right)^{b_{p}}\left(1-\frac{1}{N_{p}}\right)^{1-b_{p}}$
is in $\mathcal{O}(N^{-2})$, we obtain
\begin{equation}
\gamma_{n}^{N}(\varphi)^{2}-\mu_{0_{n}}^{N}(\varphi^{\otimes2})=\left\{ \sum_{p=0}^{n}\frac{\mu_{e_{p}}^{N}(\varphi^{\otimes2})-\mu_{0_{n}}^{N}(\varphi^{\otimes2})}{\left\lceil c_{p}N\right\rceil }\right\} +\mathcal{O}_{p}(N^{-2}).\label{eq:O_p_bound-1}
\end{equation}
Also noting that by Proposition~\ref{prop:asconv-1} $\gamma_{n}^{N}(1)^{2}\to\gamma_{n}(1)^{2}$,
from (\ref{eq:vpndefn}) that $\gamma_{n}(1)^{2}v_{p,n}(\varphi)=\mu_{e_{p}}(\varphi^{\otimes2})-\mu_{0_{n}}(\varphi^{\otimes2})$
and again using $\mu_{b}^{N}(\varphi^{\otimes2})\to\mu_{b}(\varphi^{\otimes2})$,
we then have 
\begin{equation}
NV_{n}^{N}(\varphi)=\frac{N}{\gamma_{n}^{N}(1)^{2}}\left\{ \gamma_{n}^{N}(\varphi)^{2}-\mu_{0_{n}}^{N}(\varphi^{\otimes2})\right\} \to\sum_{p=0}^{n}\frac{v_{p,n}(\varphi)}{c_{p}}=\sigma_{n}^{2}(\varphi).\label{eq:V_n_to_sig_gamma-1}
\end{equation}
For part 3., first note that by Theorem~\ref{THM:MU_ B^N} and Proposition~\ref{prop:asconv-1},
for any $b\in B_{n}$,
\begin{eqnarray*}
\mu_{b}^{N}([\varphi-\eta_{n}^{N}(\varphi)]^{\otimes2}) & = & \mu_{b}^{N}(\varphi^{\otimes2})-\eta_{n}^{N}(\varphi)[\mu_{b}^{N}(\varphi\otimes1)+\mu_{b}^{N}(1\otimes\varphi)]+\eta_{n}^{N}(\varphi)^{2}\mu_{b}^{N}(1^{\otimes2})\\
 & \to & \mu_{b}([\varphi-\eta_{n}(\varphi)]^{\otimes2}),
\end{eqnarray*}
from which it follows that (\ref{eq:O_p_bound-1}) also holds with
$\varphi$ replaced by \textbf{$\varphi-\eta_{n}^{N}(\varphi)$},
and then
\[
NV_{n}^{N}(\varphi-\eta_{n}^{N}(\varphi))\to\sum_{p=0}^{n}\frac{v_{p,n}(\varphi-\eta_{n}(\varphi))}{c_{p}}=\sigma_{n}^{2}(\varphi-\eta_{n}(\varphi)),
\]
similarly to (\ref{eq:V_n_to_sig_gamma-1}). \end{proof}

\begin{proof}[Proof of Lemma~\ref{lem:eve_identical_events}]For $i\in[N_{n}]$
define $B_{n-1}^{i}=A_{n-1}^{i}$ and $B_{p-1}^{i}=A_{p-1}^{B_{p}^{i}}$
for $p\in[n-1]$. Since in Algorithm \ref{alg:bpf}, $E_{p}^{i}=E_{p-1}^{A_{p-1}^{i}}$
for all $p\in[n],i\in[N_{p}]$ , a simple inductive argument then
shows that 
\begin{equation}
E_{n}^{i}=E_{p}^{B_{p}^{i}},\quad p\in\{0,\dots,n\},\,i\in[N_{n}].\label{eq:E_n=00003DE_p^B}
\end{equation}
We shall now prove $(K^{1},K^{2})\in\mathcal{I}(0_{n})\Rightarrow E_{n}^{K_{n}^{1}}\neq E_{n}^{K_{n}^{2}}$.
Recall from Section~\ref{sub:Genealogical-tracing-variables} that
when $(K^{1},K^{2})\in\mathcal{I}(0_{n})$, we have $A_{p-1}^{K_{p}^{1}}=K_{p-1}^{1}\neq K_{p-1}^{2}=A_{p-1}^{K_{p}^{2}}$
for all $p\in[n]$, hence $B_{0}^{K_{n}^{1}}=K_{0}^{1}\neq K_{0}^{2}=B_{0}^{K_{n}^{2}}$.
Applying (\ref{eq:E_n=00003DE_p^B}) with $p=0$ and using the fact
that in Algorithm~\ref{alg:bpf}, $E_{0}^{i}=i$ for all $i\in[N_{n}]$,
we have $E_{n}^{i}=E_{0}^{B_{0}^{i}}=B_{0}^{i}$, hence $E_{n}^{K_{n}^{1}}=B_{0}^{K_{n}^{1}}\neq B_{0}^{K_{n}^{2}}=E_{n}^{K_{n}^{2}}$
as required. It remains to prove $(K^{1},K^{2})\notin\mathcal{I}(0_{n})\Rightarrow E_{n}^{K_{n}^{1}}=E_{n}^{K_{n}^{2}}$.
Assuming $(K^{1},K^{2})\notin\mathcal{I}(0_{n})$, consider $\tau=\mathrm{max}\{p:K_{p}^{1}=K_{p}^{2}\}$.
If $\tau=n$ then clearly $E_{n}^{K_{n}^{1}}=E_{n}^{K_{n}^{2}}$,
so suppose $\tau<n$. It follows from Section \ref{sub:Genealogical-tracing-variables}
that $B_{\tau}^{K_{n}^{1}}=K_{\tau}^{1}=K_{\tau}^{2}=B_{\tau}^{K_{n}^{2}}$,
so taking $p=\tau$ and $i=K_{n}^{1},K_{n}^{2}$ in (\ref{eq:E_n=00003DE_p^B})
gives $E_{n}^{K_{n}^{1}}=E_{n}^{K_{n}^{2}}$.  \end{proof}

\begin{proof}[Proof of Theorem~\ref{thm:v_pn^N}]For part 1., Theorem
\ref{THM:MU_ B^N} gives
\[
E\left\{ \gamma_{n}^{N}(1)^{2}v_{p,n}^{N}(\varphi)\right\} =E\left\{ \mu_{e_{p}}^{N}(\varphi^{\otimes2})-\mu_{0_{n}}^{N}(\varphi^{\otimes2})\right\} =\mu_{e_{p}}(\varphi^{\otimes2})-\mu_{0_{n}}(\varphi^{\otimes2})=\gamma_{n}(1)^{2}v_{p,n}(\varphi).
\]
For the remainder of the proof, $\to$ denotes convergence in probability.
For part 2., $\mu_{e_{p}}^{N}(\varphi^{\otimes2})-\mu_{0_{n}}^{N}(\varphi^{\otimes2})\to\gamma_{n}(1)^{2}v_{p,n}(\varphi)$
by Theorem~\ref{THM:MU_ B^N}, and $\gamma_{n}^{N}(1)^{2}\to\gamma_{n}(1)^{2}$
by Proposition~\ref{prop:asconv-1}, so $v_{p,n}^{N}(\varphi)=\left[\mu_{e_{p}}^{N}(\varphi^{\otimes2})-\mu_{0_{n}}^{N}(\varphi^{\otimes2})\right]/\gamma_{n}^{N}(1)^{2}\to v_{p,n}(\varphi)$;
as in the proof of Theorem~\ref{thm:V_n^N_thm_front}, $\mu_{b}^{N}([\varphi-\eta_{n}^{N}(\varphi)]^{\otimes2})\to\mu_{b}([\varphi-\eta_{n}(\varphi)]^{\otimes2})$
gives $v_{p,n}^{N}(\varphi-\eta_{n}^{N}(\varphi))\to v_{p,n}(\varphi-\eta_{n}(\varphi))$.
Part 3. follows from parts 1. and 2. \end{proof}

\bibliographystyle{abbrvnat}
\bibliography{avarest}

\end{document}